\keywords{transducers, monoids, active learning, category theory}
\crefname{thm}{theorem}{theorems}
\Crefname{thm}{Theorem}{Theorems}
\crefname{thmC}{theorem}{theorems}
\Crefname{thmC}{Theorem}{Theorems}
\crefname{lem}{lemma}{lemmas}
\Crefname{lem}{Lemma}{Lemmas}
\crefname{cor}{corollary}{corollaries}
\crefname{cor}{corollary}{corollaries}
\crefname{prop}{proposition}{propositions}
\crefname{prop}{Proposition}{Propositions}
\crefname{defi}{definition}{definitions}
\crefname{defi}{Definition}{Definitions}
\crefname{exa}{example}{examples}
\crefname{exa}{Example}{Examples}
\def\figurecaption#1#2{\noindent\hangindent 40pt
                       \hbox to 36pt {\small\sl #1 \hfil}
                       \ignorespaces {\small #2}}
\long\def\@makecaption#1#2{
  \vskip 10pt 
  \settowidth{\@tempdima}{#2}
  \ifdim\@tempdima>0pt
       \setbox\@tempboxa\hbox{#1: #2}
     \else
       \setbox\@tempboxa\hbox{#1 #2}
   \fi
   \ifdim \wd\@tempboxa >\hsize               
       \begin{list}{#1:}{
       \settowidth{\labelwidth}{#1:}
       \setlength{\leftmargin}{\labelwidth}
       \addtolength{\leftmargin}{\labelsep}
        }\item #2 \end{list}\par   
     \else                                    
       \hbox to\hsize{\hfil\box\@tempboxa\hfil}  
   \fi}
\newcommand{\includefigure}[1]{\includegraphics{#1.pdf}}
\tikzset{kleisli/.style={ postaction={decorate, decoration={markings, mark= at position 0.5 with {
          \draw circle[radius=1.5pt]; }}
    }}}
\tikzset{e1/.style={ postaction={decorate, decoration={markings, mark= at position 1 with {
          \node[below] (tempnode) {\tiny{1}};}}
    }}}
\tikzset{e2/.style={ postaction={decorate, decoration={markings, mark= at position 1 with {
          \node[below] (tempnode) {\tiny{2}};}}
    }}}
\tikzset{m1/.style={ postaction={decorate, decoration={markings, mark= at position 0 with {
          \node[below] (tempnode) {\tiny{1}};}}
    }}}
\tikzset{m2/.style={ postaction={decorate, decoration={markings, mark= at position 0 with {
          \node[below] (tempnode) {\tiny{2}};}}
    }}}
\Crefname{ALC@unique}{Line}{Lines}
\newcounter{myalg}
\DeclareMathSymbol{:}{\mathpunct}{operators}{"3A}
\DeclareMathOperator{\lgcd}{lgcd} \DeclareMathOperator{\red}{red}
\DeclareMathOperator{\Irr}{Irr} \DeclareMathOperator{\rk}{rk}
\newcommand{\dual}[1]{{#1}^{op}} \newcommand{\inv}[1]{{#1}^{-1}}
\newcommand{\invertibles}[1]{{#1}^{\times}}
\newcommand{\LeftDivide}{\textsc{LeftDivide}}
 \DeclareMathOperator{\id}{id}
\DeclareMathOperator{\Reach}{Reach} \DeclareMathOperator{\Obs}{Obs}
\DeclareMathOperator{\Min}{Min} \DeclareMathOperator{\Total}{Total}
\DeclareMathOperator{\Prefix}{Prefix}
 \DeclareMathOperator{\Id}{Id}
\newcommand{\op}{\mathrm{op}}
\newcommand{\Set}{\mathbf{Set}} \newcommand{\Trans}{\mathbf{Trans}}
 \newcommand{\C}{\mathcal{C}}
\newcommand{\Kl}{\mathbf{Kl}} \newcommand{\Rel}{\mathbf{Rel}}
\newcommand{\Auto}[1]{\mathbf{Auto}(#1)}
\newcommand{\BiAuto}[2]{\mathbf{Auto}_{#1}(#2)}
\newcommand{\outval}[1]{\L({\triangleright} {#1} {\triangleleft})}
\newcommand{\klarrow}[1][->]{%
  \mathrel{\tikz [line width=.11ex, double distance=.33ex]
    \draw[#1, kleisli]
    (0,0) -- (0.7,0);}}
\newcommand{\klarrowin}[2][>->>]{%
  \mathrel{\tikz [baseline=-.5ex, line width=.11ex, double distance=.33ex]
    \draw[#1, kleisli, #2]
    (0,0) -- (0.7,0);}}
\newcommand{\klarrowtail}{\klarrow[>->]}
\newcommand{\kltwoheadarrow}{\klarrow[->>]}
\newcommand{\Surj}{\mathrm{Surj}} \newcommand{\Inj}{\mathrm{Inj}}
\newcommand{\Iso}{\mathrm{Iso}} \newcommand{\Eps}{\mathrm{Eps}}
 \newcommand{\Tot}{\mathrm{Tot}}
\newcommand{\Inv}{\mathrm{Inv}}
\newcommand{\word}[1]{\triangleright #1 \triangleleft}
\DeclareMathOperator{\len}{length}
\DeclareMathOperator{\colen}{oplength}
\newcommand{\factorization}[2]{\left. {#1} \middle/ {#2} \right.}
\newcommand{\A}{\mathcal{A}} \newcommand{\B}{\mathcal{B}}
\newcommand{\E}{\mathcal{E}} \newcommand{\I}{\mathcal{I}}
\renewcommand{\H}{\mathcal{H}} 
 \renewcommand{\L}{\mathcal{L}}
\newcommand{\M}{\mathcal{M}} \newcommand{\N}{\mathbb{N}}
\renewcommand{\O}{\mathcal{O}} 
\newcommand{\R}{\mathbb{R}} \newcommand{\T}{\mathcal{T}}
\newcommand{\In}{\mathtt{in}} \newcommand{\Out}{\mathtt{out}}
\newcommand{\St}{\mathtt{st}} \newcommand{\card}[1]{\left| #1 \right|}
\newcommand{\Eval}[1]{\textsc{Eval}_{#1}}
\newcommand{\Equiv}[1]{\textsc{Equiv}_{#1}}
\begin{document}

\title[Active learning of monoidal transducers]{Active learning of deterministic
  transducers\texorpdfstring{\\}{} with outputs in arbitrary monoids}

\titlecomment{An extended abstract of this work was presented at CSL
  2024~\cite{aristoteActiveLearningDeterministic2024}. An extended version also
  discussing minimization is available on
  HAL~\cite{aristoteFunctorialApproachMinimizing2023}.}

\author[Q.~Aristote]{Quentin Aristote\lmcsorcid{0009-0001-4061-7553}}
\address{Université Paris Cité, CNRS, Inria, IRIF, F-75013, Paris, France}
\email{quentin.aristote@irif.fr}

\begin{abstract}
  We study monoidal transducers, transition systems arising as deterministic
  automata whose transitions also produce outputs in an arbitrary monoid, for
  instance allowing outputs to commute or to cancel out. We use the categorical
  framework for minimization and learning of Colcombet, Petri\c{s}an and Stabile
  to recover the notion of minimal transducer recognizing a language, and give
  necessary and sufficient conditions on the output monoid for this minimal
  transducer to exist and be unique (up to isomorphism). The categorical
  framework then provides an abstract algorithm for learning it using membership
  and equivalence queries, and we discuss practical aspects of this algorithm's
  implementation.
\end{abstract}

\maketitle

\section{Introduction}
\label{sec:introduction}

\emph{Transducers} are (possibly infinite) transition systems that take input
words over an input alphabet and translate them to some output words over an
output alphabet. They are numerous ways to implement them, but here we focus on
\emph{subsequential transducers}, i.e deterministic automata whose transitions
also produce an output (see \Cref{fig:transducers} for an example). They are
used in diverse fields such as compilers \cite{fischerCraftingCompiler2010},
linguistics \cite{kaplanRegularModelsPhonological1994}, or natural language
processing \cite{knightApplicationsWeightedAutomata2009}.

Two subsequential transducers are considered equivalent when they
\emph{recognize} the same \emph{subsequential function}, that is if, given the
same input, they always produce the same output. A natural question is thus
whether there is a (unique) minimal transducer recognizing a given function (a
transducer with a minimal number of states and which produces its ouput as early
as possible), and whether this minimal transducer is computable. The answer to
both these questions is positive when there exists a finite subsequential
transducer recognizing this function: the minimal transducer can then for
example be computed through minimization
\cite{choffrutMinimizingSubsequentialTransducers2003}.

\paragraph*{Active learning of transducers.}
\label{par:vilar_algorithm}

Another method for computing a minimal transducer is to learn it through Vilar's
algorithm \cite{vilarQueryLearningSubsequential1996}, a generalization to
transducers of Angluin's L*-algorithm, which learns the minimal deterministic
automaton recognizing a language \cite{angluinLearningRegularSets1987}. Vilar's
algorithm thus relies on the existence of an oracle which may answer two types
of queries, namely:
\begin{itemize}
\item \emph{membership queries}: when queried with an input word, the oracle
  answers with the corresponding expected output word;
\item \emph{equivalence queries}: when queried with a \emph{hypothesis
    transducer}, the oracle answers whether this transducer recognizes the
  target function, and, if not, provides a counter-example input word for which
  this transducer is wrong.
\end{itemize}
The basic idea of the algorithm is to use the membership queries to infer
partial knowledge of the target function on a finite subset of input words, and,
when some \emph{closure} and \emph{consistency} conditions are fulfilled, use
this partial knowledge to build a hypothesis transducer to submit to the oracle
through an equivalence query: the oracle then either confirms this transducer is
the right one, or provides a counter-example input word on which more knowledge
of the target function should be inferred.

\begin{figure}[h]
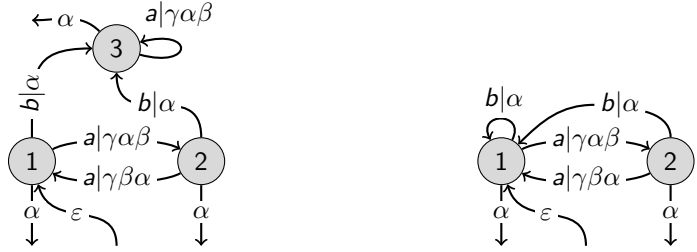

  \centering
  \begin{subfigure}{0.3\linewidth}
    \centering
    \includefigure{transducer2_learned}
    \subcaption{A monoidal transducer $\A$}
    \label{fig:learned_transducer}
  \end{subfigure}
  \hspace*{30pt}
  \begin{subfigure}{0.5\linewidth}
    \centering
    \includefigure{transducer2_hypothesis}
    \subcaption{A hypothesis transducer built when learning $\A$}
    \label{fig:hypothesis_transducer}
  \end{subfigure}
  \caption{Two transducers: unlike automata, the transitions are also labelled
    with output words.}
  \label{fig:transducers}
\end{figure}

Consider for instance the partial function recognized by the minimal transducer
$\A$ of \Cref{fig:learned_transducer} over the input alphabet $A = \{ a, b \}$
and output alphabet $\Sigma = \{ \alpha, \beta, \gamma \}$. We write this
function $\outval{-}: A^* \rightarrow \Sigma^* \sqcup \{ \bot \}$ (where
$\triangleright$ and $\triangleleft$ are symbols denoting the start and end of a
word), and let $e \in A^*$ and $\varepsilon \in \Sigma^*$ stand for the
respective empty words over these two alphabets. To learn $\A$, the algorithm
maintains a subset $Q \subset A^*$ of prefixes of input words and a subset $T
\subset A^*$ of suffixes of input words, and keeps track of the restriction of
$\outval{-}$ to words in $QT \cup QAT$. The prefixes in $Q$ will be made into
states of the hypothesis transducer, and two prefixes $q, q' \in Q$ will
correspond to two different states if there is a suffix $t \in T$ such that
$\outval{qt} \neq \outval{q't}$. Informally, closure then holds when for every
state $q \in Q$ and input letter $a \in A$ an $a$-transition to some state $q'
\in Q$ can always be built; consistency holds when there is always at most one
consistent choice for such a $q'$ and when the newly-built $a$-transition can be
equipped with an output word. The execution of the learning algorithm for the
function recognized by $\A$ would thus look like the following.

The algorithm starts with $Q = T = \{ e \}$ only consisting of the empty input
word. In a hypothesis transducer, we would want $e \in Q$ to correspond to the
initial state, and the output value produced by the initial transition to be the
longest common prefix $\Lambda(e)$ of each $\outval{et}$ for $t \in T$, here
$\Lambda(e) = \alpha$. But the longest common prefix $\Lambda(a)$ of each
$\outval{at}$ for $t \in T$ is $\gamma \alpha \beta \alpha$, of
which $\Lambda(e)$ is not a prefix: it is not possible to make the output of the
first $a$-transition so that following the initial transition and then the
$a$-transition produces a prefix of $\Lambda(a)$! This is a first kind of
consistency issue, which we solve by adding $a$ to $T$, turning $\Lambda(e)$
into the empty output word $\varepsilon$ and $\Lambda(a)$ into $\gamma \alpha
\beta$.

Now $Q = \{ e \}$ and $T = \{ e, a \}$. The initial transition should go into
the state corresponding to $e$ and output $\Lambda(e) = \varepsilon$, the final
transition from this state should output $\Lambda(e)^{-1}\outval{e} = \alpha$,
the $a$-transition from this state should output $\Lambda(e)^{-1}\Lambda(a) =
\gamma \alpha \beta$, and this $a$-transition followed by a final transition
should output $\Lambda(e)^{-1}\outval{a} = \gamma \alpha \beta \alpha$. This
$a$-transition should moreover lead to a state from which another $a$-transition
followed by a final transition outputs $\Lambda(a)^{-1}\outval{aa} = \gamma
\beta \alpha^2$: in particular, it cannot lead back to the state corresponding
to $e$, because $\gamma \beta \alpha^2 \neq \gamma \alpha \beta \alpha$. But
this state is the only state accounted for by $Q$, so now we have no candidate
for its successor when following the $a$-transition! This is a closure issue,
which we solve by adding $a$ to $Q$, the corresponding new state then being the
candidate successor we were looking for.

Once $Q = T = \{ e, a \}$, there are no closure nor consistency issues and we
may thus build the hypothesis transducer given by
\Cref{fig:hypothesis_transducer}: it coincides with $\A$ on $QA \cup QAT$.
Submitting it to the oracle we learn that this transducer is not the one we are
looking for, and we get as counter-example the input word $bb$, which indeed
satisfies $\L(\triangleright bb \triangleleft) = \bot$ and yet for which our
hypothesis transducer produced the output word $\alpha^3$: we thus add $bb$ and
its prefixes to $Q$.

With $Q = \{ e, a, b, bb \}$ and $T = \{ e, a \}$ there is another kind of
consistency issue, because the states corresponding to $e$ and $b$ are not
distinguished by $T$ ($\Lambda(e)^{-1}\outval{et} = \Lambda(b)^{-1}\outval{bt}$
for all $t \in T$) and should thus be merged in the hypothesis transducer, yet
this is not the case of their candidate successors when following an additional
$b$-transition ($\outval{ebe} = \alpha$ yet $\outval{bbe} = \bot$ is undefined)!
This issue is solved by adding $b$ to $T$, after which there are again no
closure nor consistency issues and we may thus build $\A$ as our new hypothesis
transducer. The algorithm finally stops as the oracle confirms that we found the
right transducer.

\paragraph*{Transducers with outputs in arbitrary monoids.}

In the example above we assumed the output of the transducer consisted of words
over the output alphabet $\Sigma = \{ \alpha, \beta, \gamma \}$, that is of
elements of the free monoid $\Sigma^*$. But in some contexts it may be relevant
to assume that certain output words can be swapped or can cancel each other out.
In other words, transducers may be considered to be \emph{monoidal} and have
output not in a free monoid, but in a quotient of a free monoid. An example of a
non-trivial family of monoids that should be interesting to use as the output of
a transducer is the family of trace monoids, that are used in concurrency theory
to model sequences of executions where some jobs are independent of one another
and may thus be run asynchronously: transducers with outputs in trace monoids
could be used to programatically schedule jobs. Algebraically, trace monoids are
just free monoids where some pairs of letters are allowed to commute. For
instance, the transducers of \Cref{fig:transducers} could be considered under
the assumption that $\alpha\beta = \beta\alpha$, in which case the states $1$
and $2$ would have the same behavior.

This raises the question of the existence and computability of a minimal
monoidal transducer recognizing a function with output in an arbitrary monoid.
In \cite{gerdjikovGeneralClassMonoids2018}, Gerdjikov gave some conditions on
the output monoid for minimal monoidal transducers to exist and be unique up to
isomorphism, along with a minimization algorithm that generalizes the one for
(non-monoidal) transducers. This question had also been addressed in
\cite{eisnerSimplerMoreGeneral2003}, although in a less satisfying way as the
minimization algorithm relied on the existence of stronger oracles. Yet, to the
best of the author's knowledge, no work has addressed the problem of learning
minimal monoidal transducers through membership and equivalence queries.

As all monoids are quotients of free monoids, a first solution would of course
be to consider the target function to have output in a free monoid, learn the
minimal (non-monoidal) transducer recognizing this function using Vilar's
algorithm, and only then consider the resulting transducer to have output in a
non-free monoid and minimize it using Gerdjikov's minimization algorithm. But
this solution is unsatisfactory as, during the learning phase, it may introduce
states that will be optimized away during the minimization phase. For instance,
learning the function recognized by the transducer $\A$ of
\Cref{fig:learned_transducer} with the assumption that $\alpha\beta =
\beta\alpha$ would first produce $\A$ itself before having its states $1$ and
$2$ merged during the minimization phase. Worse still, it is possible to find a
partial function with output in a finitely presented monoid
and recognized by a finite monoidal transducer, but for which Vilar's algorithm
may not terminate if the oracle does not choose its respones carefully:
\begin{lem}
  \label{lemma:vilar_infinite_run}
  Let $A = \{ a \}$, $\Sigma = \{ \alpha, \beta, \gamma \}$, let
  $\Sigma^*/{\sim}$ be the monoid given by the presentation $\langle
  \Sigma \mid \alpha \beta = \beta \alpha \rangle$ and let $\pi: \Sigma^*
  \rightarrow \Sigma^*/{\sim}$ be the corresponding quotient. Consider the
  function $f: A^* \rightarrow \Sigma^*/{\sim}$ that maps $a^n$ to $\alpha^n
  \beta^n \gamma = (\alpha \beta)^n \gamma$.

  $f$ is recognized by a finite transducer with outputs in $\Sigma^*/{\sim}$,
  yet learning a transducer that recognizes any function $f': A^* \rightarrow
  \Sigma^*$ such that $f = f' \circ \pi$ with Vilar's algorithm will never
  terminate if the oracle replies to the membership query for $a^n$ with
  $\alpha^n \beta^n \gamma \in \Sigma^*$ (which differs from $(\alpha \beta)^n
  \gamma$ in $\Sigma^*$ but not in $\Sigma^*/{\sim}$).
\end{lem}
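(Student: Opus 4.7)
The plan is to address the two claims of the lemma independently. For the first, I exhibit a one-state transducer over $\Sigma^*/{\sim}$ whose unique state is both initial (with empty initial output) and final (with final output $\gamma$), and which carries an $a$-loop producing the output $\pi(\alpha\beta) = \pi(\beta\alpha)$. Reading $a^n$ through this transducer then yields $\pi(\alpha\beta)^n \cdot \gamma = (\alpha\beta)^n\gamma$ in $\Sigma^*/{\sim}$, which is $f(a^n)$.

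For the second claim, I would reduce non-termination of Vilar's algorithm to the non-existence of any finite subsequential transducer over $\Sigma^*$ recognising the function $f' : a^n \mapsto \alpha^n\beta^n\gamma$: should the algorithm terminate on the prescribed oracle, by correctness it would output precisely such a finite transducer. Suppose then that $\B$ is such a transducer. For each $n$, let $q_n$ denote the state of $\B$ reached after reading $a^n$ and $U_n \in \Sigma^*$ the output accumulated along the way (initial output concatenated with the outputs of the $n$ transitions). By the pigeonhole principle, there exist $n < m$ with $q_n = q_m$, and then for every $k \geq 0$ one has $f'(a^{n+k}) = U_n \cdot H(k)$ and $f'(a^{m+k}) = U_m \cdot H(k)$ for a common tail $H(k)$ that depends only on $k$. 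In particular $U_n$ must be a prefix in $\Sigma^*$ of $\alpha^{n+k}\beta^{n+k}\gamma$ for every $k \geq 0$; the longest common prefix of this family is $\alpha^n$, so $U_n = \alpha^j$ for some $0 \leq j \leq n$, and symmetrically $U_m = \alpha^{j'}$ for some $0 \leq j' \leq m$. Evaluating $H(0) = U_n^{-1}f'(a^n) = U_m^{-1}f'(a^m)$ then yields $\alpha^{n-j}\beta^n\gamma = \alpha^{m-j'}\beta^m\gamma$ in $\Sigma^*$, which by unique factorisation in the free monoid forces $n = m$, a contradiction.

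The step I expect to require the most care is the prefix computation in $\Sigma^*$: it is precisely the place where the argument uses freeness and would fail in the quotient, since $\pi(\alpha^n)$ does not serve as a longest common prefix of the family $\{\pi(\alpha^{n+k}\beta^{n+k}\gamma)\}_k$ there, which is of course why the first claim does not contradict the second. Otherwise I deliberately sidestep the iteration-by-iteration dynamics of Vilar's algorithm by invoking its correctness, so that the only substantive property that must be established in detail is the nonexistence of a finite transducer over $\Sigma^*$ realising $f'$.
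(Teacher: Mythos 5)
Your one-state $(\Sigma^*/{\sim})$-transducer for the first claim is exactly the paper's. For non-termination you take a genuinely different route. The paper traces the run of the algorithm explicitly: it exhibits an infinite family of configurations $C_n$ (with $Q = \{a^k \mid k \le n\}$, $T = \{e,a\}$, $\Lambda(a^k) = \alpha^k$, $R(a^k,e) = \beta^k\gamma$, $R(a^k,a) = \alpha\beta^{k+1}\gamma$) and shows that each $C_n$ has a closure issue forcing $a^{n+1}$ into $Q$ and leading to $C_{n+1}$, so the algorithm never even reaches an equivalence query. You instead prove that no finite subsequential transducer over $\Sigma^*$ recognizes $f'\colon a^n \mapsto \alpha^n\beta^n\gamma$ and invoke correctness of Vilar's algorithm; your pigeonhole-plus-prefix argument is sound ($U_n$ must be a common prefix of $\{\alpha^{n+k}\beta^{n+k}\gamma\}_{k \ge 0}$, hence a power of $\alpha$, and comparing the shared termination output of $q_n = q_m$ forces $n = m$ by counting $\beta$'s in the free monoid). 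Both arguments establish the lemma, but note what the paper's finer conclusion buys. The target of the learning process is only specified up to a choice of lift of $f$, so one could worry that the equivalence oracle accepts a hypothesis recognizing a \emph{different} lift $f''$ of $f$ --- and such lifts exist and can be finitely recognizable (e.g.\ one agreeing with $f'$ on the finitely many queried inputs and with $(\alpha\beta)^n\gamma$ elsewhere). Your reduction therefore silently requires the equivalence oracle to be sound with respect to the specific lift $f'$, whereas the paper's simulation is immune to any choice of equivalence oracle because that oracle is never consulted. Under the standard reading (both oracles answer for $f'$) your proof is complete and more conceptual; the paper's is longer but localizes the divergence precisely in the closure-checking loop and yields the strictly stronger statement that no equivalence query is ever made.
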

\begin{proof}
  $f$ is recognized by the $(\Sigma^*/{\sim})$-transducer
  (\Cref{def:monoidal_transducer}) with one state $s$ that is initial, initial
  value $\upsilon_0 = \varepsilon$, transition function $a \odot s = (\alpha
  \beta, s)$ and termination function $t(s) = \gamma$.

  Consider now the run of Vilar's algorithm
  (\Cref{alg:learning_monoidal_transducers} with $M = \Sigma^*$ and
  $\invertibles{M} = \{ e \}$) with the oracle answering the membership query
  for $a^n$ with $f'(a^n) = \alpha^n \beta^n \gamma$. We start with $Q = T =
  \{ e \}$, $\Lambda(e) = \gamma$, $\Lambda(a) = \alpha \beta \gamma$ and
  $R(e,e) = R(a,e) = \varepsilon$, where $\Lambda(w)$ for $w \in Q \cup QA$ is
  the longest common prefix of the $\{ f'(wt) \mid t \in T \}$ and $R(w,t)$ is
  the suffix such that $f'(wt) = \Lambda(w)R(w,t)$.

  Since $\Lambda(e) = \gamma$ is not a prefix of $\Lambda(a) = \alpha \beta
  \gamma$, there is a consistency issue and we add $a$ to $T$. Taking $n = 0$,
  we are now in the configuration $C_n$ given by $Q = Q_n = \{ a^k \mid k \le
  n \}$, $T = \{ e, a \}$, and $\Lambda(a^k) = \alpha^k$, $R(a^k, e) = \beta^k
  \gamma$ and $R(a^k,a) = \alpha \beta^{k+1} \gamma$ for every $k \le n + 1$
  (since the oracle replies to the membership query for $a^k$ with $\alpha^k
  \beta^k \gamma$ and to that for $a^ka$ with
  $\alpha^{k+1}\beta^{k+1}\gamma$).

  Suppose now we are in the configuration $C_n$ for some $n \in \N$. Then
  there is a closure issue, since for all $k \le n$, $R(a^k,e) = \beta^k\gamma
  \neq \beta^{n+1} \gamma = R(a^{n+1},e)$. We thus add $a^{n+1}$ to $Q$. To
  compute $\Lambda(a^{n+2})$, $R(a^{n+2},e)$ and $R(a^{n+2},a)$, we make a
  membership query for $a^{n+3}$: the oracle answers with $f'(a^{n+3}) =
  \alpha^{n+3} \beta^{n+3} \gamma$. The longest common prefix of $f'(a^{n+2})
  = \Lambda(a^{n+1})R(a^{n+1},a) = \alpha^{n+2}\beta^{n+2}\gamma$ and
  $f'(a^{n+2}a) = \alpha^{n+3} \beta^{n+3} \gamma$ is thus $\alpha^{n+2}$, and
  the corresponding suffixes are $R(a^{n+2},e) = \beta^{n+2} \gamma$ and
  $R(a^{n+2},a) = \alpha \beta^{n+3} \gamma$: we are now in the configuration
  $C_{n+1}$.

  Hence the run of the algorithm never terminates and never even reaches an
  equivalence query, as it must first go through all the configurations $C_n$
  for $n \in \N$.
\end{proof}
The idea of finding such an example was suggested by an anonymous reviewer whom
the author thanks.

A more satisfactory solution would hence do away with the minimization phase and
instead use the assumptions on the output monoid during the learning phase to
directly produce the minimal monoidal transducer.

\paragraph*{Structure and contributions.}

In this work we thus study the problem of generalizing Vilar's algorithm to
monoidal transducers. To this aim, we first recall in
\Cref{sec:categorical_framework} the categorical framework of Colcombet,
Petri\c{s}an and Stabile for learning minimal transition systems
\cite{colcombetLearningAutomataTransducers2020}. This framework encompasses both
Angluin's and Vilar's algorithms, as well as a similar algorithm for weighted
automata
\cite{bergadanoLearningBehaviorsAutomata1994,bergadanoLearningBehaviorsAutomata1996}.
We use this specific framework because, while others exist, they either do not
encompass transducers or require stronger assumptions
\cite{barloccoCoalgebraLearningDuality2019,urbatAutomataLearningAlgebraic2020,vanheerdtLearningAutomataSideEffects2020}.
In \Cref{sec:category_monoidal_transducers} we then instantiate this framework
to retrieve monoidal transducers as transition systems whose state-spaces live
in a certain category (\Cref{sec:category_monoidal_transducers:as_functors}).
Studying the existence of specific structures in this category --- namely,
powers of the terminal object
(\Cref{sec:category_monoidal_transducers:initial_final_objects}) and
factorization systems
(\Cref{sec:category_monoidal_transducers:factorization_systems}) --- we then
give conditions for the framework to apply and hence for the minimal monoidal
transducers to exist and be computable.

This paper's contributions are thus the following:
\begin{itemize}
\item necessary and sufficient conditions on the output monoid for the
  categorical framework of Colcombet, Petri\c{s}an and Stabile to apply to
  monoidal transducers are given;
\item these conditions mostly overlap those of Gerdjikov, but are nonetheless
  not equivalent: in particular, they extend the class of output monoids for
  which minimization is known to be possible, although with a possibly worse
  complexity bound;
\item practical details on the implementation of the abstract monoidal
  transducer-learning algorithm that results from the categorical framework are
  given;
\item in particular, additional structure on the category in which the framework
  is provides a neat categorical explanation to both the different
  kinds of consistency issues that arise in the learning algorithm and the main
  steps that are taken in every transducer minimization algorithm.
\end{itemize}

\section{Categorical approach to learning minimal automata}
\label{sec:categorical_framework}

In this section we recall (and extend) the definitions and results of Colcombet,
Petri\c{s}an and Stabile \cite{colcombetAutomataMinimizationFunctorial2020,
  colcombetLearningAutomataTransducers2021}. We assume basic knowledge of
category theory \cite{maclaneCategoriesWorkingMathematician1978}, but we also
focus on the example of deterministic complete automata and on the
counter-example of non-deterministic automata. We do not explain it here but the
framework also applies to weighted automata
\cite{bergadanoLearningBehaviorsAutomata1994,bergadanoLearningBehaviorsAutomata1996,schutzenbergerDefinitionFamilyAutomata1961}
and (non-monoidal) transducers (as generalized in
\Cref{sec:category_monoidal_transducers}).

\subsection{Automata and languages as functors}
\label{sec:categorical_framework:automata_languages}

Consider the graph $\smash{\In \xrightarrow[]{\triangleright}
  \overset{{\text{\normalsize$\overset{a}\curvearrowleft$}}}{\St}
  \xrightarrow[]{\triangleleft} \Out}$ where $a$ ranges in the \emph{input
  alphabet} $A$, and let $\I$, the \emph{input category}, be the category that
it generates: the objects are the vertices of this graph and the morphisms paths
between two vertices. $\I$ represents the basic structure of automata as
transition systems: $\St$ represents the state-space, $\triangleright$ the
initial configuration, each $a: \St \rightarrow \St$ the transition along the
corresponding letter, and $\triangleleft$ the output values associated to each
state. An automaton is then an instantiation of $\I$ in some output category:

\begin{defi}[$(\C,X,Y)$-automaton]
  \label{def:C-automaton}
  Given an output category $\C$, a \emph{$(\C, X, Y)$-automaton} is a functor
  $\A: \I \rightarrow \C$ such that $\A(\In) = X$ and $\A(\Out) = Y$.
\end{defi}

\begin{exa}[(non-)deterministic automata]
  If $1 = \{ * \}$ and $2 = \{ \bot, \top \}$, a $(\Set, 1, 2)$-automaton $\A$
  is a (possibly infinite) deterministic complete automaton: it is given by a
  state-set $S = \A(\St)$, transition functions $\A(a): S \rightarrow S$ for
  each $a \in A$, an initial state $s_0 = \A(\triangleright)(*) \in S$ and a set
  of accepting states $F = \{ s \in S \mid \A(\triangleleft)(s) = \top \}
  \subseteq S$. Similarly, a $(\Rel, 1, 1)$-automaton (where $\Rel$ is the
  category of sets and relations between them) is a (possibly infinite)
  non-deterministic automaton: it is given by a state-set $S = \A(\St)$,
  transition relations $\A(a) \subseteq S \times S$, a set of initial states
  $\A(\triangleright) \subseteq 1 \times S \cong S$ and a set of accepting
  states $\A(\triangleleft) \subseteq S \times 1 \cong S$.


\end{exa}

Let $\O$, the \emph{category of observable inputs}, be the full subcategory of
$\I$ on $\In$ and $\Out$.

\begin{defi}[$(\C,X,Y)$-language]
  \label{def:C-language}
  In the same way, we define a \emph{$(\C,X,Y)$-language} to be a functor $\L:
  \O \rightarrow \C$ such that $\L(\In) = X$ and $\L(\Out) = Y$. In particular,
  composing an automaton $\A: \I \rightarrow \C$ with the embedding $\iota: \O
  \hookrightarrow \I$, we get the language $\L_{\A} = \A \circ \iota$
  \emph{recognized} by $\A$.
\end{defi}

In other words, a language is the data of two objects $X = \L(\In)$ and $Y =
\L(\Out)$ in $\C$ and, for each word $w \in A^*$, of a morphism $\L(\word{w}): X
\rightarrow Y$.

\begin{exa}[languages] The language recognized by a
  $(\Set, 1, 2)$-automaton $\A$ is the language recognized by the corresponding
  complete deterministic automaton: for a given $w \in A^*$,
  $\L_{\A}(\word{w})(*) = \top$ if and only if $w$ belongs to the language, and
  the equality $\L_{\A}(\word{w}) = \A(\word{w}) =
  \A(\triangleright)\A(w)\A(\triangleleft)$ means that we can decide whether $w$
  is in the language by checking whether the state we get in by following $w$
  from the initial state is accepting. Such a language could also be seen as a
  set of relations $\L(\word{w}) \subseteq 1 \times 1$, where $*$ is related to
  itself if and only if $w$ belongs to $\L$. The language recognized by a
  $(\Rel, 1, 1)$-automaton is thus the language recognized by the corresponding
  non-deterministic automaton.

\end{exa}

\begin{defi}[category of automata recognizing a language]
  \label{def:cat_auto_l}
  Given a category $\C$ and a language $\L: \O \rightarrow \C$, we define
  the category $\Auto{\L}$ whose objects are $(\C, \L(\In), \L(\Out))$-automata
  $\A$ recognizing $\L$, and whose morphisms $\A \rightarrow \A'$ are natural
  transformations whose components on $\L(\In)$ and $\L(\Out)$ are the identity.
  In other words, a morphism of automata is given by a morphism $f: \A(\St)
  \rightarrow \A'(\St)$ in $\C$ such that $\A'(\triangleright) = f \circ
  \A(\triangleright)$ (it preserves the initial configuration), $\A'(a) \circ f
  = f \circ \A(a)$ (it commutes with the transitions), and $\A'(\triangleleft)
  \circ f = \A(\triangleleft)$ (it preserves the output values).
\end{defi}

\subsection{Factorization systems and the minimal automaton recognizing a
  language}
\label{sec:categorical_framework:minimal_automaton}

\begin{defi}[factorization system]
  \label{def:factorization_system}
  In a category $\C$, a \emph{factorization system} $(\E, \M)$ is the data of a
  class of \emph{$\E$-morphisms} (represented with $\twoheadrightarrow$) and a
  class of \emph{$\M$-morphisms} (represented with $\rightarrowtail$) $\M$ such
  that

  \begin{itemize}
  \item every arrow $f$ in $\C$ may be \emph{factored} as $f = m \circ e$ with
    $m \in \M$ and $e \in \E$;
  \item $\E$ and $\M$ are both stable under composition;
  \item for every commuting diagram as below where $m \in \M$ and $e \in
    \E$ there is a unique diagonal fill-in $d: Y_1 \rightarrow Y_2$ such that
    $u = d \circ e$ and $v = m \circ d$.
  \end{itemize}

  \begin{center}
    \begin{tikzcd}
      X \arrow[r, "e", two heads] \arrow[d, "u"'] & Y_1 \arrow[d, "v"]
      \arrow[dl, dashed, "d"] \\
      Y_2 \arrow[r, "m"', tail] & Z
    \end{tikzcd}
  \end{center}
\end{defi}

\begin{exa}
  \label{example:factorization_systems}
  In $\mathbf{Set}$ surjective and injective functions form a factorization
  system $(\Surj,\Inj)$ such that a map $f: X \rightarrow Y$ factors through
  its image $f(X) \subseteq Y$. In $\Rel$ a factorization system is given by
  $\E$-morphisms those relations $r: X \rightarrow Y$ such that every $y \in Y$
  is related to some $x \in X$ by $r$, and $\M$-morphisms the graphs of
  injective functions (i.e. $m \in \M$ if and only if there is an injective
  function $f$ such that $(x,y) \in m \iff y = f(x)$). A relation $r: X
  \rightarrow Y$ then factors through the subset $\{ y \in Y \mid \exists x \in
  X, (x,y) \in R \} \subseteq Y$.

\end{exa}

\begin{lem}[factorization system on $\Auto{\L}$
  {\cite[Lemma~2.8]{colcombetAutomataMinimizationFunctorial2020}}]
  \label{lemma:factorization_system_auto}
  Given $\L: \O \rightarrow \C$ and $(\E,\M)$ a factorization system on $\C$,
  $\Auto{\L}$ has a factorization given by those natural transformations whose
  components are respectively $\E$- and $\M$-morphisms.
\end{lem}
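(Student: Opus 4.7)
My plan is to lift the factorization system from $\C$ pointwise to $\Auto{\L}$, using the diagonal fill-in property of $(\E,\M)$ to transport the automaton structure along the factorization.

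First, I define the two classes: a morphism $f: \A \to \A'$ of $\Auto{\L}$ belongs to $\E_{\Auto{\L}}$ (resp.\ $\M_{\Auto{\L}}$) precisely when its component $f_{\St}: \A(\St) \to \A'(\St)$ lies in $\E$ (resp.\ $\M$) in $\C$. The components on $\In$ and $\Out$ are identities by \Cref{def:cat_auto_l}, and identities always lie in both classes of a factorization system, so this pointwise definition is well-posed. Stability under composition in $\Auto{\L}$ then follows immediately from the corresponding stability in $\C$.

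Next I construct the factorization. Given $f: \A \to \A'$ in $\Auto{\L}$, I factor its state-component as $f_{\St} = m \circ e$ in $\C$, with $e: \A(\St) \twoheadrightarrow Z$ in $\E$ and $m: Z \rightarrowtail \A'(\St)$ in $\M$. I now equip $Z$ with the structure of a $(\C, \L(\In), \L(\Out))$-automaton $\A''$ by setting $\A''(\triangleright) := e \circ \A(\triangleright)$ and $\A''(\triangleleft) := \A'(\triangleleft) \circ m$ directly, and defining each transition $\A''(a): Z \to Z$ as the unique diagonal fill-in of the square
\begin{center}
\begin{tikzcd}
  \A(\St) \arrow[r, "e", two heads] \arrow[d, "e \circ \A(a)"'] & Z \arrow[d, "\A'(a) \circ m"] \arrow[dl, dashed] \\
  Z \arrow[r, "m"', tail] & \A'(\St)
\end{tikzcd}
\end{center}
whose commutativity follows from $f$ being a morphism of automata (so $\A'(a) \circ f_{\St} = f_{\St} \circ \A(a)$). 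Since $\I$ is freely generated by its graph, these choices extend uniquely to a functor $\A'': \I \to \C$. The relations $\A''(a) \circ e = e \circ \A(a)$ and $m \circ \A''(a) = \A'(a) \circ m$ (from the fill-in), combined with the definitions of $\A''(\triangleright)$ and $\A''(\triangleleft)$, exactly say that $e$ and $m$ are natural transformations with identity components on $\In$ and $\Out$; a short computation $\A''(\word{w}) = \A'(\triangleleft) \circ m \circ e \circ \A(w) \circ \A(\triangleright) = \A(\word{w}) = \L(\word{w})$ then shows $\A''$ recognizes $\L$, so both $e$ and $m$ are morphisms of $\Auto{\L}$ of the required classes.

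Finally, for the diagonal fill-in in $\Auto{\L}$, given a square in $\Auto{\L}$ whose top is in $\E_{\Auto{\L}}$ and bottom is in $\M_{\Auto{\L}}$, I take the unique diagonal $d_{\St}$ provided by $(\E,\M)$ in $\C$; I then have to check $d_{\St}$ commutes with the initial, transition and output maps. The initial and output parts follow directly from the factorization equations and uniqueness of $d_{\St}$, while commutation with each $\A'(a)$ is again obtained by applying uniqueness of the diagonal fill-in in $\C$ to an appropriate square, recognizing that both $d_{\St} \circ \A'(a)$ and $\A''(a) \circ d_{\St}$ are diagonal fillers for the same square. I expect this last uniqueness argument — turning the automaton-structure compatibilities of $d_{\St}$ into additional instances of the diagonal fill-in property — to be the only genuinely subtle point; the rest is bookkeeping made unavoidable by the fact that natural transformations in $\Auto{\L}$ are determined by a single non-trivial component.
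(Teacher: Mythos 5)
Your proof is correct and is essentially the argument behind the cited Lemma~2.8 of Colcombet--Petri\c{s}an: lift the classes pointwise on the $\St$-component, transport the transition structure onto the middle object via the diagonal fill-in, and obtain the lifted diagonal's compatibility with transitions from uniqueness of fill-ins in $\C$. All the verifications you sketch (that $e$ and $m$ preserve initial/output data, that $\A''$ recognizes $\L$, and that the two candidate composites $d_{\St}\circ\A'(a)$ and $\A''(a)\circ d_{\St}$ fill the same square) go through exactly as you describe.
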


Because of this last result, we use $(\E,\M)$ to refer both to a factorization
on $\C$ and to its extensions to categories of automata.

\begin{defi}[minimal object]
  \label{def:minimal_object}
  When a category $\C$, equipped with a factorization system $(\E, \M)$, has
  both an initial object $I$ and a final object $F$ (for every object $X$ there
  is exactly one morphism $I \rightarrow X$ and one morphism $X \rightarrow F$),
  we define its \emph{$(\E,\M)$-minimal object} $\Min{}$ to be the one that
  $(\E, \M)$-factors the unique arrow $I \rightarrow F$ as $I \twoheadrightarrow
  \Min{} \rightarrowtail F$. For every object $X$ we also define $\Reach X$ and
  $\Obs X$ by the $(\E, \M)$-factorizations $I \twoheadrightarrow \Reach X
  \rightarrowtail X$ and $X \twoheadrightarrow \Obs X \rightarrowtail F$.
\end{defi}

\begin{prop}[uniqueness of the minimal object {\cite[Lemma~2.3]{colcombetAutomataMinimizationFunctorial2020}}]
  \label{proposition:minimal_object_infimum_divisibility}
  The minimal object of a category $\C$ is unique up to isomorphism, so that for
  every other object $X$, $\Min \cong \Obs(\Reach X) \cong \Reach(\Obs X)$:
  there are in particular spans $X \leftarrowtail \Reach X \twoheadrightarrow
  \Min$ and co-spans $\Min \rightarrowtail \Obs X \twoheadleftarrow X$. It is in
  that last sense that $\Min$ is $(\E,\M)$-smaller than every other object $X$,
  and is thus minimal.
\end{prop}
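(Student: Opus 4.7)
The plan is to exhibit a single object $N$ (constructed by factoring a natural composite) that is canonically isomorphic to each of $\Min$, $\Obs(\Reach X)$ and $\Reach(\Obs X)$. The workhorse is the standard consequence of the diagonal fill-in that any two $(\E,\M)$-factorizations of the same arrow are canonically isomorphic: given $f = m_1 \circ e_1 = m_2 \circ e_2$, the diagonal fill-in applied to the commuting square with $e_1$ on top and $m_2$ on the right produces an iso between the two middles, which is moreover unique. I would first spell this out (or simply invoke it as standard) since essentially the rest of the proof is three repeated applications.

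Now fix an object $X$ and consider the composite $\Reach X \rightarrowtail X \twoheadrightarrow \Obs X$ obtained from the two defining factorizations of $I \to X$ and $X \to F$. Factor it as $\Reach X \twoheadrightarrow N \rightarrowtail \Obs X$. The idea is that this gives us a four-step chain
\[
  I \twoheadrightarrow \Reach X \twoheadrightarrow N \rightarrowtail \Obs X \rightarrowtail F
\]
which, thanks to the closure of $\E$ and $\M$ under composition, can be collapsed into several distinct $(\E,\M)$-factorizations of well-chosen morphisms.

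For $N \cong \Min$: collapsing the two leftmost arrows to an $\E$-arrow $I \twoheadrightarrow N$ and the two rightmost arrows to an $\M$-arrow $N \rightarrowtail F$ exhibits an $(\E,\M)$-factorization of the unique morphism $I \to F$, hence $N \cong \Min$ by uniqueness. For $N \cong \Obs(\Reach X)$: collapsing only the two rightmost arrows gives an $(\E,\M)$-factorization $\Reach X \twoheadrightarrow N \rightarrowtail F$ of the composite $\Reach X \to F$, so $N$ coincides with $\Obs(\Reach X)$. For $N \cong \Reach(\Obs X)$: collapsing only the two leftmost arrows gives an $(\E,\M)$-factorization $I \twoheadrightarrow N \rightarrowtail \Obs X$ of the composite $I \to \Obs X$, so $N$ coincides with $\Reach(\Obs X)$.

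This immediately yields the spans $X \leftarrowtail \Reach X \twoheadrightarrow N \cong \Min$ and co-spans $\Min \cong N \rightarrowtail \Obs X \twoheadleftarrow X$ claimed in the statement, and applying the construction to $X = \Min$ itself (where $\Reach \Min = \Obs \Min = \Min$) recovers uniqueness of the minimal object up to isomorphism. There is no real obstacle here beyond careful bookkeeping of which composites are in $\E$ and which are in $\M$; the only subtle point worth emphasizing is that all of these isomorphisms are canonical and compatible, which is what ultimately justifies the slogan that $\Min$ is $(\E,\M)$-smaller than any other object.
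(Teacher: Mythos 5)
Your proposal is correct and is essentially the standard argument used in the cited Lemma~2.3 of Colcombet--Petri\c{s}an (the paper itself only cites the result): factor $\Reach X \rightarrowtail X \twoheadrightarrow \Obs X$, use closure of $\E$ and $\M$ under composition together with initiality of $I$ and finality of $F$ to read off three $(\E,\M)$-factorizations from the resulting chain, and conclude by uniqueness of factorizations via the diagonal fill-in. The only cosmetic remark is that uniqueness of $\Min$ up to isomorphism already follows directly from uniqueness of $(\E,\M)$-factorizations of the single arrow $I \to F$, so the final detour through $X = \Min$ is unnecessary (though harmless).
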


\begin{exa}[initial, final and minimal automata
  {\cite[Example~3.1]{colcombetAutomataMinimizationFunctorial2020}}]
  \label{example:minimal_automata}
  Since $\mathbf{Set}$ is complete and cocomplete, the category of
  $(\mathbf{Set}, 1, 2)$-automata recognizing a language $\L: \I \rightarrow
  \mathbf{Set}$ has an initial, a final and a minimal object. The initial
  automaton has state-set $A^*$, initial state $\varepsilon \in A^*$, transition
  functions $\delta_a(w) = wa$ and accepting states the $w \in A^*$ such that
  $w$ is in $\L$. Similarly, the final automaton has state-set $2^{A^*}$,
  initial state $\L$, transition functions $\delta_a(L) = a^{-1}L$ and accepting
  states the $L \in 2^{A^*}$ such that $\varepsilon$ is in $L$. The minimal
  automaton for the factorization system of \Cref{example:factorization_systems}
  thus has the Myhill-Nerode equivalence classes for its states. It is unique up
  to isomorphism and its $(\E, \M)$-minimality ensures that it is the complete
  deterministic automaton with the smallest state-set that recognizes $\L$: it
  is in particular finite as soon as $\L$ is recognized by a finite automaton.



  On the contrary, there is no good notion of a unique minimal non-deterministic
  automaton recognizing a regular ($(\Rel,1,1)$-) language $\L$. $\Auto{\L}$
  does have an initial and a final object: the initial automaton is the initial
  deterministic automaton recognizing $\L$, and the final automaton is the
  (non-deterministic) transpose of this initial automaton. But there is no
  factorization system that gives rise to a meaningful minimal object: the
  latter is obtained as the factorization of the relation $\{ (u,v) \in \Sigma^*
  \times \Sigma^* \mid uv \in \L \}$, and, for the factorization system of
  \Cref{example:factorization_systems} for instance, its state-set would then be
  the set of suffixes of words in $\L$.
\end{exa}

Notice how in \Cref{example:minimal_automata} the initial and final
$(\Set,1,2)$-automata have for respective state-sets $A^*$, the disjoint union
of $|A^*|$ copies of $1$, and $2^{A^*}$, the cartesian product of $|A^*|$ copies
of $2$. A similar result holds for non-deterministic automata and generalizes as
\Cref{thm:minimal_automaton}, itself summarized by the diagram below,
where $\kappa$ and $\pi$ are the canonical inclusion and projections and $[-]$
and $\langle - \rangle$ are the copairing and pairing of arrows.

\begin{center}
  \tikzset{
    rotatebelow/.style={anchor=north, rotate=-30, inner sep=.7em},
    rotateabove/.style={anchor=south, rotate=-30, inner sep=.7em}
  }
  \begin{tikzcd}[column sep=huge]
    & \coprod_{A^*} \L(\In) \arrow[rd, "{[\outval{ w }]_{w \in A^*}}" rotateabove, bend left] \arrow["\coprod_{w \in A^*} \kappa_{wa}"', loop, distance=1.5em, in=120, out=60] \arrow[d, two heads] &          \\
    \L(\In) \arrow[ru, "\kappa_{\varepsilon}", bend left] \arrow[rd, "\langle
    \outval{ w } \rangle_{w \in A^*}"' rotatebelow, bend right] \arrow[r] & \Min(\L)(\St) \arrow[d, tail] \arrow[r]                                                                                                                                                               & \L(\Out) \\
    & \prod_{A^*} \L(\Out) \arrow[ru, "\pi_{\varepsilon}"', bend right]
    \arrow["\prod_{w \in A^*} \pi_{aw}"', loop, distance=1.5em, in=300, out=240]
    &
  \end{tikzcd}
\end{center}

\begin{thmC}[{\cite[Lemma~3.2]{colcombetAutomataMinimizationFunctorial2020}}]
  \label{thm:minimal_automaton}
  Given a countable alphabet $A$ and a language $\L: \I \rightarrow \C$,
  \begin{itemize}
  \item if $\C$ has all countable copowers of $\L(\In)$ then $\Auto{\L}$ has an
    initial object $\A^{init}(\L)$ with $\A^{init}(\L)(\St) = \coprod_{A^*}
    \L(\In)$;
  \item dually if $\C$ has all countable powers of $\L(\Out)$ then $\Auto{\L}$
    has a final object $\A^{final}(\L)$ with $\A^{final}(\L)(\St) = \prod_{A^*}
    \L(\Out)$;
  \item hence when both of the previous items hold and $\C$ comes equipped with
    a factorization system $(\E,\M)$, $\Auto{\L}$ has an $(\E,\M)$-minimal
    object $\Min \L$.
  \end{itemize}
\end{thmC}
\noindent We now have all the ingredients to define algorithms for computing the minimal
automaton recognizing a language. But since we will also want to prove the
termination of these algorithms, we need an additional notion of finiteness.

\begin{defi}[$\E^{op}$- and $\M$-n\oe{}therian objects
  {\cite[Definition~24]{colcombetLearningAutomataTransducers2020}}]
  \label{def:E-artinian_M-noetherian}
  In a category $\C$ equipped with a factorization system $(\E,\M)$, an object
  $X$ is said to be \emph{$\M$-n\oe{}therian} if every strict chain of
  $\M$-subobjects is finite: if $(x_n: X_n \rightarrowtail X)_{n \in \N}$ and
  $(m_n: X_n \rightarrowtail X_{n+1})$ form the commutative diagram
  \begin{center}
    \begin{tikzcd}
      &                                                                & X                      \\
      X_0 \arrow[rru, "x_0", tail, bend left] \arrow[r, "m_0"', tail] & X_1 \arrow[ru, "x_1", tail, bend left] \arrow[r, "m_1"', tail] & \cdots \arrow[u, tail]
    \end{tikzcd}
  \end{center}
  then only finitely many of the $m_n$'s may not be isomorphisms. Dually, $X$ is
  \emph{$\E^{op}$-n\oe{}therian} if $X$ is so in $\C^\op$, that is if
  every strict cochain of $\E$-quotients of $X$ is finite.

  An object that is both $\E^{op}$- and $\M$-n\oe{}therian is said to be
  $(\E^{op},\M)$-n\oe{}therian.
\end{defi}

While Colcombet, Petri\c{s}an and Stabile do not give complexity results for
their algorithm, it is straightforward to do so, hence we extend the definition
of $(\E,\M)$-n\oe{}therianity so that it also measures the size of an object in
$\C$.

\begin{defi}[$\E^{op}$- and $\M$-lengths]
  \label{def:co-E_M-length}
  For a fixed $x_0: X_0 \rightarrowtail X$, we call \emph{$\M$-length} of
  $x_0$, written $\len_\M x_0$, the (possibly infinite) supremum of the
  lengths (the number of pairs of consecutive subobjects) of strict chains of
  $\M$-subobjects of $X$ that start with $x_0$. Dually, we call
  \emph{$\E^{op}$-length} of an $\E$-quotient $x_0: X \twoheadrightarrow X_0$ the
  (possibly infinite) quantity $\colen_\E x_0 = \len_{\E^\op} x_0^\op$.
\end{defi}

\begin{exa}
  In $\Set$, $X$ is finite if and only if it is $\Inj$-n\oe{}therian iff it is
  $\Surj^{op}$-n\oe{}therian, and in that case for $Y \subseteq X$ we have $\colen_\Surj(X
  \twoheadrightarrow Y) = \len_\Inj(Y \rightarrowtail X) = |X - Y|$.

  Note that the $E^{op}$- and $\M$-lengths need not be equal: see for instance the
  factorization system we define for monoidal transducers in
  \Cref{sec:category_monoidal_transducers:factorization_systems}, for which the
  $\E^{op}$- and $\M$-lengths are computed in
  \Cref{lemma:M-noetherianity,lemma:E-artinianity}.
\end{exa}

\subsection{Learning}
\label{sec:categorical_framework:learning}

In this section, we fix a language $\L: \O \rightarrow \C$ and a factorization
system $(\E, \M)$ of $\C$ that extends to $\Auto{\L}$, and we assume that $\C$
has countable copowers of $\L(\In)$ and countable powers of $\L(\Out)$ so that
\Cref{thm:minimal_automaton} applies. Our goal is to compute $\Min\L$ with the
help of an oracle answering two types of queries: the function $\Eval{\L}$
processes membership queries, and, for a given input word $w \in A^*$, outputs
$\L(\word{w})$; while $\Equiv{\L}$ processes equivalence queries, and, for a
given hypothesis $(\C, \L(\In), \L(\Out))$-automaton $\A$, decides whether $\A$
recognizes $\L$, and, if not, outputs a counter-example $w \in A^*$ such that
$\L(\word{w}) \neq (\A \circ \iota)(\word{w})$.

For $(\Set, 1, 2)$-automata, if the language is regular this problem is solved
using Angluin's L* algorithm \cite{angluinLearningRegularSets1987}. It works by
maintaining a set of prefixes $Q$ and of suffixes $T$ and, using $\Eval{\L}$,
incrementally building a table $L: Q \times (A \cup \{ \varepsilon \}) \times T
\rightarrow 2$ that represents partial knowledge of $\L$ until it can be made
into a (minimal) automaton. This automaton is then submitted to $\Equiv{\L}$
when some closure and consistency conditions hold: if the automaton is accepted
it must be $\Min\L$, otherwise the counter-example is added to $Q$ and the
algorithm loops over. The \textsc{FunL*} algorithm (\Cref{alg:FunL*},
\cpageref{alg:FunL*}) generalizes this to arbitrary $(\C, \L(\In), \L(\Out))$,
and in particular also encompasses Vilar's algorithm for learning (non-monoidal)
transducers, which was described in \Cref{sec:introduction}
\cite{colcombetLearningAutomataTransducers2020}.

Instead of maintaining a table, the \textsc{FunL*} algorithm maintains a
biautomaton: if $Q \subseteq A^*$ is prefix-closed ($wa \in Q \Rightarrow w \in
Q$) and $T \subseteq A^*$ is suffix-closed ($aw \in T \Rightarrow w \in T$), a
$(Q,T)$-\emph{biautomaton} is, similarly to an automaton, a functor $\A:
\I_{Q,T} \rightarrow \C$, where $\I_{Q,T}$ is now the category freely generated
by the graph
  \begin{tikzcd}[cramped, sep=small]
    \In \arrow[r, "\triangleright q"] & \St_1 \arrow[r, "\varepsilon"', shift
    right] \arrow[r, "a", shift left] & \St_2 \arrow[r, "t \triangleleft"] &
    \Out
  \end{tikzcd}
where $a$, $q$ and $t$ respectively range in $A$, $Q$ and $T$, and where we also
require the diagrams below to commute, the left one whenever $qa \in Q$
and the right one whenever $at \in T$.

\begin{center}
  \begin{tikzcd}
    \In \arrow[r, "\triangleright q"] \arrow[rd, "\triangleright (qa)"'] & \St_1 \arrow[rd, "a"]       & \St_1 \arrow[rd, "a"'] \arrow[r, "\varepsilon"] & \St_2 \arrow[rd, "(at) \triangleleft"] &      \\
    & \St_1 \arrow[r, "\varepsilon"] & \St_2                                        & \St_2 \arrow[r, "t \triangleleft"']    & \Out
  \end{tikzcd}
\end{center}

A $(Q,T)$-biautomaton may thus process a prefix in $Q$ and get in a state in
$\A(\St_1)$, follow a transition along $A \cup \{ \varepsilon \}$ to go in
$\A(\St_2)$, and output a value for each suffix in $T$. The category of
biautomata recognizing $\L_{Q,T}$ ($\L$ restricted to words in $QT \cup QAT$) is
written $\BiAuto{Q,T}{\L}$. A result similar to \Cref{thm:minimal_automaton}
also holds for biautomata
\cite[Lemma~18]{colcombetLearningAutomataTransducers2020}, and the initial and
final biautomata are then made of finite copowers of $\L(\In)$ and finite powers
of $\L(\Out)$ (when these exist). Writing $\factorization{Q}{T}$ for the $(\E,
\M)$-factorization of the canonical morphism $\langle [ \L(\word{qt}) ]_{q \in
  Q} \rangle_{t \in T}: \coprod_Q \L(\In) \rightarrow \prod_T \L(\Out)$, the
minimal biautomaton recognizing $\L_{Q,T}$ then has state-spaces
$(\Min\L_{Q,T})(\St_1) =\factorization{Q}{(T \cup AT)}$ and
$(\Min\L_{Q,T})(\St_2) =\factorization{(Q \cup QA)}{T}$. We also write
$\varepsilon_{Q,T}^{min} = (\Min \L_{Q,T})(\varepsilon)$. The table, represented
by the morphism $\langle [ \L(\word{qt}) ]_{q \in Q} \rangle_{t \in T}$, may be
fully computed using $\Eval{\L}$, and hence so can be the minimal
$(Q,T)$-biautomaton.

A biautomaton $\B$ can then be merged into a hypothesis
$(\C,\L(\In),\L(\Out))$-automaton precisely when $\B(\varepsilon)$ is an
isomorphism, i.e. both an $\E$- and an $\M$-morphism (a factorization system
necessarily satisfies that $\Iso = \E \cap \M$): this encompasses respectively
the closure and consistency conditions that need to hold in the L*-algorithm
(and its variants) for the table that is maintained to be merged into a
hypothesis automaton.

\begin{thmC}[{\cite[Theorem~26]{colcombetLearningAutomataTransducers2020}}]
  \label{thm:FunL*_correction_termination}
  \Cref{alg:FunL*} is correct. If $(\Min \L)(\St)$ is $\M$-n\oe{}therian and
  $\E^{op}$-n\oe{}therian, the algorithm also terminates.
\end{thmC}

While Colcombet, Petri\c{s}an and Stabile do not give a bound on the actual
running time of their algorithm, it is straightforward to extend their proof to
show that the number of updates to $Q$ and $T$ (hence in particular of calls to
$\Equiv{\L}$) is linear in the size of $(\Min \L)(\St)$, itself defined through
\Cref{def:co-E_M-length}. A similar claim is made in \cite[Remark
3.14]{urbatAutomataLearningAlgebraic2020}.

\begin{algorithm*}
  \caption{The \textsc{FunL*}-algorithm}
  \label{alg:FunL*}
  \begin{algorithmic}[1] \REQUIRE $\Eval{\L}$ and $\Equiv{\L}$ \ENSURE
    $\Min(\L)$

    \STATE $Q = T = \{ \varepsilon \}$

    \LOOP

    \WHILE{$\varepsilon_{Q,T}^{min}$ is not an isomorphism}

    \IF{$\varepsilon_{Q,T}^{min}$ is not an $\E$-morphism}

    \STATE find $qa \in QA$ such that $\factorization{Q}{T} \rightarrowtail\factorization{(Q \cup
      \{ qa \})}{T}$ is not an $\E$-morphism; add it\linebreak to $Q$

    \ELSIF{$\varepsilon_{Q,T}^{min}$ is not an $\M$-morphism}

    \STATE find $at \in AT$ such that $\factorization{Q}{(T \cup \{ at \})}
    \twoheadrightarrow\factorization{Q}{T}$ is not an $\M$-morphism; add it\linebreak to $T$

    \ENDIF


    \ENDWHILE

    \STATE merge $\Min\L_{Q,T}$ into $\H_{Q,T}\L$

    \IF{$\Equiv{\L}(\H_{Q,T}\L)$ outputs some counter-example $w$}

    \STATE add $w$ and its prefixes to $Q$ \label{alg:FunL*:line:add-counter}


    \ELSE

    \RETURN $\H_{Q,T}\L$

    \ENDIF

    \ENDLOOP
  \end{algorithmic}
\end{algorithm*}

\section{The category of monoidal transducers}
\label{sec:category_monoidal_transducers}

We now study a specific family of transition systems, monoidal transducers,
through the lens of category theory, so as to be able to apply the framework of
Colcombet, Petri\c{s}an and Stabile. In
\Cref{sec:category_monoidal_transducers:monoids}, we first rapidly recall the
notion of monoid. We then define the category of monoidal transducers
recognizing a language in \Cref{sec:category_monoidal_transducers:as_functors},
and study how it fits into the framework of \Cref{sec:categorical_framework}:
the initial transducer is given in \Cref{corollary:initial_transducer},
conditions for the final transducer to exist are described in
\Cref{sec:category_monoidal_transducers:initial_final_objects}, and
factorization systems are tackled in
\Cref{sec:category_monoidal_transducers:factorization_systems}.

\subsection{Monoids}
\label{sec:category_monoidal_transducers:monoids}

Let us first recall definitions relating to monoids, and fix some notations.
Most of these are standard in the monoid literature, only coprime-cancellativity
(\Cref{def:cancellativity}) and n\oe{}therianity (\Cref{def:noetherianity}) are
uncommon.

\begin{defi}[monoid] A \emph{monoid} $(M, \varepsilon_M, {\otimes_M})$ is a
  set $M$ equipped with a binary operation ${\otimes_M}$ (often called the
  \emph{product}) that is associative ($\forall u, v, w \in M, u \otimes_M (v
  \otimes_M w) = (u \otimes_M v) \otimes_M w$) and has $\varepsilon_M$ as
  \emph{unit} element ($\forall u \in M, u \otimes_M \varepsilon_M = \varepsilon_M
  \otimes_M u = u$). When non-ambiguous, it is simply written $(M, \varepsilon,
  {\otimes})$ or even $M$, and the symbol for the binary operation may be
  omitted.

  The \emph{dual} of $(M, \varepsilon, {\otimes})$, written $(\dual{M},
  \dual{\varepsilon}, \dual{\otimes})$, has underlying set $\dual{M} = M$ and
  identity $\dual{\varepsilon} = \varepsilon$, but symmetric binary operation:
  $\forall u, v \in M, u \otimes^{op} v = v \otimes u$.
\end{defi}

The dual of a monoid is mainly used here for the sake of conciseness: whenever
we define some ``left-property'', the corresponding ``right-property'' is
defined as the left-property but in the dual monoid. Note that when $M$ is
commutative it is its own dual and the left- and right-properties coincide.

\begin{defi}[invertibility] An element $x$ of a monoid $M$ is
  \emph{right-invertible} when there is a $y \in M$ such that $xy = \varepsilon$,
  and $y$ is then called the \emph{right-inverse} of $x$. It is
  \emph{left-invertible} when it is right-invertible in $\dual{M}$, and the
  corresponding right-inverse is called its \emph{left-inverse}. When $x$ is
  both right- and left-invertible, we say it is \emph{invertible}. In that case
  its right- and left-inverse are equal: this defines its \emph{inverse},
  written $\inv{x}$. The set of invertible elements of $M$ is written
  $\invertibles{M}$.

  Two families $(u_i)_{i \in I}$ and $(v_i)_{i \in I}$ indexed by some non-empty set $I$ are equal
  \emph{up to invertibles on the left} when there is some invertible $x \in M$
  such that $\forall i \in I, u_i = xv_i$.
\end{defi}

\begin{defi}[divisibility] An element $u$ of a monoid $M$ \emph{left-divides} a
  family $w = (w_i)_{i \in I}$ of $M$ indexed by some set $I$ when there is a
  family $(v_i)_{i \in I}$ such that $\forall i \in I, uv_i = w_i$, and we say
  that $u$ is a \emph{left-divisor} of $w$. It \emph{right-divides} it when it
  left-divides it in $\dual{M}$, and in that case $u$ is called a
  \emph{right-divisor} of $w$.

  A \emph{greatest common left-divisor} (or \emph{left-gcd}) of the family $w$
  is a left-divisor of $w$ that is left-divided by all others left-divisors of
  $w$.

  A family $w$ is said to be \emph{left-coprime} when it has $\varepsilon$ as a
  left-gcd, i.e. when all its left-divisors (or equivalently one of its
  left-gcds, if there is one) are right-invertible.
\end{defi}

We speak of \emph{greatest} common left-divisors because, while there may be
many such elements for a fixed family $\omega$, they all left-divide one another
and are thus equivalent in some sense.

\begin{defi}[cancellativity]
  \label{def:cancellativity}
  A monoid $M$ is said to be \emph{left-cancellative} when for all families
  $(u_i)_{i \in I}$ and $(v_i)_{i \in I}$ of $M$ indexed by some set $I$ and for
  all $w \in M$, $u = v$ as soon as $wu_i = wv_i$ for all $i \in I$. If this
  only implies that there is some $x \in \invertibles{M}$ such that $u_i = xv_i$
  for all $i \in I$, we instead say that $M$ is \emph{left-cancellative up to
    invertibles on the left}.

  Similarly, $M$ is said to be \emph{right-coprime-cancellative} when for all
  $u, v \in M$ and every left-coprime family $(w_i)_{i \in I}$ indexed by some set
  $I$, $u = v$ as soon as $uw_i = vw_i$ for all $i \in I$.
\end{defi}

\begin{defi}[n\oe{}therianity]
  \label{def:noetherianity}
  A monoid $M$ is \emph{right-n\oe{}therian} when for all sequences $(u_n)_{n \in
    \N}$ and $(v_n)_{n \in \N}$ of $M$ such that $v_n = v_{n+1}u_n$ for all $n
  \in \N$, there is some $m \in \N$ such that $u_m$ is invertible.

  In this case, we write $\rk v$ for the \emph{rank} of $v$, the (possibly
  infinite) supremum of the numbers of non-invertibles in a sequence $(u_n)_{n
    \in \N}$ that satisfies $v_n = v_{n+1}u_n$ for a sequence $(v_n)_{n \in \N}$
  with $v_0 = v$.
\end{defi}

In other words, a monoid is right-n\oe{}therian when it has no strict infinite
chains of right-divisors (in the definition above, each $u_n \cdots u_0$
right-divides $v_0$). Note that $\rk(uv) \le \rk u + \rk v$, and if this is an
equality and the rank of every $w \in M$ is finite, $M$ is said to be
\emph{graded}.

N\oe{}therianity will be used to ensure that fixed-point algorithms terminate,
hence we will often assume that the monoids we work with satisfy some
n\oe{}therianity properties. We therefore state two additional lemmas making it
easier to work with n\oe{}therian monoids. They both assume $M$ to be
right-n\oe{}therian but the dual results also stand.

\begin{lem}
  \label{lemma:noetherianity_alternative_definition}
  $M$ is right-n\oe{}therian if and only if for all sequences $(u_n)_{n \in \N}$
  and $(v_n)_{n \in \N}$ of $M$ such that $v_n = v_{n+1}u_n$ for
  all $n \in \N$, there is some $n \in \N$ such that for all $i \ge n$, $u_i$ is
  invertible.
\end{lem}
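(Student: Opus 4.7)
The plan is to prove the two implications separately. The non-trivial direction is the forward one: right-noetherianity (in the sense of \Cref{def:noetherianity}) implies the stronger-looking alternative. The reverse direction is immediate, since if some $n$ is such that all $u_i$ for $i \ge n$ are invertible, then in particular $u_n$ itself is invertible.

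For the forward direction, I would proceed by contradiction. Assume $M$ is right-noetherian, fix sequences $(u_n)_{n \in \N}$ and $(v_n)_{n \in \N}$ with $v_n = v_{n+1} u_n$, and suppose there are infinitely many indices $n$ at which $u_n$ is not invertible. Enumerate these as $i_0 < i_1 < i_2 < \cdots$, and define a condensed pair of sequences by
\[
w_k = v_{i_{k-1} + 1}, \qquad u'_k = u_{i_k} u_{i_k - 1} \cdots u_{i_{k-1} + 1},
\]
with the convention $i_{-1} = -1$ so that $w_0 = v_0$. Iterating $v_n = v_{n+1} u_n$ down from index $i_k$ to $i_{k-1} + 1$ immediately gives $w_k = w_{k+1} u'_k$.

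The key step is then to verify that none of the $u'_k$ is invertible, so that $(w_k, u'_k)$ contradicts the original definition of right-noetherianity. Write $u'_k = u_{i_k} \cdot q_k$ where $q_k = u_{i_k - 1} \cdots u_{i_{k-1} + 1}$. By construction every factor of $q_k$ corresponds to an index strictly between two consecutive non-invertible indices, hence each factor is invertible and so is $q_k$. The main subtle point --- and I expect this to be the only mildly delicate step --- is the monoid-theoretic fact that in an equation $ab = c$, as soon as two of $a,b,c$ are invertible, so is the third. Applied to $u_{i_k} \cdot q_k = u'_k$ with $q_k$ invertible, this shows that if $u'_k$ were invertible, then $u_{i_k}$ would be too, contradicting the choice of $i_k$.

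Having established that all $u'_k$ are non-invertible, the sequences $(w_k)_{k \in \N}$ and $(u'_k)_{k \in \N}$ satisfy $w_k = w_{k+1} u'_k$ with no $u'_k$ invertible, directly contradicting \Cref{def:noetherianity}. Hence only finitely many $u_n$ can fail to be invertible, which is exactly the alternative formulation.
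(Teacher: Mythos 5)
Your proof is correct and follows essentially the same route as the paper's: both condense the sequence by grouping the invertible factors between consecutive non-invertible indices $i_{k-1} < i_k$ and derive a contradiction with the original definition of right-noetherianity. If anything, you are slightly more careful than the paper in justifying that the condensed factors $u'_k$ are non-invertible (via the two-out-of-three observation for invertibles in $ab=c$), and your indexing of the $w_k$ is the cleaner one.
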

\begin{proof}
  The reverse implication is trivial. Now assume $M$ right-n\oe{}therian, and
  consider $(u_n)_{n \in \N}$ and $(v_n)_{n \in \N}$ such that $v_n =
  v_{n+1}u_n$ for all $n \in \N$. Consider the set $I = \{ i_0 < i_1 < \ldots
  \}$ of all the indices $i$ such that $u_i$ is not invertible and, letting
  $i_{-1} = -1$, define $u_n' = u_{i_n} \cdots u_{i_{n - 1} + 1}$ and $v_n' =
  v_{i_n}$ for all $n \in \N$. Then $u_n'$ is never invertible because $u_{i_n}$
  is not but $u_{i_n - 1}, \ldots, u_{i_{n-1} + 1}$ are (by definition), yet we
  still have by induction that $v_n' = v_{n+1}'u_n'$. Since $M$ is
  right-n\oe{}therian, $I$ must be finite hence there is some $n \in \N$ such that
  for all $i \ge n$, $m_i$ is invertible.
\end{proof}

\begin{lem}
  \label{lemma:invertibles_in_noetherian_monoids}
  If $M$ is right-n\oe{}therian then all its right- and left-invertibles are
  invertible.
\end{lem}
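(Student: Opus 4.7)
The plan is to prove that every right-invertible element is invertible, and then to deduce the same for left-invertible elements by a short dualization argument that stays inside $M$ (rather than invoking the dual, which needn't be right-noetherian).

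So suppose $x \in M$ is right-invertible, with right-inverse $y$ (so $xy = \epsilon$). I would introduce the element $e = yx$ and observe that it is idempotent: $e^2 = yxyx = y(xy)x = y\epsilon x = yx = e$. The trick is then to consider the two constant sequences $v_n = x$ and $u_n = e$ for all $n \in \N$. The required relation $v_n = v_{n+1}u_n$ reduces to $x = x \cdot yx$, which holds because $xyx = \epsilon x = x$. Applying right-noetherianity of $M$, there must be some index $n$ at which $u_n = e$ is invertible. But an invertible idempotent in any monoid is necessarily the unit (multiply $e \cdot e = e$ on the right by $e^{-1}$), so $yx = e = \epsilon$. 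Combining this with $xy = \epsilon$, we conclude that $x$ is invertible with $\inv{x} = y$.

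For the case where $x$ is left-invertible with left-inverse $z$, note that $zx = \epsilon$ also expresses that $z$ is right-invertible (with right-inverse $x$). By the previous paragraph $z$ is then invertible, and the standard uniqueness of two-sided inverses in a monoid forces $x = \inv{z}$, so $x$ itself is invertible.

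The only non-routine ingredient is the construction of the witnessing sequences: once one notices that $e = yx$ is idempotent and that the identity $xe = x$ lets us iterate $e$ against the constant sequence $x$, right-noetherianity immediately supplies an $n$ making $e$ invertible, and invertibility of an idempotent collapses the whole argument. Everything else is bookkeeping about inverses in a monoid.
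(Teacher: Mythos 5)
Your proof is correct and follows essentially the same strategy as the paper's: both arguments apply the definition of right-noetherianity to an explicitly constructed pair of sequences and then handle the left-invertible case by the identical reduction (the left-inverse is right-invertible, hence invertible, hence so is $x$). The only difference is the witnessing sequence — the paper alternates $u_{2n} = x_r$, $u_{2n+1} = x$ with $v_{2n} = \epsilon$, $v_{2n+1} = x$, which makes one of the $u_n$ equal to $x$ or $x_r$ and yields invertibility of $x$ directly, whereas you use the constant sequences $v_n = x$, $u_n = yx$ and need the (easy, correct) extra observation that an invertible idempotent must be the unit.
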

\begin{proof}
  Assume $M$ right-n\oe{}therian and consider some $x$ that is right-invertible
  with $x_r$ its right-inverse. For $n \in \N$, set $u_{2n} = x_r$, $v_{2n} =
  \varepsilon$ and $u_{2n+1} = v_{2n+1} = x$. Then $v_n = v_{n+1}u_n$ for all $n
  \in \N$ hence by right-n\oe{}therianity $x$ is invertible. If $x$ is
  left-invertible instead, its left-inverse is right-invertible hence invertible
  and therefore so is $x$.
\end{proof}

\begin{exa}
  The canonical example of a monoid is the \emph{free monoid} $A^*$ over an
  alphabet $A$, whose elements are words with letters in $A$, whose product is
  the concatenation of words and whose unit is the empty word. Notice that the
  alphabet $A$ may be infinite. The left-divisibility relation is the prefix
  one, and the left-gcd is the longest common prefix.

  The \emph{free commutative monoid} $A^{\oast}$ over $A$ has elements the
  functions $A \rightarrow \N$ with finite support, product $(f \otimes g)(a) =
  f(a) + g(a)$ and unit the zero function $a \mapsto 0$. It is commutative ($f
  \otimes g = g \otimes f$) hence is its own dual: the divisibility relation is
  the pointwise order inherited from $\N$ and the greatest common divisor is the
  pointwise infimum.

  These two monoids are examples of \emph{trace monoids} over some $A$, defined
  as quotients of $A^*$ by commutativity relations on letters (for $A^{\oast}$,
  all the pairs of letters are required to commute, and for $A^*$ none are).
  Trace monoids have no non-trivial right- or left-invertible elements, are all
  left-cancellative, right-coprime-cancellative and right-n\oe{}therian, and the
  rank of a word is simply its number of letters.

  Another family of examples is that of \emph{groups}, monoids where all
  elements are invertible. Again, all groups are left-cancellative,
  right-coprime-cancellative and right-n\oe{}therian.

  A final monoid of interest is $(E, {\vee}, \bot)$ for $E$ any join-semilattice
  with a bottom element $\bot$. In this commutative monoid, the divisibility
  relation is the partial order on $E$, and the gcd, when it exists, is the
  infimum. This example shows that a monoid can be coprime-cancellative without
  being cancellative nor n\oe{}therian: this is for instance the case when $E =
  \R_+$ (this counter-example was pointed out to the author by Thomas
  Colcombet).
\end{exa}

\subsection{Monoidal transducers as functors}
\label{sec:category_monoidal_transducers:as_functors}

In the rest of this paper we fix a countable \emph{input alphabet} $A$ and an
\emph{output monoid} $(M, \varepsilon, \otimes)$. To differentiate between elements
of $A^*$ and elements of $M$, we write the former with Latin letters ($a, b, c,
\ldots$ for letters and $u, v, w, \ldots$ for words) and the latter with Greek
letters ($\alpha, \beta, \gamma, \ldots$ for generating elements and $\upsilon,
\nu, \omega, \ldots$ for general elements). In particular the empty word over
$A$ is denoted $e$ while the unit of $M$ is still written $\varepsilon$. We now
define our main object of study, $M$-monoidal transducers.

\begin{defi}[monoidal transducer]
  \label{def:monoidal_transducer}
  A \emph{monoidal transducer} is a tuple $(S, (\upsilon_0, s_0), t, (- \odot
  a)_{a \in A})$ where $S$ is a set of \emph{states}; $(\upsilon_0, s_0) \in M
  \times S \sqcup \{ \bot \}$ is the (possibly undefined) pair of the
  \emph{initialization value} and \emph{initial state}; $t: S \rightarrow M
  \sqcup \{ \bot \}$ is the partial \emph{termination function}; $s \odot a \in
  M \times S \sqcup \{ \bot \}$ for $a \in A$ may be undefined, and its two
  components, $- \ocircle a: S \rightarrow M \sqcup \{ \bot \}$ and $- \cdot a
 : S \rightarrow S \sqcup \{ \bot \}$, are respectively called the partial
  \emph{production function} and the partial \emph{transition function} along
  $a$.
\end{defi}

\begin{exa}
  \Cref{fig:monoidal_transducer} is a graphical representation of a monoidal
  transducer that takes its input in the alphabet $A = \{ a, b \}$ and has
  output in any monoid that is a quotient of $\Sigma^*$, with $\Sigma = \{
  \alpha, \beta \}$. Formally, it is given by $S = \{ 1, 2, 3, 4 \}$;
  $(\upsilon_0, s_0) = (\varepsilon, 1)$; $t(1) = \alpha$, $t(2) = \bot$, $t(3) =
  \alpha$ and $t(4) = \varepsilon$; and finally $1 \odot a = (\varepsilon, 2)$, $1
  \odot b = (\beta, 3)$, $3 \odot b = (\beta, 3)$ as well as $s \odot c = \bot$
  for any other $s \in S$ and $c \in A$. This transducer recognizes the function
  given by $L(b^n) = \beta^n\alpha$ (seen in the corresponding quotient monoid,
  so for instance $L(b^n) = \alpha\beta^n$ when $M = \Sigma^{\oast}$) for all $n
  \in \N$ and $L(w) = \bot$ otherwise. Other examples are given by
  \Cref{fig:transducers}.

  \begin{figure}[ht]
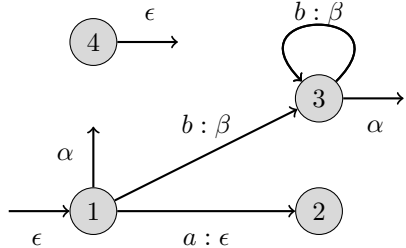

    \centering
    \includefigure{transducer}
    \caption{A monoidal transducer $\B$}
    \label{fig:monoidal_transducer}
  \end{figure}
\end{exa}

To apply the framework of \Cref{sec:categorical_framework} we first need to
model monoidal transducers as functors. We thus design a tailored output
category that in particular matches the one for classical transducers when $M$
is a free monoid \cite[Section 4]{colcombetAutomataMinimizationFunctorial2020}.
We write $\T_M$ for the monad on $\mathbf{Set}$ given by $\T_M X = M \times X +
1 = (M \times X) \sqcup \{ \bot \}$ (in Haskell, this monad is the composite of
the \texttt{Maybe} monad and a \texttt{Writer} monad). Its unit $\eta: \Id
\Rightarrow \T_M$ is given by $\eta_X(x) = (\varepsilon, x)$ and its
multiplication $\mu: \T_M^2 \Rightarrow \T_M$ is given by $\mu_X((\upsilon,
(\nu, x))) = (\nu\upsilon, x)$ (composition in $M$ is read from left to right so
the outermost output value is appended to the innermost one), $\mu_X((\upsilon,
\bot)) = \bot$ and $\mu_X(\bot) = \bot$. Recall that the Kleisli category
$\Kl(\T_M)$ for the monad $\T_M$ has sets for objects and arrows $X \klarrow Y$
(notice the different symbol) those functions $f^\dagger: \T_MX \rightarrow \T_M
Y$ such that $f^\dagger(\bot) = \bot$, $f^\dagger(\upsilon, x) = (\upsilon\nu,
y)$ when $f^\dagger(\varepsilon, x) = (\nu, y)$ and $f^\dagger(\upsilon, x) =
\bot$ when $f(\varepsilon, x) = \bot$: in particular, such an arrow is entirely
determined by its restriction $f: X \rightarrow \T_M Y$, and we will freely
switch between these two points of view for the sake of conciseness. The
identity on $X$ is then given by the identity function $\id_{\T_M X} =
\eta_X^\dagger: \T_M X \rightarrow \T_M X$, and the composition of two arrows $X
\klarrow Y \klarrow Z$ is given by the composition of the underlying functions
$\T_M X \rightarrow \T_M Y \rightarrow \T_M Z$.

$M$-transducers are in one-to-one correspondance with
$(\Kl(\T_M),1,1)$-automata, i.e. functors $\A: \I \rightarrow \Kl(\T_M)$ such
that $\A(\In) = \A(\Out) = 1$: $(S, (\upsilon_0, s_0), t, (- \odot a)_{a \in
  A})$ is modelled by the functor $\A: \I \rightarrow \Kl(\T_M)$ given by
$\A(\St) = S$, $\A(\triangleright) = * \mapsto (\upsilon_0, s_0): 1 \rightarrow
M \times S + 1$, $\A(w) = s \mapsto s \odot w = (((s \odot a_1) \odot^\dagger
a_2) \odot^\dagger \cdots) \odot^\dagger a_n: S \rightarrow M \times S + 1$ for
$w = a_1 \cdots a_n \in A^*$, and $\A(\triangleleft) = t: S \rightarrow M + 1
\cong M \times 1 + 1$.

\begin{defi}
  We write $\Trans_M$ for the category of $(\Kl(\T_M),1,1)$-automata: the
  objects are $M$-transducers seen as functors and the morphisms are natural
  transformations between them. Given a $(\Kl(\T_M),1,1)$-language $\L: \O
  \rightarrow \Kl(\T_M)$, we write $\Trans_M(\L)$ for the subcategory of
  $M$-transducers $\A: \I \rightarrow \Kl(\T_M)$ that recognize $\L$, i.e. such
  that $\A \circ \iota = \L$.
\end{defi}

Under this correspondance, the language recognized by a transducer
$(S,(\upsilon_0,s_0),t,\odot)$ is thus a function $L: A^* \rightarrow M + 1$
given by $L(w) = t^\dagger((\upsilon_0, s_0) \odot^\dagger w)$, and a morphism
between two transducers $(S_1, (\upsilon_1, s_1), t_1, {\odot_1})$ and $(S_2,
(\upsilon_2, s_2), t_2, {\odot_2})$ is a function $f: S_1 \rightarrow M \times
S_2 + 1$ such that $f^\dagger(\upsilon_1, s_1) = (\upsilon_2, s_2)$, $t_1(s) =
t_2^\dagger(f(s))$ and $f^\dagger(s \odot a) = f^\dagger(s) \odot^\dagger a$.

When $M = B^*$ for some alphabet $B$, $M$-transducers coincide with the
classical notion of deterministic one-way transducers and the minimal transducer
is given by \Cref{def:minimal_object}
\cite{choffrutMinimizingSubsequentialTransducers2003,
  colcombetAutomataMinimizationFunctorial2020}. To study the notion of minimal
monoidal transducer, it is thus natural to try to follow this framework as well.

\subsection{The initial and final monoidal transducers recognizing a function}
\label{sec:category_monoidal_transducers:initial_final_objects}

To apply the framework of \Cref{sec:categorical_framework} to
$(\Kl(\T_M),1,1)$-automata, we need three ingredients in $\Kl(\T_M)$: countable
copowers of $1$, countable powers of $1$, and a factorization system.

We start with the first ingredient, countable copowers of $1$. Since $\Set$ has
arbitrary coproducts, $\Kl(\T_M)$ has arbitrary coproducts as well as any
Kleisli category for a monad over a category with coproducts does
\cite[Proposition 2.2]{szigetiLimitsColimitsKleisli1983}. Hence
\Cref{thm:minimal_automaton} applies:

\begin{cor}[initial transducer]
  \label{corollary:initial_transducer}
  For any $(\Kl(\T_M), 1, 1)$-language $\L$, $\Trans_M(\L)$ has an initial
  object $\A^{init}(\L)$ with state-set $S^{init} = A^*$, initial state
  $s_0^{init} = e$, initialization value $\upsilon_0^{init} = \varepsilon$,
  termination function $t^{init}(w) = \L(\word{w})(*)$ and transition function
  $w \odot^{init} a = (\varepsilon, wa)$. Given any other transducer $\A = (S,
  (\upsilon_0, s_0), t, {\odot})$ recognizing $\L$, the unique transducer
  morphism $f: \A^{init}(\L) \Longrightarrow \A$ is given by the function $f:
  A^* \rightarrow M \times S + 1$ such that $f(w) = \A(\triangleright w)(*) =
  (\upsilon_0, s_0) \odot^\dagger w$.
\end{cor}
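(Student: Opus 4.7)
The plan is to apply \Cref{thm:minimal_automaton} to the setting $\C = \Kl(\T_M)$ with $\L(\In) = \L(\Out) = 1$: by that theorem it suffices to exhibit the countable copowers of $1$ in $\Kl(\T_M)$, and then to unfold the abstract description of the initial object into the concrete combinatorial data of \Cref{def:monoidal_transducer}.

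First I would establish the existence of coproducts. Since $\Set$ has all (countable) coproducts and the cited result of Szigeti~\cite[Proposition~2.2]{szigetiLimitsColimitsKleisli1983} ensures that $\Kl(\T_M)$ inherits them, the copower $\coprod_{A^*} 1$ exists and is carried by the set $A^*$, with Kleisli-inclusions $\kappa_w: 1 \klarrow A^*$ given by the functions $* \mapsto (\epsilon, w)$. The universal property in $\Kl(\T_M)$ then sends a family $(f_w: 1 \klarrow X)_{w \in A^*}$ to the copairing $[f_w]_{w \in A^*}: A^* \klarrow X$ whose underlying function is $w \mapsto f_w(*)$. This gives us the object $\A^{init}(\L)(\St) = A^*$ predicted by \Cref{thm:minimal_automaton}.

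Next I would read off the structural morphisms from the diagram preceding \Cref{thm:minimal_automaton}. The initial arrow $\A^{init}(\L)(\triangleright) = \kappa_e: 1 \klarrow A^*$ unfolds to $* \mapsto (\epsilon, e)$, giving $\upsilon_0^{init} = \epsilon$ and $s_0^{init} = e$. The transition endomorphisms are the self-map $\coprod_{w \in A^*} \kappa_{wa}: A^* \klarrow A^*$, which by the universal property above is the Kleisli arrow $w \mapsto (\epsilon, wa)$, yielding $w \odot^{init} a = (\epsilon, wa)$. Finally the termination arrow is the copairing $[\L(\word{w})]_{w \in A^*}: A^* \klarrow 1$, whose underlying function is $w \mapsto \L(\word{w})(*)$, giving $t^{init}(w) = \L(\word{w})(*)$. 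A direct check that these data define a functor $\I \to \Kl(\T_M)$ recognizing $\L$ is routine: by construction $t^{init}{}^{\dagger} \circ (\kappa_e)^{\dagger} \circ (\odot^{init}_w) = \L(\word{w})$ for every $w \in A^*$.

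For the universal property, given another transducer $\A = (S, (\upsilon_0, s_0), t, \odot) \in \Trans_M(\L)$, I would define $f: A^* \klarrow S$ by copairing the Kleisli arrows $\A(\triangleright w): 1 \klarrow S$, that is, $f(w) = \A(\triangleright w)(*) = (\upsilon_0, s_0) \odot^\dagger w$. Naturality on $\triangleright$ is immediate by definition of $f$ on $e$; naturality on each $a \in A$ reduces to the identity $\A(\triangleright w) \cdot a = \A(\triangleright wa)$ in $\Kl(\T_M)$, which follows from functoriality of $\A$; and naturality on $\triangleleft$ is exactly the assumption that $\A$ recognizes $\L$. Uniqueness follows because any morphism out of a coproduct is determined by its restriction to the injections $\kappa_w$, and the naturality constraints on $\triangleright$ and on each letter force the component at $\kappa_w$ to equal $\A(\triangleright w)$. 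There is no real obstacle here beyond bookkeeping: the only point to be a little careful about is that coproducts in $\Kl(\T_M)$ genuinely agree set-theoretically with coproducts in $\Set$, which is why the underlying set is $A^*$ rather than some quotient thereof.
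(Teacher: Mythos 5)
Your proposal is correct and follows exactly the route the paper takes: the paper offers no separate proof of this corollary, simply noting that $\Kl(\T_M)$ inherits countable coproducts from $\Set$ by Szigeti's result and then invoking \Cref{thm:minimal_automaton}, with the concrete description of $\A^{init}(\L)$ read off from the copower $\coprod_{A^*} 1$ just as you do. Your explicit unfolding of the injections, the copairing, and the universal property is accurate bookkeeping on top of that same argument.
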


Similarly, to get a final transducer in $\Trans_M(\L)$ for some $\L$,
\Cref{thm:minimal_automaton} tells us that it is enough for $\Kl(\T_M)$ to have
all countable powers of $1$. This holds for classical transducers, when $M$ is a
free monoid \cite[Lemma 4.7]{colcombetAutomataMinimizationFunctorial2020}. Hence
we study conditions on the monoid $M$ for $\Kl(\T_M)$ to have these powers.

To this means, given a countable set $I$ we consider partial functions $\Lambda
: I \rightarrow M + 1$. We write $\bot^I$ for the nowhere defined function $i
\mapsto \bot$ and $(M + 1)_*^I = (M + 1)^I - \{ \bot^I \}$ for the set of
partial functions that are defined somewhere. If $I \subseteq J$, $(M + 1)_*^I$
may thus be identified with the subset of partial functions of $(M + 1)_*^J$
that are undefined on $J - I$. We extend the product ${\otimes}: M^2 \rightarrow
M$ of $M$ to a function $M \times (M + 1)_*^I \rightarrow (M+1)_*^I$ by setting
$(\upsilon \otimes \Lambda)(i) = \upsilon \otimes \Lambda(i)$ for $i \in I$ such
that $\Lambda(i) \neq \bot$ and $(\upsilon \otimes \Lambda)(i) = \bot$
otherwise. The universal property of the (categorical) product in $\Kl(\T_M)$
then translates as:

\begin{prop}
  \label{prop:countable_products_lgcd_red}
  The following are equivalent:
  \begin{enumerate}[label=(\arabic*), series=countable_products_lgcd_red]
  \item \label{prop:countable_products_lgcd_red:item:powers_1}
    $\Kl(\T_M)$ has all countable powers of $1$;
  \item \label{prop:countable_products_lgcd_red:item:lgcd_red}
    there are two functions $\lgcd: (M + 1)_*^{\N} \rightarrow M$ and $\red
   : (M + 1)_*^{\N} \rightarrow (M + 1)_*^{\N}$ such that
    \begin{enumerate}
    \item \label{prop:countable_products_lgcd_red:item:lgcd_red:item:sound} for all $\Lambda \in (M + 1)_*^{\N}$, $\Lambda =
      \lgcd(\Lambda) \red(\Lambda)$;
    \item \label{prop:countable_products_lgcd_red:item:lgcd_red:item:unique} for
      all $\Gamma, \Lambda \in (M + 1)_*^{\N}$ and $\upsilon, \nu \in M$, if
      $\upsilon \red(\Gamma) = \nu \red(\Lambda)$ then $\upsilon = \nu$ and
      $\red\Gamma = \red\Lambda$;
    \end{enumerate}
  \item \label{prop:countable_products_lgcd_red:item:products}
    $\Kl(\T_M)$ has all countable products.
  \end{enumerate}
  Moreover when these hold, since any countable set $I$ embeds into $\N$,
  $\lgcd$ and $\red$ can be extended to $(M+1)_*^I$. $\lgcd\Lambda$ is then a
  left-gcd of $(\Lambda(i))_{i \mid \Lambda(i) \neq \bot}$ and the product of
  $(X_i)_{i \in I}$ for some $I \subseteq \N$ is the set of pairs $(\Lambda,
  (x_i)_{i \in I})$ such that $\Lambda \in \red((M+1)_*^I)$ and, for all $i \in
  I$, $x_i \in X_i$ if $\Lambda(i) \neq \bot$ and $x_i = \bot$ otherwise. In
  particular, the $I$-th power of $1$ is the set of \emph{irreducible} partial
  functions $I \rightarrow M + 1$:
  \[ \prod_I 1 = \Irr(I,M) = \{ \red\Lambda \in (M + 1)_*^I \mid \Lambda \in (M + 1)_*^I
    \} \]
\end{prop}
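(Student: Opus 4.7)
The plan is to prove the cycle $(3) \Rightarrow (1) \Rightarrow (2) \Rightarrow (3)$. The implication $(3) \Rightarrow (1)$ is immediate, since countable powers of $1$ are precisely the products of the constant family at $1$.

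For $(1) \Rightarrow (2)$, the idea is to read $\lgcd$ and $\red$ off of the universal property of the power $\prod_\N 1$ in $\Kl(\T_M)$. A Kleisli arrow $1 \klarrow 1$ is a function $1 \to \T_M 1 = M + 1$, hence nothing else than an element of $M + 1$; a cone over the constant diagram $\N \to \Kl(\T_M)$ at $1$ is thus exactly an element $\Lambda \in (M+1)^\N$. By the universal property of $\prod_\N 1$, every such $\Lambda$ factors through a unique morphism $f_\Lambda : 1 \klarrow \prod_\N 1$; when $\Lambda = \bot^\N$ this is necessarily $f_\Lambda = \bot$, otherwise $f_\Lambda$ corresponds to a pair $(\upsilon, x) \in M \times \prod_\N 1$. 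I would then set $\lgcd(\Lambda) := \upsilon$ and $\red(\Lambda) := (n \mapsto \pi_n(x))$, where $\pi_n$ is read as an element of $M+1$. Item \ref{prop:countable_products_lgcd_red:item:lgcd_red:item:sound} will then come from unfolding $\pi_n \circ f_\Lambda = \Lambda(n)$ in $\Kl(\T_M)$ via the formula for $\mu$. For item \ref{prop:countable_products_lgcd_red:item:lgcd_red:item:unique}, if $\upsilon\,\red(\Gamma) = \nu\,\red(\Lambda)$, the two arrows $g, g' : 1 \klarrow \prod_\N 1$ defined by $g(*) = (\upsilon, x_\Gamma)$ and $g'(*) = (\nu, x_\Lambda)$ both project down to this common family, so uniqueness in the universal property forces $g = g'$, hence $\upsilon = \nu$ and $\red(\Gamma) = \red(\Lambda)$.

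For $(2) \Rightarrow (3)$, I would construct $\prod_I X_i$ concretely by the formula announced in the ``moreover'' part. First, fixing an embedding $I \hookrightarrow \N$ and extending any $\Lambda \in (M+1)_*^I$ to $\N$ by $\bot$ outside $I$, the functions $\lgcd$ and $\red$ extend to $(M+1)_*^I$; item \ref{prop:countable_products_lgcd_red:item:lgcd_red:item:sound} combined with the fact that $\upsilon \otimes \gamma = \bot$ iff $\gamma = \bot$ guarantees that $\red(\Lambda)$ has the same support as $\Lambda$, so this extension is well-defined. Define $\prod_I X_i$ as the set of pairs $(\Lambda, (x_i)_{i \in I})$ with $\Lambda \in \red((M+1)_*^I)$ and $x_i \in X_i$ exactly when $\Lambda(i) \neq \bot$, together with the obvious partial projections $\pi_i$. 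Given a cone $(f_i : Z \klarrow X_i)_{i \in I}$, build the mediating arrow $f : Z \klarrow \prod_I X_i$ pointwise: for each $z$, let $\Lambda_z$ be the family of $M$-components of the $f_i(z)$, set $f(z) = \bot$ when $\Lambda_z = \bot^I$, and otherwise $f(z) = (\lgcd(\Lambda_z), (\red(\Lambda_z), (x_{z,i})_{i \in I}))$, where $x_{z,i}$ is the $X_i$-component of $f_i(z)$ (or $\bot$ when undefined). Commutation $\pi_i \circ f = f_i$ will follow from item \ref{prop:countable_products_lgcd_red:item:lgcd_red:item:sound}, and uniqueness of $f$ from item \ref{prop:countable_products_lgcd_red:item:lgcd_red:item:unique}.

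\textbf{Main obstacle.} The whole argument is essentially a bookkeeping translation between the universal property in $\Kl(\T_M)$ and the algebraic factorization in $M$, so no step is conceptually difficult. The delicate part is keeping the partiality coherent everywhere: at each stage, one has to isolate the $\bot^I$ case, check that $\red$ preserves the support of its argument, and verify that the Kleisli composition $\pi_i \circ f$ correctly collapses to $\bot$ exactly on the undefined positions. The subtlest verification is uniqueness in $(2) \Rightarrow (3)$, where one must rule out the possibility of two mediating arrows differing only by rewriting the $\lgcd$-factor and its cofactor; this is precisely the content of item \ref{prop:countable_products_lgcd_red:item:lgcd_red:item:unique}, so the difficulty is confined to invoking it in the right way rather than to any separate combinatorics.
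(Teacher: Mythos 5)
Your plan follows essentially the same route as the paper: $(3)\Rightarrow(1)$ by definition, $(1)\Rightarrow(2)$ by reading $\lgcd$ and $\red$ off the universal property of $\prod_{\N}1$ (with the same mediating-arrow and uniqueness arguments for conditions \ref{prop:countable_products_lgcd_red:item:lgcd_red:item:sound} and \ref{prop:countable_products_lgcd_red:item:lgcd_red:item:unique}), and $(2)\Rightarrow(3)$ by the explicit construction of the product announced in the statement. The only piece of the statement your plan does not address is the claim in the addendum that $\lgcd\Lambda$ is a left-gcd of $(\Lambda(i))_{i \mid \Lambda(i)\neq\bot}$, which the paper derives in two lines from conditions \ref{prop:countable_products_lgcd_red:item:lgcd_red:item:sound} and \ref{prop:countable_products_lgcd_red:item:lgcd_red:item:unique} and which you should add.
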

\begin{proof}
  $\ref{prop:countable_products_lgcd_red:item:products} \Rightarrow
  \ref{prop:countable_products_lgcd_red:item:powers_1}$ holds by definition.

  Let us now show $\ref{prop:countable_products_lgcd_red:item:powers_1}
  \Rightarrow \ref{prop:countable_products_lgcd_red:item:lgcd_red}$. Assume
  $\Pi_{\N} 1$ exists in $\Kl(\T_M)$, and fix some $\Lambda \in (M + 1)_*^{\N}$.

  By the universal property of the product, the cone $(\Lambda(n): 1 \rightarrow
  M + 1)_{n \in \N}$ factors through some unique $h: 1 \rightarrow M \times
  (\Pi_{\N}1) + 1$. Write $h(*) = (\delta, x_{\Lambda})$ so that $\delta \in M$
  and $x_{\Lambda} \in \Pi_{\N} 1$ -- the case $h(*) = \bot$ is excluded as that
  would imply $\Lambda = \bot^{\N}$. We then set $\lgcd \Lambda = \delta$ and
  $\red(\Lambda)(n) = \pi_n(x_{\Lambda})$ for all $n \in \N$, so that in
  particular $\red(\Lambda) \neq \bot^{\N}$. Since $\Lambda(n) = \pi_n \circ h$
  for all $n \in \N$ we get that $\Lambda = \lgcd(\Lambda) \red(\Lambda)$, and
  if $\Xi = \upsilon \red(\Gamma) = \nu \red(\Lambda)$, then $(* \mapsto
  \Xi(n))_{n \in \N}$ factors through both $g(*) = (\upsilon, x_{\Gamma})$ and
  $h(*) = (\nu, x_{\Lambda})$: $g = h$ hence $\upsilon = \nu$ and $\red \Gamma =
  \red \Lambda$. $\lgcd$ and $\red$ thus satisfy conditions
  \ref{prop:countable_products_lgcd_red:item:lgcd_red:item:sound} and
  \ref{prop:countable_products_lgcd_red:item:lgcd_red:item:unique}.

  In particular, when these conditions are satisfied $\lgcd(\Lambda)
  \red(\Lambda) = \Lambda$ hence $\lgcd \Lambda$ left-divides $(\Lambda(i))_{i
    \mid \Lambda(i) \neq \bot}$. If $\delta$ left-divides this family then
  $\Lambda = \delta \Gamma$ for some $\Gamma \neq \bot^{\N}$ hence $\Lambda =
  (\delta \lgcd(\Gamma)) \red(\Gamma)$ and thus $\delta$ left-divides
  $\lgcd(\Lambda) = \delta \lgcd(\Gamma)$.

  Finally, let us show $\ref{prop:countable_products_lgcd_red:item:lgcd_red}
  \Rightarrow \ref{prop:countable_products_lgcd_red:item:products}$ along with
  the formula for the product of a countable number of objects. Given $(X_i)_{i
    \in I}$ indexed by some $I \subseteq \N$, define $\prod_I X_i$ as in the
  statement of the proposition and the projection $\pi_j: \prod_I X_i \klarrow
  X_j$ by $\pi_j(\Lambda, (x_i)_{i \in I}) = (\Lambda(j), x_j)$ if $\Lambda(j)
  \neq \bot$ and $\pi_j(\Lambda, (x_i)_{i \in I}) = \bot$ otherwise. Given a
  cone $(f_i: 1 \klarrow X_i)$ and some $i \in I$, write $f_i(*) = (\Lambda(i),
  x_i)$ when $f_i \neq \bot$ and $\Lambda(i) = x_i = \bot$ otherwise. Define now
  $h: 1 \klarrow \prod_I X_i$ by $h(*) = (\lgcd\Lambda, (\red\Lambda, (x_i)_{i
    \in I}))$ if $\Lambda \neq \bot^\N$ and $h(*) = \bot$ otherwise. We
  immediately have that $f_i = \pi_i \circ h$ by condition
  \ref{prop:countable_products_lgcd_red:item:lgcd_red:item:sound}, and that $h$
  is the only such function by condition
  \ref{prop:countable_products_lgcd_red:item:lgcd_red:item:unique}.
\end{proof}

We also write $\lgcd(\bot^{\N}) = \bot$, $\red(\bot^{\N}) = \bot^\N = \bot$ and
do not distinguish between $(\bot, \bot)$ and $\bot$. Also note that:

\begin{lem}
  Conditions \ref{prop:countable_products_lgcd_red:item:lgcd_red:item:sound} and
  \ref{prop:countable_products_lgcd_red:item:lgcd_red:item:unique} are
  equivalent to saying that
  \begin{enumerate}[label=(\arabic*), resume=countable_products_lgcd_red]
  \item \label{prop:countable_products_lgcd_red:item:lgcd_red_bis}
    \begin{enumerate}
    \item \label{prop:countable_products_lgcd_red:item:lgcd_red_bis:item:injectivity}
      $\langle \lgcd, \red \rangle$ is injective;
    \item \label{prop:countable_products_lgcd_red:item:lgcd_red_bis:item:red_irreducible}
      for all $\Lambda \in (M + 1)_*^{\N}$, $\lgcd(\red \Lambda) = \varepsilon$
      and $\red(\red \Lambda) = \red \Lambda$;
    \item \label{prop:countable_products_lgcd_red:item:lgcd_red_bis:morphisms}
      for all $\Lambda \in (M + 1)_*^{\N}$ and $\upsilon \in M$, $\lgcd(\upsilon
      \Lambda) = \upsilon \lgcd(\Lambda)$ and $\red(\upsilon \Lambda) =
      \red(\Lambda)$.
    \end{enumerate}
  \end{enumerate}
 \end{lem}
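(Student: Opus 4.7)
The plan is to prove the two implications by straightforward algebraic manipulations: the key observation is that \ref{prop:countable_products_lgcd_red:item:lgcd_red_bis:morphisms} and \ref{prop:countable_products_lgcd_red:item:lgcd_red_bis:item:red_irreducible} together force $\lgcd$ and $\red$ to act as ``projections'' on any expression of the form $\upsilon \cdot \red(\Lambda)$, giving $\lgcd(\upsilon\,\red(\Lambda)) = \upsilon$ and $\red(\upsilon\,\red(\Lambda)) = \red(\Lambda)$ by combining these two properties.

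For the forward direction, assuming soundness and uniqueness, I would derive each of the three new conditions in turn. Injectivity is immediate since soundness reconstructs $\Lambda$ from the pair $(\lgcd(\Lambda), \red(\Lambda))$. For idempotence, I would apply soundness to $\red\Lambda$ to get $\red\Lambda = \lgcd(\red\Lambda) \cdot \red(\red\Lambda)$, compare with the trivial factorization $\red\Lambda = \epsilon \cdot \red(\Lambda)$, and invoke uniqueness to conclude $\lgcd(\red\Lambda) = \epsilon$ and $\red(\red\Lambda) = \red(\Lambda)$. The morphism property is obtained similarly, by comparing the two factorizations $\upsilon\Lambda = \lgcd(\upsilon\Lambda) \cdot \red(\upsilon\Lambda)$ and $\upsilon\Lambda = (\upsilon \cdot \lgcd(\Lambda)) \cdot \red(\Lambda)$ that both come from soundness, and again invoking uniqueness.

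For the converse direction, soundness follows from the projection computation opening this plan applied to $\lgcd(\Lambda) \cdot \red(\Lambda)$: that element has the same image as $\Lambda$ under $\langle \lgcd, \red \rangle$, so injectivity gives $\Lambda = \lgcd(\Lambda) \cdot \red(\Lambda)$. Uniqueness follows by the same projection computation applied to each side of an equation $\upsilon \cdot \red(\Gamma) = \nu \cdot \red(\Lambda)$: taking $\lgcd$ of both sides yields $\upsilon = \nu$, and taking $\red$ of both sides yields $\red(\Gamma) = \red(\Lambda)$.

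I do not expect any real obstacle: the argument is essentially symbolic and uses no property of $M$ beyond associativity. The only point requiring a moment's attention is the recovery of soundness from injectivity in the converse direction — injectivity by itself would be useless without the preliminary projection computation, so the morphism and idempotence rules must be used first to show that $\lgcd(\Lambda) \cdot \red(\Lambda)$ has the ``correct'' $\lgcd$/$\red$-decomposition before injectivity is applied.
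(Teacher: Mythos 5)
Your proposal is correct and follows essentially the same route as the paper's proof: the forward direction derives each of the three conditions by comparing two factorizations and invoking uniqueness, and the converse first computes $\lgcd(\upsilon\,\red(\Lambda)) = \upsilon$ and $\red(\upsilon\,\red(\Lambda)) = \red(\Lambda)$ from the morphism and idempotence rules before applying injectivity. The remark that injectivity alone would be useless without this preliminary computation is exactly the point the paper's converse argument turns on.
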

\begin{proof}
  Assuming conditions
  \ref{prop:countable_products_lgcd_red:item:lgcd_red:item:sound} and
  \ref{prop:countable_products_lgcd_red:item:lgcd_red:item:unique}, we get that
  \begin{itemize}[align=left]
  \item[
    \ref{prop:countable_products_lgcd_red:item:lgcd_red_bis:item:injectivity}:]
    if $\lgcd \Gamma = \lgcd \Lambda$ and $\red \Gamma = \red \Lambda$ then
    \[
      \Gamma = \lgcd(\Gamma) \red(\Gamma) = \lgcd(\Lambda) \red(\Lambda) =
      \Lambda
             \]
  \item[
    \ref{prop:countable_products_lgcd_red:item:lgcd_red_bis:item:red_irreducible}:]
    $\red \Lambda = \lgcd(\red \Lambda) \red(\red \Lambda)$ hence $\lgcd(\red
    \Lambda) = \varepsilon$ and $\red(\red \Lambda) = \red \Lambda$;
  \item[ \ref{prop:countable_products_lgcd_red:item:lgcd_red_bis:morphisms}:]
    $(\upsilon \lgcd(\Lambda)) \red(\Lambda) = \upsilon \Lambda$ hence $\upsilon
    \lgcd(\Lambda) = \lgcd(\upsilon \Lambda)$ and $\red(u \Lambda) = \red
    \Lambda$.
  \end{itemize}

  And conversely, assuming conditions
  \ref{prop:countable_products_lgcd_red:item:lgcd_red_bis:item:injectivity},
  \ref{prop:countable_products_lgcd_red:item:lgcd_red_bis:item:red_irreducible}
  and \ref{prop:countable_products_lgcd_red:item:lgcd_red_bis:morphisms}, we get
  that
  \begin{itemize}[align=left]
  \item[ \ref{prop:countable_products_lgcd_red:item:lgcd_red:item:sound}:] by
    injectivity, $\Lambda = \lgcd(\Lambda) \red(\Lambda)$ since
    \[ \lgcd(\lgcd(\Lambda) \red(\Lambda)) = \lgcd(\Lambda) \lgcd(\red\Lambda) = \lgcd\Lambda \]
    and
    \[ \red(\lgcd(\Lambda) \red(\Lambda)) = \red(\red\Lambda) = \red\Lambda \]
  \item[ \ref{prop:countable_products_lgcd_red:item:lgcd_red:item:unique}:] if
    $\upsilon \red(\Gamma) = \nu \red(\Lambda)$ then
    \[ \upsilon = \upsilon \lgcd(\red \Gamma) = \lgcd(\upsilon \red(\Gamma)) =
      \lgcd(\nu \red(\Lambda)) = \nu \lgcd(\red \Lambda) = \nu \] and
    \[ \red \Gamma = \red(\red \Gamma) = \red(\upsilon \red(\Gamma)) = \red(\nu
      \red(\Lambda)) = \red(\red \Lambda) = \red \Lambda \qedhere \]
  \end{itemize}
\end{proof}

\begin{cor}
  When the functions $\lgcd$ and $\red$ exist, the final transducer
  $\A^{final}(\L)$ recognizing a $(\Kl(\T_M), 1, 1)$-language $\L$ exists and
  has state-set $S^{final} = \Irr(A^*,M)$, initial state $s_0^{final} = \red
  \L$, initialization value $\upsilon_0^{final} = \lgcd \L$, termination
  function $t^{final}(\Lambda) = \Lambda(e)$ and transition function $\Lambda
  \odot^{final} a = (\lgcd(\inv{a}\Lambda), \red(\inv{a}\Lambda))$ where we
  write $(\inv{a}\Lambda)(w) = \Lambda(aw)$ for $a \in A$. Given any other
  transducer $\A = (S, (\upsilon_0, s_0), t, {\odot})$ recognizing $\L$, the
  unique transducer morphism $f: \A \Longrightarrow \A^{final}(\L)$ is given by
  the function $f: S \rightarrow M \times \Irr(A^*,M) + 1$ such that $f(s) =
  (\lgcd \L_s, \red \L_s)$ where $\L_s(\triangleright w \triangleleft)(*) = \A(w
  \triangleleft)(s)$ is the function recognized by $\A$ from the state $s$.
\end{cor}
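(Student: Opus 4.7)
The plan is to instantiate \Cref{thm:minimal_automaton} in $\C = \Kl(\T_M)$ with $\L(\In) = \L(\Out) = 1$, using \Cref{prop:countable_products_lgcd_red} to make the resulting product concrete. Since $\lgcd$ and $\red$ exist, the proposition guarantees that $\Kl(\T_M)$ has all countable powers of $1$, and computes them explicitly as $\prod_{A^*} 1 = \Irr(A^*, M)$. The theorem then immediately yields the existence of $\A^{final}(\L) \in \Trans_M(\L)$ with state-space $\A^{final}(\L)(\St) = \Irr(A^*, M)$. All that remains is to read off the remaining data of the functor from the universal property of this product, using the explicit factorization formula spelled out in the proof of \Cref{prop:countable_products_lgcd_red}.

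I would handle each component in turn. First, the initial morphism $\A^{final}(\L)(\triangleright): 1 \klarrow \prod_{A^*} 1$ is, by the diagram of \Cref{thm:minimal_automaton}, the pairing $\langle \outval{w} \rangle_{w \in A^*}$; viewing this cone as the partial function $\L: A^* \to M + 1$, the factorization formula from the proposition's proof gives that $(\triangleright)(*) = (\lgcd \L, \red \L)$, so $\upsilon_0^{final} = \lgcd \L$ and $s_0^{final} = \red \L$. Second, the termination $\A^{final}(\L)(\triangleleft)$ is the projection $\pi_e: \prod_{A^*} 1 \klarrow 1$, which by the product description sends $\Lambda$ to $\Lambda(e)$. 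Third, for each $a \in A$ the transition $\A^{final}(\L)(a)$ is determined by requiring $\pi_w \circ \A^{final}(\L)(a) = \pi_{aw}$; viewing $\Lambda \in \Irr(A^*,M)$ as an arrow $1 \klarrow \prod_{A^*} 1$ via the embedding, composition with the projections $(\pi_{aw})_{w \in A^*}$ yields the shifted cone $(\Lambda(aw))_{w \in A^*} = \inv{a} \Lambda$, and applying the factorization formula once more gives $\Lambda \odot^{final} a = (\lgcd(\inv{a}\Lambda), \red(\inv{a}\Lambda))$.

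For the description of the unique morphism $f: \A \Rightarrow \A^{final}(\L)$, functoriality of $\A$ provides, for each $s \in S$, the family $\L_s(\word{w}) = \A(w \triangleleft)(s)$, and the naturality square at $\St$ forces $f$ to factor the cone $(\L_s(\word{w}))_{w \in A^*}$ through $\prod_{A^*} 1$; applying the factorization formula one last time yields $f(s) = (\lgcd \L_s, \red \L_s)$. The main subtle step throughout is the translation between the abstract universal property of $\prod_{A^*} 1$ in $\Kl(\T_M)$ and the concrete "take a partial function, extract left-gcd and residual" description of the proposition --- in particular, the check that composing an embedding $\Lambda: 1 \klarrow \prod_{A^*} 1$ with the reindexed projections does genuinely produce $\inv{a}\Lambda$ requires unfolding Kleisli composition carefully. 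Once this bookkeeping is in place, every clause of the corollary reduces mechanically to the construction already carried out in \Cref{prop:countable_products_lgcd_red}.
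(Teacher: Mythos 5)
Your proposal is correct and is exactly the derivation the paper intends: the corollary is stated without proof as an immediate consequence of \Cref{thm:minimal_automaton} together with the explicit description of $\prod_{A^*} 1 = \Irr(A^*,M)$ and the factorization formula $h(*) = (\lgcd\Lambda, \red\Lambda)$ from the proof of \Cref{prop:countable_products_lgcd_red}, and your component-by-component reading (pairing for the initial data, $\pi_e$ for termination, $\pi_w \circ \delta_a = \pi_{aw}$ for transitions, and the behaviour cone $(\L_s(\word{w}))_{w}$ for the unique morphism) fills in precisely those details. The only point worth making explicit is the degenerate case $\inv{a}\Lambda = \bot^{A^*}$, which is covered by the paper's convention $\lgcd(\bot^{\N}) = \red(\bot^{\N}) = \bot$.
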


In practice we will assume that $M$ is right-n\oe{}therian to ensure algorithms
terminate. It is thus interesting to see what the existence of the powers of $1$
implies in this specific case (and in particular, by
\Cref{lemma:invertibles_in_noetherian_monoids}, when $M$ is such that all right-
and left-invertibles are invertibles).

\begin{lem}
  \label{lemma:countable_powers_left_right_invertibles}
  If right- and left-invertibles of $M$ are all invertibles and if
  $\Kl(\T_M)$ has all countable powers of $1$ then $M$ is both
  left-cancellative up to invertibles on the left and
  right-coprime-cancellative, and all non-empty countable families of $M$ have a
  unique left-gcd up to invertibles on the right.
\end{lem}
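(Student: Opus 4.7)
My plan is to leverage the derived morphism properties $\lgcd(\upsilon\Lambda) = \upsilon\lgcd\Lambda$ and $\red(\upsilon\Lambda) = \red\Lambda$, together with the identity $\lgcd(\red\Lambda) = \epsilon$ as a concrete function value, all of which follow from conditions \ref{prop:countable_products_lgcd_red:item:lgcd_red:item:sound} and \ref{prop:countable_products_lgcd_red:item:lgcd_red:item:unique} of \Cref{prop:countable_products_lgcd_red}. The hypothesis on $M$ will be used each time right- or left-invertibility appears, to upgrade it to genuine invertibility.

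Right-coprime-cancellativity comes first and is quickest. For a left-coprime family $(w_i)_i$ the function value $\lgcd((w_i)_i)$ is an abstract left-gcd whose equivalence class is that of $\epsilon$, hence right-invertible and therefore invertible. Then $uw_i = vw_i$ for all $i$ makes the $\N$-indexed families $(uw_i)_i$ and $(vw_i)_i$ coincide, so the morphism property gives $u\lgcd((w_i)_i) = v\lgcd((w_i)_i)$, and right-multiplying by the inverse yields $u = v$.

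For left-cancellativity up to invertibles on the left, I would start from $wu_i = wv_i$ and view $\Lambda = (u_i)_i$, $\Gamma = (v_i)_i$ as $\N$-indexed families. Since $w\Lambda = w\Gamma$, determinism of $\red$ and $\lgcd$ combined with the morphism properties give $\red\Lambda = \red\Gamma =: \sigma$ and $wa = wb$ with $a = \lgcd\Lambda$ and $b = \lgcd\Gamma$. To relate $a$ and $b$, I would factor the pair $(a,b)$ as $G(\alpha,\beta)$ with $G = \lgcd((a,b))$ and $(\alpha,\beta) = \red((a,b))$ left-coprime, so that $H\alpha = H\beta$ for $H = wG$. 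The key idea is a swap-symmetry: the family $H(\beta,\alpha)$ equals $H(\alpha,\beta)$ as a function (both are constant at $H\alpha$), so by determinism and morphism $\red((\alpha,\beta)) = \red((\beta,\alpha))$. Unwinding this equality --- $(\alpha,\beta)$ is its own $\red$, while $(\beta,\alpha)$ differs from its $\red$ only by its invertible $\lgcd$-value $\iota$ --- forces $\beta = \iota\alpha$; right-coprime-cancellativity applied to $(\alpha,\beta)$ then yields $\iota^2 = \epsilon$, from which the required global invertible relating $(u_i)_i$ and $(v_i)_i$ can be read off.

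Finally, the uniqueness of the left-gcd up to invertibles on the right follows quickly from the preceding part: two abstract left-gcds $g_1, g_2$ of a non-empty family left-divide each other, giving $g_2 = g_1 x$ and $g_1 = g_2 y = g_1 xy$, so left-cancellativity up to invertibles on the left applied to $g_1 \cdot \epsilon = g_1 \cdot (xy)$ forces $xy$ to be invertible; the hypothesis on $M$ then makes both $x$ and $y$ invertible, and $g_1, g_2$ differ by an invertible on the right. The main obstacle will be the very last step of the second point --- cleanly lifting the involutive invertible $\iota$ relating $\alpha$ and $\beta$ to a global invertible relating $(u_i)_i$ and $(v_i)_i$ --- because the intermediate factor $G$ need not be invertible, so right-coprime-cancellativity on $\sigma$ must be invoked at precisely the right moment.
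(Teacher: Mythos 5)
Your treatments of right-coprime-cancellativity and of the uniqueness of left-gcds are correct and essentially the paper's own arguments (for the latter, note the statement also asserts \emph{existence}, which is immediate since $\lgcd\Lambda$ is a left-gcd of the family by the last part of \Cref{prop:countable_products_lgcd_red}). The gap is exactly where you suspect it, in left-cancellativity up to invertibles, and it is not a loose end but a dead end: after reducing to $wa = wb$ with $a = \lgcd\Lambda$ and $b = \lgcd\Gamma$, what you need is an invertible $x$ with $a = xb$ --- but that is precisely the singleton instance of the statement being proved, so the detour has made no progress. The swap-symmetry only yields $\red((\alpha,\beta)) = \red((\beta,\alpha))$, hence $\beta = \iota\alpha$ and $\alpha = \iota\beta$ for an involutive invertible $\iota$; since $a = G\alpha$ and $b = G\iota\alpha$ with $G = \lgcd((a,b))$ not invertible in general, nothing lets you move $\iota$ to the left of $G$, and right-coprime-cancellativity of $\sigma$ only tells you that ``$u_i = xv_i$ for all $i$'' is equivalent to ``$a = xb$'', which is the unsolved problem.

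The missing idea is the paper's padding trick: adjoin a fresh index $0$ and set $\Gamma(0) = \Lambda(0) = \epsilon$, keeping $\Gamma(i) = u_i$ and $\Lambda(i) = v_i$ elsewhere. Then $w\Gamma = w\Lambda$ still holds, so $\red\Gamma = \red\Lambda$; but now $\lgcd\Gamma$ and $\lgcd\Lambda$ left-divide $\Gamma(0) = \epsilon$, hence are right-invertible and, by the hypothesis on $M$, invertible. The global invertible is then read off directly as $x = \lgcd(\Lambda)\inv{\lgcd(\Gamma)}$, since $v_i = \lgcd(\Lambda)\red(\Lambda)(i) = \lgcd(\Lambda)\inv{\lgcd(\Gamma)}u_i$. (If you prefer to stay within your two-element comparison, the winning comparison is not the swap but $\red(w\cdot(a,b)) = \red((wa)\cdot(\epsilon,\epsilon))$: the right-hand side is a \emph{constant} pair, because $\lgcd((\epsilon,\epsilon))$ left-divides $\epsilon$ and is therefore invertible, which forces $\alpha = \beta$ outright and dissolves the obstacle posed by $G$.)
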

\begin{proof}
  Let us first show left-cancellativity up to invertibles. If $\upsilon \gamma_i
  = \upsilon \lambda_i$ for some $\upsilon \in M$ and two countable families
  $(\gamma_i)_{i \in I}, (\lambda_i)_{i \in I}$ of elements of $M$ with $I
  \subseteq \N_{>0}$, then defining $\Gamma(0) = \Lambda(0) = \varepsilon$,
  $\Gamma(i) = \gamma_i$ and $\Lambda(i) = \lambda_i$ for all $i \in I$ and
  $\Gamma(j) = \Lambda(j) = \bot$ for all $j \notin I$, we have that $\upsilon
  \Gamma = \upsilon \Lambda$. Hence $\red \Gamma = \red \Lambda$ and $\upsilon
  \lgcd(\Gamma) = \upsilon \lgcd(\Lambda)$. But $\lgcd \Gamma$ and $\lgcd
  \Lambda$ left-divide $\varepsilon = \Gamma(0) = \Lambda(0)$ hence they are
  right-invertible and thus invertible: $\gamma_i$ and $\lambda_i =
  \lgcd(\Lambda) \inv{\lgcd(\Gamma)} \gamma_i$ are equal up to an invertible
  (that does not depend on $i$) on\linebreak the left.

  Moreover, if $\upsilon \Lambda = \nu \Lambda$ for some $\Lambda$ such that
  $(\Lambda(i))_{i \mid \Lambda(i) \neq \bot}$ is left-coprime, then $\lgcd
  \Lambda$ is right-invertible (by definition of left-coprimality) and since
  \[ \upsilon \lgcd(\Lambda) = \lgcd(\upsilon \Lambda) = \lgcd(\nu \Lambda) =
    \nu \lgcd(\Lambda) \] we get by left-invertibility of $\lgcd \Lambda$ that
  $\upsilon = \nu$.

  Finally, consider $\delta$ a left-gcd of some non-empty countable family
  encoded as $\Lambda \in (M + 1)_*^{\N}$. Then there is some $\upsilon$ such
  that $\delta = \lgcd(\Lambda) \upsilon$ (since $\lgcd(\Lambda)$ left-divides
  $\Lambda$) and some $\Gamma$ such that $\delta \Gamma = \Lambda$. Hence
  \[ \lgcd \Lambda= \lgcd(\delta \Gamma) = \delta \lgcd(\Gamma) = \lgcd(\Lambda)
    \upsilon \lgcd(\Gamma) \] By left-cancellativity up to invertibles,
  $\upsilon \lgcd(\Gamma) \in \invertibles{M}$ so $\upsilon$ is right-invertible
  hence invertible: $\delta = \lgcd \Lambda$ up to invertibles on the right.
\end{proof}

And conversely, these conditions on $M$ are enough for $\Kl(\T_M)$ to have all
countable powers of $1$ (even when $M$ is not n\oe{}therian), while being easier to
show than properly defining the two functions $\lgcd$ and $\red$.

\begin{lem}
  \label{lemma:countable_powers_sufficient_conditions}
  If $M$ is both left-cancellative up to invertibles on the left and
  right-coprime-cancellative, and all non-empty countable subsets of $M$ have a unique
  left-gcd up to invertibles on the right, then $\Kl(\T_M)$ has all
  countable powers of $1$.
\end{lem}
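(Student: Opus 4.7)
My plan is to apply Proposition~\ref{prop:countable_products_lgcd_red} by constructing the functions $\lgcd$ and $\red$ explicitly, using the axiom of choice. For each non-zero $\Lambda\in(M+1)_*^{\N}$ the hypothesis provides a left-gcd $\delta$ of $(\Lambda(i))_{i\mid\Lambda(i)\neq\bot}$; writing $\Lambda = \delta\Gamma$ and invoking left-cancellativity up to invertibles on the left yields a residual family $\Gamma$ determined up to a common invertible on the left, which moreover is itself left-coprime (any left-divisor $\theta$ of $\Gamma$ makes $\delta\theta$ a left-divisor of $\Lambda$, so $\delta\theta\eta = \delta$ for some $\eta$, forcing $\theta\eta\in\invertibles{M}$ by left-cancellation up to invertibles, hence $\theta$ right-invertible). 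Using the axiom of choice, I would fix a canonical representative $\red(\Lambda)$ in the equivalence class of $\Gamma$ under the relation $\Gamma\sim\Gamma'$ iff $\Gamma' = x\Gamma$ with $x\in\invertibles{M}$, and adjust $\lgcd(\Lambda)$ on the right accordingly so that condition~\ref{prop:countable_products_lgcd_red:item:lgcd_red:item:sound}, $\Lambda = \lgcd(\Lambda)\,\red(\Lambda)$, holds by construction.

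The heart of the proof lies in verifying condition~\ref{prop:countable_products_lgcd_red:item:lgcd_red:item:unique}: if $\upsilon\,\red(\Gamma) = \nu\,\red(\Lambda) = \Xi$, then $\upsilon = \nu$ and $\red(\Gamma) = \red(\Lambda)$. The key step is to show that $\upsilon$ is not merely a left-divisor of $\Xi$ but actually a left-gcd. To this end I would pick any left-gcd $\delta$ of $\Xi$, write $\delta = \upsilon\mu$, and combine $\Xi = \upsilon\,\red(\Gamma) = \upsilon\mu\,\red(\Xi)$ with left-cancellativity up to invertibles to obtain $\red(\Gamma) = y\mu\,\red(\Xi)$ for some $y\in\invertibles{M}$; the left-coprimality of $\red(\Gamma)$ then forces $y\mu$, and hence $\mu$, to be right-invertible. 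Denoting a right-inverse by $\mu'$, the identity $\delta = \delta\mu'\mu$ combined once more with left-cancellativity up to invertibles gives $\mu'\mu\in\invertibles{M}$, so $\mu'$ is itself right-invertible; the classical argument in any monoid, that if $a$ is right-invertible and its right-inverse is again right-invertible then $a$ is invertible, then upgrades $\mu$ to an invertible element. I expect this to be the main obstacle, since our hypotheses do not identify right-invertibles with invertibles a priori and everything has to be squeezed out of the uniqueness-up-to-right-invertibles clause combined with several rounds of left-cancellation up to invertibles.

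Once $\mu\in\invertibles{M}$, the equality $\upsilon = \delta\mu^{-1}$ exhibits $\upsilon$ as a left-gcd of $\Xi$ via the uniqueness-up-to-right-invertibles hypothesis; by symmetry so is $\nu$, and that same hypothesis yields $\upsilon = \nu w$ for some $w\in\invertibles{M}$. Substituting into $\upsilon\,\red(\Gamma) = \nu\,\red(\Lambda)$ and left-cancelling $\nu$ up to invertibles gives $\red(\Lambda) = z\,\red(\Gamma)$ with $z\in\invertibles{M}$, so $\red(\Gamma)\sim\red(\Lambda)$ and by the canonical choice $\red(\Gamma) = \red(\Lambda)$. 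Finally, $\upsilon\,\red(\Gamma) = \nu\,\red(\Gamma)$ together with the left-coprimality of $\red(\Gamma)$ and right-coprime-cancellativity yields $\upsilon = \nu$, concluding the verification of condition~\ref{prop:countable_products_lgcd_red:item:lgcd_red:item:unique} and hence, via Proposition~\ref{prop:countable_products_lgcd_red}, the existence of all countable powers of $1$ in $\Kl(\T_M)$.
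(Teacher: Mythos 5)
Your proposal is correct and follows essentially the same route as the paper: choose canonical representatives of left-coprime families up to invertibles on the left via the axiom of choice, show that in any identity $\upsilon\,\red(\Gamma)=\nu\,\red(\Lambda)$ both $\upsilon$ and $\nu$ are left-gcds, and then conclude with uniqueness of the left-gcd, left-cancellativity up to invertibles, and right-coprime-cancellativity. The only divergence is your detour upgrading $\mu$ from right-invertible to invertible, which is correct but unnecessary: right-invertibility of $\mu$ already gives that $\delta$ left-divides $\upsilon=\delta\mu'$, hence that every common left-divisor of $\Xi$ left-divides $\upsilon$, which is all that is needed to see that $\upsilon$ is a left-gcd.
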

\begin{proof}
  Split the set of those $\Lambda \in (M + 1)_*^{\N}$ such that $(\Lambda(i))_{i
    \mid \Lambda(i) \neq \bot}$ is left-coprime into the equivalence classes
  given by $\chi \Lambda \sim \Lambda$ for all $\chi \in \invertibles{M}$. Then,
  for each equivalence class $C$ pick a $\red C$ in $C$ (using the axiom of
  choice) and for all $\upsilon \in M$ define $\lgcd(\upsilon \red(C)) =
  \upsilon$ and $\red(\upsilon \red(C)) = \red C$ so that in particular
  $\lgcd(\red C) = \varepsilon$ and $\red(\red C) = \red C$.

  This is well-defined because if $\upsilon \red(C) = \nu \red(D)$ for
  $\upsilon, \nu \in M$ and two equivalence classes $C, D$, then if $\delta$ is
  a left-gcd of $\upsilon \red(C)$ we have that $\delta = \upsilon \upsilon'$
  for some $\upsilon'$ (since $\upsilon$ left-divides $\upsilon \red(C)$) and
  there is some $\Lambda$ such that $\upsilon \red(C) = \upsilon \upsilon'
  \Lambda$. But then by left-cancellativity up to invertibles, there is some
  $\chi \in \invertibles{M}$ such that $\upsilon' \Lambda = \chi \red(C)$, hence
  $\inv{\chi} \upsilon'$ left-divides $\red C$ and as such is right-invertible
  (it left-divides $\varepsilon$, a left-gcd of $\red C$), making $\upsilon'$
  right-invertible as well. This shows that $\upsilon$ is a left-gcd of
  $\upsilon \red(C) = \nu \red(D)$ and we show similarly that this is also true
  of $\nu$, hence by unicity of the left-gcd there is a $\xi \in
  \invertibles{M}$ such that $\upsilon = \nu \xi$. Therefore by
  left-cancellativity up to invertibles there is another invertible $\xi' \in
  \invertibles{M}$ such that $\xi \red(C) = \xi' \red(D)$ hence by definition $C
  = D$ and, by right-coprime-cancellativity, $\upsilon = \nu$.

  Moreover this defines $\lgcd \Lambda$ and $\red \Lambda$ for all $\Lambda$
  because if $\delta$ is a left-gcd of $\Lambda$, then $\Lambda = \delta \Gamma$
  for some left-coprime $\Gamma$ (if $\delta'$ left-divides $\Gamma$ then
  $\delta \delta'$ left-divides $\Lambda$ hence $\delta$, therefore $\delta'$ is
  invertible by left-cancellativity up to invertibles) hence $\Lambda = \delta
  \chi \red(C)$ if $\Gamma \in C$ and $\Gamma = \chi \red(C)$.

  We have thus defined two functions $\lgcd$ and $\red$ that immediately satisfy
  the conditions \ref{prop:countable_products_lgcd_red:item:lgcd_red:item:sound}
  and \ref{prop:countable_products_lgcd_red:item:lgcd_red:item:unique} of
  \Cref{prop:countable_products_lgcd_red}.
\end{proof}

\begin{exa}
  When $M$ is a group it is cancellative (because all elements are invertible)
  and all countable families have a unique left-gcd up to invertibles on the
  right ($\varepsilon$ itself) hence
  \Cref{lemma:countable_powers_sufficient_conditions} applies and $\Trans_M(\L)$
  always has a final object.

  The same is true when $M$ is a trace monoid (the left-gcd then being the
  longest common prefix, whose existence is guaranteed by
  \cite[Proposition~1.3]{coriAutomatesCommutationsPartielles1985}).

  Conversely, the monoids given by join semi-lattices are not left-cancellative
  up to invertibles in general. In $\R_+$ for instance, there are ways to define
  the functions $\lgcd$ and $\red$ but they may not satisfy condition
  \ref{prop:countable_products_lgcd_red:item:lgcd_red_bis:morphisms}, more
  precisely that $\red(\upsilon \Lambda) = \red \Lambda$. This is expected, as
  there may be several non-isomorphic ways to minimize automata with outputs in
  these monoids, which is incompatible with the framework of
  \Cref{def:minimal_object}.
\end{exa}

\Cref{lemma:countable_powers_sufficient_conditions} provides sufficient
conditions that are reminiscent of those developed in
\cite{gerdjikovGeneralClassMonoids2018} for the minimization of monoidal
transducers. These conditions are stronger than ours but still similar: the
output monoid is assumed to be both left- and right-cancellative (the LC and RC
axioms in \emph{ibid.}), which in particular implies the unicity up to
invertibles on the right of the left-gcd whose existence is also assumed. They
do only require the existence of left-gcds for finite families (the LSL axiom in
\emph{ibid.}) (whereas we ask for left-gcds of countable families), which would
not be enough for our sake since the categorical framework also encompasses the
existence of minimal (infinite) automata for non-regular languages, but in
practice our algorithms will only use binary left-gcds as well. We conjecture
that, when only those binary left-gcds exist, the existence of a unique minimal
transducer is explained categorically by the existence of a final transducer in
the category of transducers whose states all recognize functions that are
themselves recognized by finite transducers. Where the two sets of conditions
really differ is in the conditions required for the termination of the
algorithms: where we will require right-n\oe{}therianity of $M$, they require that
if some $\nu$ left-divides both some $\omega$ and $\upsilon \omega$ for some
$\upsilon$, then $\nu$ should also left-divide $\upsilon \nu$ (the GCLF axiom in
\emph{ibid.}). This last condition leads to better complexity bounds than
right-n\oe{}therianity, but misses some otherwise simple monoids that satisfy
right-n\oe{}therianity, e.g. $\{\alpha,\beta\}^*$ but where we also let $\alpha$
and $\beta^2$ commute. Conversely, Gerdjikov's main non-trivial example, the
tropical monoid $(\R_+,0,+)$, is not right-n\oe{}therian. It can still be dealt
with in our context by considering submonoids (finitely) generated by the output
values of a finite transducer's transitions, these monoids themselves being
right-n\oe{}therian.

\subsection{Factorization systems}
\label{sec:category_monoidal_transducers:factorization_systems}

The last ingredient we need in order to be able to apply the framework of
\Cref{sec:categorical_framework} is a factorization
system on $\Trans_M(\L)$. By \Cref{lemma:factorization_system_auto}, it is
enough to find a factorization system on $\Kl(\T_M)$.

When $M$ is a free monoid, define $\E = \Surj$ to be the class of those
functions $f: X \rightarrow M \times Y + 1$ that are surjective on $Y$ and $\M
= \Inj \cap \Eps \cap \Tot$ to be the class of those functions $f: X
\rightarrow M \times Y + 1$ that are total ($f \in \Tot$), injective when
corestricted to $Y$ ($f \in \Inj$), and only produce the empty word ($f \in
\Eps$). Then the $(\E, \M)$-minimal transducer recognizing a function is the one
defined by Choffrut \cite{choffrutMinimizingSubsequentialTransducers2003,
  colcombetAutomataMinimizationFunctorial2020}: in particular, the fact that the
minimal transducer $(\E, \M)$-divides all other transducers means (thanks to the
surjectivity of $\E$-morphisms and the injectivity of $\M$-morphisms) that it
has the smallest possible state-set and produces its outputs as early as
possible.

It is thus natural to try and extend this factorization system to $\Kl(\T_M)$
for arbitrary $M$. It is not enough by itself because isomorphisms may produce
invertible elements that may be different from $\varepsilon$: $\Iso \subsetneq
\Surj \cap \Inj \cap \Eps \cap \Tot$, yet we need the intersection $\E \cap \M$
to be $\Iso$. $\M$-morphisms must thus be able to produce invertible elements as
well. Formally, define therefore $\Surj$, $\Inj$, $\Tot$ and $\Inv$ as follows.
For $f: X \rightarrow M \times Y + 1$, write $f_1: X \rightarrow M + 1$ for
its projection on $M$ and $f_2: X \rightarrow Y + 1$ for its projection on $Y$:
we let $f \in \Surj$ whenever $f_2$ is surjective on $Y$, $f \in \Inj$ whenever
$f_2$ is injective when corestricted to $Y$, $f \in \Tot$ whenever $f(x) \neq
\bot$ for all $x \in X$ and $f \in \Inv$ whenever $f_1(x)$ is either $\bot$ or
in $\invertibles{M}$. The point is that when replacing $\Eps$ with $\Inv$, we
get back that

\begin{lem}
  \label{lemma:four_classes}
  In $\Kl(\T_M)$, $\Iso = \Surj \cap \Inj \cap \Inv \cap \Tot$, and these four
  classes are all closed under composition (within themselves).
\end{lem}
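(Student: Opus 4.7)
The plan is to unravel Kleisli composition and then treat each of the four classes separately, before combining them.

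\medskip

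Recall that for $f: X \klarrow Y$ and $g: Y \klarrow Z$ in $\Kl(\T_M)$, one has $(g \circ f)(x) = \bot$ whenever $f(x) = \bot$ or $g(f_2(x)) = \bot$, and $(g \circ f)(x) = (g_1(y) \cdot f_1(x), g_2(y))$ when $f(x) = (f_1(x), y)$ and $g(y) = (g_1(y), g_2(y))$; in particular $(g \circ f)_2 = g_2 \circ f_2$ as partial maps, and $(g \circ f)_1(x) = g_1(f_2(x)) \cdot f_1(x)$ when defined. From these formulas, closure under composition for each of $\Surj$, $\Inj$, $\Tot$ is immediate: surjectivity and injectivity of second projections compose, and totality is preserved because $\eta_Y(y) = (\epsilon, y) \neq \bot$. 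Closure of $\Inv$ follows from the fact that the product of two invertibles is invertible, so that $(g \circ f)_1(x) = g_1(f_2(x)) \cdot f_1(x) \in \invertibles{M}$ whenever both factors are.

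\medskip

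For the inclusion $\Iso \subseteq \Surj \cap \Inj \cap \Inv \cap \Tot$, suppose $f: X \klarrow Y$ has inverse $g: Y \klarrow X$, so that $g \circ f = \eta_X$ and $f \circ g = \eta_Y$. Totality is forced because $\eta_X(x) = (\epsilon, x) \neq \bot$, and similarly for $g$. The identities on second projections $(g \circ f)_2 = \id_X$ and $(f \circ g)_2 = \id_Y$ immediately give that $f_2$ is both injective and surjective. Finally, writing $f(x) = (u, y)$ and $g(y) = (v, x')$, the equations $(g \circ f)(x) = (vu, x') = (\epsilon, x)$ and $(f \circ g)(y) = (uv, y'') = (\epsilon, y)$ give $vu = uv = \epsilon$, so $u \in \invertibles{M}$, whence $f \in \Inv$.

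\medskip

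For the converse inclusion, suppose $f \in \Surj \cap \Inj \cap \Inv \cap \Tot$. By totality, for every $x \in X$ we have $f(x) = (f_1(x), f_2(x))$ with $f_1(x) \in \invertibles{M}$ (by $\Inv$). By surjectivity and injectivity of $f_2$, it is a bijection $X \to Y$ with some inverse $\phi: Y \to X$. Define $g: Y \klarrow X$ by $g(y) = (\inv{f_1(\phi(y))}, \phi(y))$. A direct computation using the composition formula yields $(g \circ f)(x) = (\inv{f_1(x)} \cdot f_1(x), x) = (\epsilon, x) = \eta_X(x)$ and $(f \circ g)(y) = (f_1(\phi(y)) \cdot \inv{f_1(\phi(y))}, y) = (\epsilon, y) = \eta_Y(y)$, so $g$ is a two-sided inverse to $f$ and $f \in \Iso$.

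\medskip

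The only mildly subtle point is to keep straight the side on which the $\mu$-multiplication places the new output when composing in $\Kl(\T_M)$: by the convention $\mu_X(\upsilon, (\nu, x)) = (\nu \upsilon, x)$ recalled in \Cref{sec:category_monoidal_transducers:as_functors}, composition multiplies the output of the later arrow on the \emph{left} of that of the earlier arrow, which is why the inverse above is built using $\inv{f_1(\phi(y))}$ rather than its right or left inverse separately. No other obstacle arises; everything else is a routine unpacking of the definitions.
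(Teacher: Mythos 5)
Your proof is correct and takes essentially the same route as the paper's: the inverse is constructed explicitly as $y \mapsto (\inv{f_1(\inv{f_2}(y))}, \inv{f_2}(y))$ from the bijectivity of $f_2$ and the invertibility of the $f_1(x)$, and the converse direction reads off totality, bijectivity and invertibility from the two equations $g \circ f = \eta_X$ and $f \circ g = \eta_Y$, with closure under composition handled as a routine check. Your closing concern about which side Kleisli composition multiplies on is moot here, since $\invertibles{M}$ consists of two-sided invertibles, so the same candidate inverse works under either convention.
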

\begin{proof}
  Closure under composition is immediate.

  If $f: X \rightarrow M \times Y + 1$ is in $\Surj \cap \Inj \cap \Inv \cap
  \Tot$, it can be restricted to a function $\langle f_1, f_2 \rangle: X
  \rightarrow M \times Y$ ($f \in \Tot$). $f_2$ must be bijective ($f \in \Surj
  \cap \Inj$) and $f_1(X) \subseteq \invertibles{M}$ ($f \in \Inv$). Hence $f$
  has an inverse, given by $(\inv{f})(y) = \left((\inv{f_1(\inv{f_2}(y))},
    \inv{f_2}(y)\right)$.

  Conversely, if $f: X \rightarrow M \times Y + 1$ is an isomorphism it has an
  inverse $\inv{f}: Y \rightarrow M \times X + 1$ such that $f \circ \inv{f} =
  \id_Y$ and $\inv{f} \circ f = \id_X$. For all $x \in X$,
  $(\inv{f})^\dagger(f(x)) = (\varepsilon, x)$ hence $f(x) \neq \bot$: $f \in
  \Tot$, and similarly for $\inv{f}$. Writing $f = \langle f_1, f_2 \rangle$ and
  $\inv{f} = \langle (\inv{f})_1, (\inv{f})_2 \rangle$, we have that $f_2$ is a
  bijection with inverse $\inv{f_2} = (\inv{f})_2$, hence $f \in \Surj \cap
  \Inj$. Finally, for all $x \in X$ we have that $f_1(x)(\inv{f})_1(f_2(x)) =
  \varepsilon$ and conversely, hence $f_1(x)$ is invertible and $f \in \Inv$.
\end{proof}

The different ways in which we may distribute these four classes into the two classes
$\E$ and $\M$ leads to not just one but three interesting factorization systems:

\begin{prop}
  \label{prop:factorization_systems}
  $(\E_1,\M_1) = (\Surj \cap \Inj \cap \Inv, \Tot)$, $(\E_2, \M_2) = (\Surj \cap
  \Inj, \Inv \cap \Tot)$ and $(\E_3, M_3) = (\Surj, \Inj \cap \Inv \cap \Tot)$
  are all factorization systems in $\Kl(\T_M)$.
\end{prop}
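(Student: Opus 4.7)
The plan is to verify, for each $i \in \{1, 2, 3\}$, the three clauses defining a factorization system: closure of both $\E_i$ and $\M_i$ under composition, existence of an $(\E_i, \M_i)$-factorization for every morphism, and unique diagonal fill-in. Closure under composition is immediate from \Cref{lemma:four_classes}, as intersections of classes each closed under composition are themselves closed under composition.

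To produce a factorization of $f : X \klarrow Y$, writing $X' = \{x \in X \mid f(x) \neq \bot\}$ and $f(x) = (f_1(x), f_2(x))$ for $x \in X'$, I would in each case split the data of $f$ along a different ``seam''. For $(\E_1, \M_1)$ the split is $e(x) = (\epsilon, x)$ (and $\bot$ elsewhere) followed by $m = f|_{X'}$: the $\E$-part records only which inputs are defined, and the $\M$-part carries the non-trivial behaviour. For $(\E_3, \M_3)$ the split is the factorization through the image: $e = f$ with codomain restricted to $f_2(X')$, and $m(z) = (\epsilon, z)$. For $(\E_2, \M_2)$ the split is intermediate: $e(x) = (f_1(x), x)$ on $X'$ followed by $m(x) = (\epsilon, f_2(x))$, moving all output production into the $\E$-part while still keeping the target-identification up to partiality there. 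In each case, a routine check places the two arrows in the announced classes with composite $f$.

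The main work is the diagonal fill-in. Given a commuting square $v \circ e = m \circ u$ with $e \in \E_i$ and $m \in \M_i$, I would define $d : Y_1 \klarrow Y_2$ pointwise: for $y_1$ with $v(y_1) = \bot$, set $d(y_1) = \bot$; otherwise pick a preimage $x$ of $y_1$ along $e$ (using $\Surj$), extract $y_2$ from $u(x) = (u', y_2)$, and, writing $v(y_1) = (v', z)$, set $d(y_1) = (v' \cdot m_1(y_2)^{-1}, y_2)$ when $m \in \Inv$ (cases $i = 2, 3$) or $d(y_1) = (e_1(x)^{-1} \cdot u', y_2)$ when $e \in \Inv$ (case $i = 1$). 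The commuting square yields $e_1(x) v' = u' m_1(y_2)$ and $z = m_2(y_2)$, from which $m \circ d = v$ and $d \circ e = u$ follow by direct computation.

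The main obstacle is tracking which of $\Surj$, $\Inj$, $\Inv$, $\Tot$ supplies each ingredient in each of the three cases: existence of a preimage $x$ (from $\Surj$ on $e$), uniqueness of $x$ or of $y_2$ (from $\Inj$ on $e$ or on $m$, together with totality to control $\bot$-values), and the invertibility (on $e$ or on $m$) that solves the equation $e_1(x) v' = u' m_1(y_2)$ for the output of $d(y_1)$. Uniqueness of $d$ then follows in each case from the same right-cancellation by an invertible element. A final check handles the edge case $u(x) = \bot$, which by $m \in \Tot$ and the commuting square forces $v \circ e(x) = \bot$, so that $d \circ e$ and $u$ agree on all of $X$ and not only on the preimages of non-$\bot$ points.
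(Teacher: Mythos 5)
Your proposal is correct and follows essentially the same route as the paper's proof: closure under composition via \Cref{lemma:four_classes}, the same three factorizations (identity-on-defined-inputs then restriction for $(\E_1,\M_1)$, output-production then projection for $(\E_2,\M_2)$, image factorization for $(\E_3,\M_3)$), and the same diagonal fill-in formulas $(\inv{\upsilon_e}\upsilon_u, y_2)$ resp.\ $(\upsilon_v\inv{\upsilon_m}, y_2)$ with well-definedness supplied by injectivity of $e$ or of $m$ as appropriate. The bookkeeping you flag (which class supplies each ingredient, and the $\bot$ edge cases) is exactly what the paper's case analysis carries out.
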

\begin{proof}
  By \Cref{lemma:four_classes} we only need to show for each $i \in \{ 1,2,3 \}$
  that every $f: X \klarrow Y$ may be factored as $m \circ e$ with $e \in
  \E_i$ and $m \in \M_i$, and that $\E_i$ and $\M_i$ satisfy the diagonal
  fill-in property of \Cref{def:factorization_system}.

  \begin{itemize}
  \item $(\E_1, \M_1)$. Any $f: X \rightarrow M \times Y + 1$ may be factored
    through $e: X \rightarrow M \times f^{-1}(M \times Y) + 1$ (given by $e(x) =
    (\varepsilon, x)$ if $f(x) \neq \bot$ and $e(x) = \bot$ otherwise) and $m:
    f^{-1}(M \times Y) \rightarrow M \times Y + 1$ (given by $m(x) = f(x) \neq \bot$).

    Moreover, given a commuting diagram
    \begin{center}
      \begin{tikzcd}
        X \arrow[d, kleisli, "u"'] \arrow[r, "e", kleisli, two heads] & Y_1
        \arrow[d, kleisli, "v"] \\
        Y_2 \arrow[r, "m", kleisli, tail] & Z
      \end{tikzcd}
    \end{center}
    the only possible choice for a $\phi: Y_1 \rightarrow M \times Y_2 + 1$ is
    given by $\phi(y_1) = \bot$ if $v(y_1) = \bot$ and $\phi(y_1) =
    (\inv{\upsilon_e}\upsilon_u, y_2)$ if $e(x) = (\upsilon_e, y_1)$, $u(x) =
    (\upsilon_u, y_2)$, $v(y_1) = (\upsilon_v, z)$, $m(y_2) = (\upsilon_m, z)$ and
    $\upsilon_e\upsilon_v = \upsilon_u\upsilon_m$. This definition does not depend
    on the choice of $x$ because of the bijectiveness property of $e$, and we
    immediately have the commuting diagram
    \begin{center}
      \begin{tikzcd}
        X \arrow[d, kleisli, "u"'] \arrow[r, "e", kleisli, two heads] & Y_1
        \arrow[d,
        kleisli, "v"] \arrow[ld, kleisli, "\phi"'] \\
        Y_2 \arrow[r, "m", kleisli, tail] & Z
      \end{tikzcd}
    \end{center}

  \item $(\E_2, \M_2)$. Any $f: X \rightarrow M \times Y + 1$ may be factored
    through $e: X \rightarrow M \times f^{-1}(M \times Y) + 1$ (given by $e(x) =
    (\upsilon, x)$ if $f(x) = (\upsilon, \--)$ and $e(x) = \bot$ otherwise) and $m
   : f^{-1}(M \times Y) \rightarrow M \times Y + 1$ (given by $m(x) = (\varepsilon,
    y)$ when $f(x) = (\--, y)$).

    Finally, given a commuting diagram
    \begin{center}
      \begin{tikzcd}
        X \arrow[d, kleisli, "u"'] \arrow[r, "e", kleisli, two heads] & Y_1
        \arrow[d, kleisli, "v"] \\
        Y_2 \arrow[r, "m", kleisli, tail] & Z
      \end{tikzcd}
    \end{center}
    the only possible choice for a $\phi: Y_1 \rightarrow M \times Y_2 + 1$ is
    given by $\phi(y_1) = \bot$ if $v(y_1) = \bot$ and $\phi(y_1) =
    (\upsilon_v\inv{\upsilon_m}, y_2)$ if $e(x) = (\upsilon_e, y_1)$, $u(x) =
    (\upsilon_u, y_2)$, $v(y_1) = (\upsilon_v, z)$, $m(y_2) = (\upsilon_m, z)$ and
    $\upsilon_e\upsilon_v = \upsilon_u\upsilon_m$. This definition does not depend
    on the choice of $x$ because of the bijectiveness property of $e$, and we
    immediately have the commuting diagram
    \begin{center}
      \begin{tikzcd}
        X \arrow[d, kleisli, "u"'] \arrow[r, "e", kleisli, two heads] & Y_1
        \arrow[d, kleisli, "v"] \arrow[ld, kleisli, "\phi"'] \\
        Y_2 \arrow[r, "m", kleisli, tail] & Z
      \end{tikzcd}
    \end{center}

  \item $(\E_3, \M_3)$. Any $f: X \rightarrow M \times Y + 1$ factors
    through $e: X \rightarrow M \times Z + 1$ and $m: Z \rightarrow M \times Y
    + 1$ with
    \[ Z = \{ y \in Y \mid \exists x \in X, f(x) = (\--, y) \} \] $e(x) = f(x)$
    and $m(y) = (\varepsilon, y)$.

    Finally, given a commuting diagram
    \begin{center}
      \begin{tikzcd}
        X \arrow[d, kleisli, "u"'] \arrow[r, "e", kleisli, two heads] & Y_1
        \arrow[d, kleisli, "v"] \\
        Y_2 \arrow[r, "m", kleisli, tail] & Z
      \end{tikzcd}
    \end{center}
    the only possible choice for a $\phi: Y_1 \rightarrow M \times Y_2 + 1$ is
    given by $\phi(y_1) = \bot$ if $v(y_1) = \bot$ and $\phi(y_1) =
    (\upsilon_v\inv{\upsilon_m}, y_2)$ if $e(x) = (\upsilon_e, y_1)$, $u(x) =
    (\upsilon_u, y_2)$, $v(y_1) = (\upsilon_v, z)$, $m(y_2) = (\upsilon_m, z)$ and
    $\upsilon_e\upsilon_v = \upsilon_u\upsilon_m$. This definition does not depend
    on the choice of $x$ because $m$ is injective, and we immediately have
    the commuting diagram
    \begin{center}
      \begin{tikzcd}
        X \arrow[d, kleisli, "u"'] \arrow[r, "e", kleisli, two heads] & Y_1
        \arrow[d, kleisli, "v"] \arrow[ld, kleisli, "\phi"] \\
        Y_2 \arrow[r, "m", kleisli, tail] & Z
      \end{tikzcd}
    \end{center}
    \vspace*{-17pt}
    \qedhere
  \end{itemize}
\end{proof}
\noindent The factorization system we choose to define the minimal $M$-transducer is
$(\E_3,\M_3) = (\Surj, \Inj \cap \Inv \cap \Tot)$, because it generalizes the
factorization system that defines the minimal transducer (with output in a free
monoid). It will be our main factorization system, and as such from now on we
reserve the notation $(\E, \M)$ for it.

\Cref{thm:minimal_automaton} and
\Cref{proposition:minimal_object_infimum_divisibility} show that $(\E, \M)$
indeed gives rise to a useful notion of minimal transducer.

\begin{cor}
  \label{corollary:minimal_transducer}
  When $\Kl(\T_M)$ has all countable powers of $1$, the $(\E,\M)$-minimal
  transducer recognizing a $(\Kl(\T_M), 1, 1)$-language $\L$ is well-defined and
  has state-set $S^{min} = \{ \red(\inv{w}\L) \mid w \in A^* \} \cap (M +
  1)_*^{A^*}$, initial state $s_0^{min} = \red \L$, initialization value
  $\upsilon_0^{min} = \lgcd \L$, termination function $t^{min}(\Lambda) =
  \Lambda(e)$ and transition functions $\red(\inv{w}\L) \odot^{min} a =
  (\lgcd(\inv{(wa)}\L), \red(\inv{(wa)}\L))$. It is characterized by the
  property that all its states are reachable from the initial state and
  recognize distinct left-coprime functions.
\end{cor}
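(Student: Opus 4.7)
The plan is to apply \Cref{thm:minimal_automaton} to obtain existence of $\Min\L$, then compute its explicit form by factorizing the canonical morphism from the initial transducer $\A^{init}(\L)$ (\Cref{corollary:initial_transducer}) to the final transducer $\A^{final}(\L)$ (the corollary established just before \Cref{prop:factorization_systems}). The three hypotheses of \Cref{thm:minimal_automaton} are readily verified in $\Kl(\T_M)$: it has countable copowers of $1$ since $\Set$ has coproducts and Kleisli categories inherit them, it has countable powers of $1$ by the assumption of the corollary, and it carries the factorization system $(\E,\M) = (\Surj, \Inj \cap \Inv \cap \Tot)$ from \Cref{prop:factorization_systems}. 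So $\Min\L$ exists and is the $(\E,\M)$-factorization of the unique arrow $\A^{init}(\L) \to \A^{final}(\L)$.

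Next I would compute this canonical morphism at the level of states. The characterization of morphisms into $\A^{final}(\L)$ sends a state $s$ to $(\lgcd \L_s, \red \L_s)$ where $\L_s$ records the behaviour from $s$. For $\A^{init}(\L)$, whose transitions are $w \odot^{init} a = (\epsilon, wa)$ with termination $t^{init}(w) = \L(\word{w})(*)$, a direct induction shows $\L_w = \inv{w}\L$, so the state-component of this morphism is $w \mapsto (\lgcd(\inv{w}\L), \red(\inv{w}\L))$, with value $\bot$ precisely when $\inv{w}\L = \bot^{A^*}$.

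The third step is to instantiate the explicit $(\E_3,\M_3)$-factorization described in the proof of \Cref{prop:factorization_systems}: the intermediate object is the image of the map, giving the state-set $S^{min} = \{\red(\inv{w}\L) \mid w \in A^*\} \cap (M+1)_*^{A^*}$, the $\E$-part $\A^{init}(\L) \twoheadrightarrow \Min\L$ is $w \mapsto (\lgcd(\inv{w}\L), \red(\inv{w}\L))$, and the $\M$-part $\Min\L \rightarrowtail \A^{final}(\L)$ is $\Lambda \mapsto (\epsilon, \Lambda)$. The transducer structure on $S^{min}$ is then forced by the requirement that the $\E$-morphism be a morphism of transducers: evaluating at $w = e$ yields $s_0^{min} = \red\L$ and $\upsilon_0^{min} = \lgcd\L$; commuting with $\odot^{init}$ along $a \in A$ yields $\red(\inv{w}\L) \odot^{min} a = (\lgcd(\inv{(wa)}\L), \red(\inv{(wa)}\L))$; and commuting with termination together with condition \ref{prop:countable_products_lgcd_red:item:lgcd_red:item:sound} of \Cref{prop:countable_products_lgcd_red} yields $t^{min}(\Lambda) = \Lambda(e)$.

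Finally I would verify the characterization. Reachability of every state is immediate because every state of $\A^{init}(\L)$ is reachable and the $\E$-morphism is surjective. The functions recognized from the distinct states are pairwise distinct, else two states would be identified by the $\M$-morphism into $\A^{final}(\L)$, contradicting its $\Inj$-component; and they are left-coprime since each has the form $\red(\inv{w}\L)$, which by the clause \ref{prop:countable_products_lgcd_red:item:lgcd_red_bis:item:red_irreducible} in the lemma after \Cref{prop:countable_products_lgcd_red} satisfies $\lgcd(\red(\inv{w}\L)) = \epsilon$. The main obstacle is the step-three bookkeeping: one must carefully push the Kleisli composition with the $\E$-morphism through every arrow of $\I$ to read off the structure maps, and in particular check that the $(M+1)_*^{A^*}$ intersection correctly excludes the $\bot$ image coming from $w$'s with $\inv{w}\L$ nowhere defined.
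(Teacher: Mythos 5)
Your proposal is correct and follows exactly the route the paper intends: the corollary is stated there without a separate proof, as an instance of \Cref{thm:minimal_automaton} combined with the explicit descriptions of $\A^{init}(\L)$, $\A^{final}(\L)$ and the $(\E_3,\M_3)$-factorization from \Cref{prop:factorization_systems}, which is precisely the computation you carry out. The only point you leave implicit is the converse half of the ``characterized by'' clause --- that any transducer recognizing $\L$ all of whose states are reachable and recognize distinct left-coprime functions is isomorphic to $\Min\L$ --- which follows from uniqueness of $(\E,\M)$-factorizations, since those two properties say exactly that the canonical arrow from $\A^{init}(\L)$ is in $\E$ and the canonical arrow to $\A^{final}(\L)$ is in $\M$.
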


Why are $(\E_1,\M_1)$ and $(\E_2,\M_2)$ also interesting, then? They do not give
rise to useful notions of minimality, but they show that the computation of
$\Obs$ can be split into substeps. Indeed, since $\E_1 \subset \E_2 \subset
\E_3$ (and equivalently $\M_1 \supset \M_2 \supset \M_3$), $(\E_i, \M_i)_{1 \le
  i \le 3}$ is a \emph{quaternary factorization system}:

\begin{cor}
  For every $\Kl(\T_M)$-arrow $f: X \klarrow Y$, there is a unique (up to
  isomorphisms) factorization of $f$ into
  \begin{center}
    \begin{tikzcd}
      X \arrow[r, "f_1", e1, kleisli, two heads] & Z_1 \arrow[r, "f_2", m1, e2, kleisli, two heads, tail] & Z_2
      \arrow[r, "f_3", m2, kleisli, two heads, tail] & Z_3 \arrow[r, "f_4", kleisli, tail] & Y
    \end{tikzcd}
  \end{center}
  such that $f_1 \in \E_1$, $f_2 \in \E_2 \cap \M_1$, $f_3 \in \E_3 \cap \M_2$ and
  $f_4 \in \M_3$.
\end{cor}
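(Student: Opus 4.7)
The plan is to construct the factorization by applying the three factorization systems of \Cref{prop:factorization_systems} in sequence, from coarsest (i.e.\ the one with the smallest $\E$) to finest, and then to argue uniqueness by reducing it inductively to the uniqueness of each individual $(\E_i, \M_i)$-factorization.

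More concretely, I would first factor $f$ using $(\E_1, \M_1)$ as $f = m_1 \circ f_1$ with $f_1 \in \E_1$ and $m_1 \in \M_1$. I would then factor $m_1$ using $(\E_2, \M_2)$ as $m_1 = m_2 \circ f_2$ with $f_2 \in \E_2$ and $m_2 \in \M_2 \subseteq \M_1$. Finally, I would factor $m_2$ using $(\E_3, \M_3)$ as $m_2 = f_4 \circ f_3$ with $f_3 \in \E_3$ and $f_4 \in \M_3$. Setting the pieces together we get $f = f_4 \circ f_3 \circ f_2 \circ f_1$, which has the right shape provided we can show $f_2 \in \M_1$ and $f_3 \in \M_2$.

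The main obstacle, which I expect is the only non-routine step, is precisely this refinement lemma: if $(\E, \M) \subset (\E', \M')$ (meaning $\E \subset \E'$ and $\M \supset \M'$) are two factorization systems on $\C$ and $m \in \M$ admits the $(\E', \M')$-factorization $m = m' \circ e'$, then $e' \in \M$. I would prove it by factoring $e'$ itself via $(\E, \M)$ as $e' = m'' \circ e''$ with $e'' \in \E \subseteq \E'$ and $m'' \in \M$. Then $m = (m' \circ m'') \circ e''$ is an $(\E, \M)$-factorization (since $\M$ is closed under composition and $m', m'' \in \M' \subseteq \M$, $\M$ respectively), but so is $m = m \circ \mathrm{id}$; by uniqueness of $(\E, \M)$-factorizations there is an iso $\phi$ with $\phi \circ e'' = \mathrm{id}$, so $e''$ is itself an iso, giving $e' = m'' \circ e'' \in \M$ since $\M$ contains isos and is closed under composition. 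Applying this lemma to $(\E_1, \M_1) \subset (\E_2, \M_2)$ gives $f_2 \in \M_1$, and to $(\E_2, \M_2) \subset (\E_3, \M_3)$ gives $f_3 \in \M_2$.

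For uniqueness, suppose $f = f_4' \circ f_3' \circ f_2' \circ f_1'$ is another such factorization. Then $f_4' \circ f_3' \circ f_2' \in \M_1$ (by closure of $\M_1$ under composition, using $\M_3, \M_2 \subseteq \M_1$) and $f_1' \in \E_1$, so by uniqueness of $(\E_1, \M_1)$-factorizations, this decomposition agrees with $m_1 \circ f_1$ up to a unique isomorphism. Composing that iso into the remainder and iterating the same argument with $(\E_2, \M_2)$ applied to $f_4' \circ f_3'$ and $f_2'$, and then with $(\E_3, \M_3)$ applied to $f_4'$ and $f_3'$, yields uniqueness of the whole quaternary factorization up to a compatible family of isomorphisms.
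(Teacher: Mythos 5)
Your proposal is correct and is precisely the standard derivation that the paper leaves implicit: the corollary is stated without proof as a consequence of the nesting $\E_1 \subset \E_2 \subset \E_3$ (equivalently $\M_1 \supset \M_2 \supset \M_3$), and your iterated-factorization construction together with the refinement lemma (an $\M_i$-morphism's $(\E_{i+1},\M_{i+1})$-factorization has its $\E_{i+1}$-part in $\M_i$) is exactly how one turns that nesting into a quaternary factorization system. Both the refinement lemma's proof via comparison with the trivial factorization $m = m \circ \id$ and the inductive uniqueness argument are sound.
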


Note how we respectively write $\klarrowin[->>]{e1}$, $\klarrowin[->>]{e2}$,
$\klarrowin[>->]{m1}$ and $\klarrowin[>->]{m2}$ for arrows in $\E_1$, $\E_2$,
$\M_1$ and $\M_2$ (but stick with $\kltwoheadarrow$ and $\klarrowtail$ for
arrows in $\E = \E_3$ and $\M = \M_3$).

Intuitively, this results means that the computation of any $f$ can be factored
into four parts: first forgetting some inputs ($f_1$ belongs to $\Surj \cap \Inj
\cap \Inv$ but need not belong to $\Tot$), then producing non-invertible
elements of the output monoid ($f_2$ belongs to $\Surj \cap \Inj \cap \Tot$ but
need not belong to $\Inv$), then merging some inputs together ($f_3$ belongs to
$\Surj \cap \Inv \cap \Tot$ but need not belong to $\Inj$) and finally embedding
the result into a bigger set ($f_4$ belongs to $\Inv \cap \Inj \cap \Tot$ but need
not belong to $\Surj$).

In particular, the $\E$-quotient $\Reach \A \kltwoheadarrow \Obs(\Reach \A)$
factors as follows.

\begin{defi}
  \label{def:minimization_steps}
  Given an $M$-transducer $\A$ recognizing the $(\Kl(\T_M), 1, 1)$-language
  $\L$, define $\Total \A$ and $\Prefix \A$ to be the $(\E_1,\M_1)$- and
  $(\E_2,\M_2)$-factorizations of the final arrow $\Reach \A \klarrow \A^{final}(\L)$:
  \begin{center}
    \begin{tikzcd}
      \Reach \A \arrow[r, e1, kleisli, two heads] & \Total \A \arrow[r, e2, m1,
      kleisli, two heads, tail] & \Prefix \A \arrow[r, m2, kleisli, two heads,
      tail] & \Min \L \arrow[r, kleisli, tail] & \A^{final}(\L)
    \end{tikzcd}
  \end{center}
\end{defi}

In practice, if $\A = (S, (u_0, s_0), t, {\odot})$,
\begin{itemize}
\item $\Reach \A$ has state-set the set of states in $S$ that are reachable from
  $s_0$;
\item $\Total \A$ has state-set $S'$ the set of states in $S$ that recognize a
  function defined for at least one word (in particular if $\A$ recognizes
  $\bot^{A^*}$ then $(u_0, s_0)$ is set to $\bot$);
\item $\Prefix \A = (S', (u_0\lgcd(\L_{s_0}), s_0), t', {\odot'})$, where $\L_s$
  is the function recognized from a state $s \in S$ in $\A$, is obtained from
  $\Total \A$ by setting $t'(s) = \inv{\lgcd(\L_s)}t(s)$ and $s \odot' a =
  (\inv{\lgcd(\L_s)}(s \ocircle a)\lgcd(\L_{s \cdot a}), s \cdot a)$;
\item $\Min \L \cong \Obs(\Reach \A)$ is obtained from $\Prefix \A$ by merging two
  states $s_1$ and $s_2$ whenever they recognize functions that are equal up to
  invertibles on the left in $\Prefix \A$, that is when $\red(\L_{s_1}) =
  \red(\L_{s_2})$ in $\A$.
\end{itemize}
In particular, these four steps (computing $\Reach \A$, $\Total \A$, $\Prefix
\A$ and finally $\Obs \A$) match exactly the four steps into which all the
algorithms for minimizing (possibly monoidal) transducers are decomposed
\cite{breslauerSuffixTreeTree1998,choffrutMinimizingSubsequentialTransducers2003,eisnerSimplerMoreGeneral2003,gerdjikovGeneralClassMonoids2018,bealComputingPrefixAutomaton2000}.

\begin{exa}
  For the transducer $\B$ of \Cref{fig:monoidal_transducer} seen as a transducer
  with output in the free commutative monoid $\Sigma^{\oast}$, the corresponding
  minimal transducer $\Min \L$ is computed step-by-step in
  \Cref{fig:transducer_minimization}.

  Notice in particular how in \Cref{fig:total} both the functions recognized by
  the states $1$ and $3$ are left-divisible by $\alpha$ hence $\alpha$ is pulled
  back to the initialization value in \Cref{fig:prefix}. This would not have
  happened had $M$ been the free monoid $\Sigma^*$, and the corresponding
  minimal transducer would have been different.

  \begin{figure}[ht]
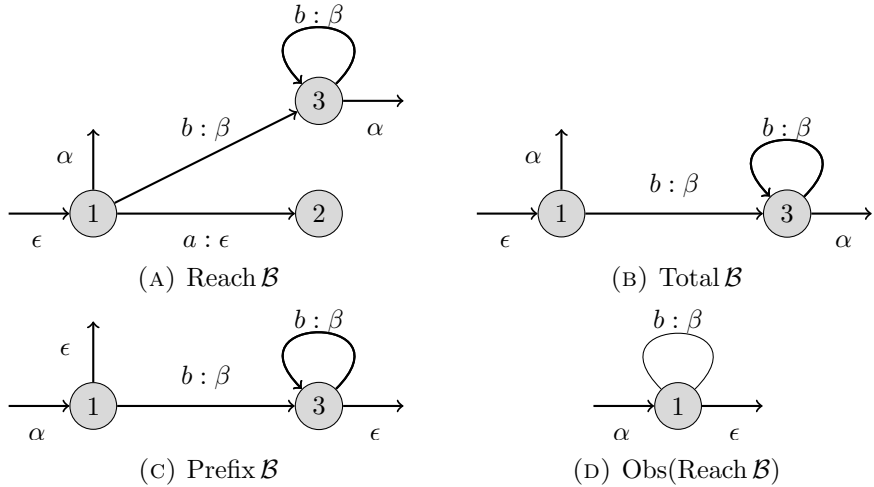
 \centering
    \captionsetup[subfigure]{justification=centering}
    \begin{subfigure}{0.4\textwidth}
      \centering \includefigure{transducer_reach}
      \caption{$\Reach \B$}
      \label{fig:reach}
    \end{subfigure}
    \medskip
    \begin{subfigure}{0.4\textwidth}
      \centering \includefigure{transducer_reach_total}
      \caption{$\Total \B$}
      \label{fig:total}
    \end{subfigure}
    \\
    \begin{subfigure}{0.4\textwidth}
      \centering
      \includefigure{transducer_reach_total_prefix}
      \caption{$\Prefix \B$}
      \label{fig:prefix}
    \end{subfigure}
    \medskip
    \begin{subfigure}{0.4\textwidth}
      \centering
      \includefigure{transducer_reach_total_prefix_obs}
      \caption{$\Obs(\Reach \B)$}
      \label{fig:obs_reach}
    \end{subfigure}
    \caption{Increasingly small transducers recognizing the same function as
      the transducer of \Cref{fig:monoidal_transducer} when $M = \Sigma^{\oast}$}
    \label{fig:transducer_minimization}
  \end{figure}
\end{exa}

\section{Active learning of minimal monoidal transducers}
\label{sec:learning}

Let $M$ be a monoid satisfying the conditions of
\Cref{lemma:countable_powers_sufficient_conditions} and consider now a function
$A^* \rightarrow M+1$ seen as a $(\Kl(\T_M),1,1)$-language $\L$.
\Cref{thm:minimal_automaton} tells us that the minimal $M$-transducer
recognizing $\L$ exists, is unique up to isomorphism and is given by
\Cref{corollary:minimal_transducer}, but does not tell us whether this minimal
transducer is computable. For this to hold we need that the product in $M$, the
left-gcd of two elements in $M$ --- written $\wedge$ --- and the function
\LeftDivide ~--- that takes as input $\delta, \upsilon \in M$ and outputs a
$\nu$ such that $\upsilon = \delta \nu$ or fails if there is none --- be all
computable, and that equality up to invertibles on the left be decidable (and
that the corresponding invertible be computable as well). We extend these
operations to $M+1$ by means of $u \bot = \bot u = \bot$, $u \wedge \bot = \bot
\wedge u = \bot$ and $\LeftDivide(\delta, \bot) = \bot$. For the computations to
terminate we additionally require that $\Min \L$ have finite state-set and $M$
be right-n\oe{}therian, so that $(\Min \L)(\St)$ is n\oe{}therian for the
factorization system $(\Surj,\Inj \cap \Inv \cap \Tot)$ of
\Cref{prop:factorization_systems}:

\begin{lem}
  \label{lemma:M-noetherianity}
  An object $X$ of $\Kl(\T_M)$ is $\M$-n\oe{}therian if and only if it is a finite
  set, in which case $\len_\M(m: Y \rightarrowtail X) = \card{X} - \card{Y}$.
\end{lem}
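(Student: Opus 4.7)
The plan is to reduce the statement to the classical fact about chains in the subset-poset of a set, by first identifying isomorphism classes of $\M$-subobjects of $X$ with subsets of the underlying set of $X$. Recall that $\M = \Inj \cap \Inv \cap \Tot$, so an $\M$-morphism $m: Y \klarrow X$ is precisely the data of an injection $m_2: Y \hookrightarrow X$ together with a labeling $m_1: Y \to \invertibles{M}$.

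The first step is to show that two such $\M$-morphisms $m: Y \klarrow X$ and $m': Y' \klarrow X$ represent the same subobject iff $m_2(Y) = m'_2(Y')$. The forward direction is immediate, since any equivalence of subobjects is witnessed by a Kleisli-isomorphism $\phi: Y \klarrow Y'$ whose $\phi_2$-component is a bijection (by \Cref{lemma:four_classes}). For the converse, given $m_2(Y) = m'_2(Y')$ I would set $\phi_2 := (m'_2)^{-1} \circ m_2$ and $\phi_1(y) := m_1(y) \cdot \inv{(m'_1(\phi_2(y)))}$, which is invertible since both factors are; unfolding the Kleisli composition formula gives $m' \circ \phi = m$, and $\phi$ belongs to $\Surj \cap \Inj \cap \Inv \cap \Tot$ so it is an iso by \Cref{lemma:four_classes}. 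Hence the poset of $\M$-subobjects of $X$ is isomorphic to the poset of subsets of $X$ ordered by inclusion.

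Next, for two subobjects corresponding to subsets $Z \subseteq Y \subseteq X$, the induced $\M$-morphism $Z \rightarrowtail Y$ is an iso iff it additionally lies in $\Surj$ (again by \Cref{lemma:four_classes}), iff $Z = Y$. Therefore strict chains of $\M$-subobjects of $X$ correspond exactly to strictly ascending chains of subsets of $X$. The conclusion is then routine: $X$ is $\M$-noetherian iff the subset-poset of $X$ has no infinite strictly ascending chain, iff $X$ is finite; and in that case, the supremum of lengths of strict chains starting at $Y \rightarrowtail X$ is $\card{X} - \card{Y}$, since each strict inclusion adds at least one new element, and this bound is realized by adding elements one at a time.

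The only mild obstacle is the first step: one must carefully verify that the invertible labels attached to an $\M$-morphism can always be absorbed into a Kleisli isomorphism, so that subobjects really are determined by their underlying image subsets. Once this identification is in hand, the remainder of the proof is a direct translation of the standard argument about chains of subsets of a set.
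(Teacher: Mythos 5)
Your proof is correct and takes essentially the same route as the paper's: both reduce strict chains of $\M$-subobjects to strictly ascending chains of subsets of $X$ and then invoke the finiteness of such chains, with the lower bound realized by adding elements one at a time. The only difference is one of presentation — you carefully justify the identification of $\M$-subobjects with image subsets by absorbing the invertible labels into a Kleisli isomorphism, a step the paper's proof asserts without detail.
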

\begin{proof}
  Let $(x_i: 1 \klarrowtail X)_{i \in \N}$ be an infinite sequence of distinct
  elements of an infinite set $X$. Then the $m_n = [\kappa_i]_{i = 0}^n:
  \coprod_{i = 0}^n 1 \klarrowtail \coprod_{i = 0}^{n + 1} 1$ provide a
  counter-example to the $\M$-n\oe{}therianity of $X$ as none of them are
  isomorphisms (they are not surjective) yet $[x_i]_{i = 0}^{n+1} \circ m_n =
  [x_i]_{i = 0}^n$ and $[x_i]_{i = 0}^n$ is always an $\M$-morphism. Hence
  infinite sets are never $\M$-n\oe{}therian. If $X$ and the sequence $(x_i)_{i \in
  \N}$ were
  finite instead, this example would prove that $\len_\M(m: Y \rightarrowtail
  X) \ge \card{X} - \card{Y}$.

  Conversely, a strict chain of $\M$-subobjects of $X$ is a strict chain of
  subsets $X_0 \subsetneq X_1 \subsetneq \cdots$ of $X$. In particular, the
  cardinality of these subsets is strictly increasing: if $X$ is finite, the
  chain must be finite as well, and its length at most $\card{X} - \card{X_0}$.
\end{proof}

\begin{lem}
  \label{lemma:E-artinianity}
  An object $X$ of $\Kl(\T_M)$ is $\E^{op}$-n\oe{}therian if and only if it is a finite
  set and either $M$ is right-n\oe{}therian or $X = \varnothing$, in which case
  $\colen_\E(e: X \kltwoheadarrow Y) = \card{X} - \card{Y} + \rk e$ where $\rk
  e = \sum_{e(x) = (\upsilon, y)} \rk \upsilon$.
\end{lem}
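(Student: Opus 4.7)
The plan is to prove both directions of the biconditional via chain constructions, then establish the $\colen_\E$ formula by matching upper and lower bounds via a potential-function argument.

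For the forward direction ($\Rightarrow$), I would argue contrapositively, exhibiting an infinite strict ascending chain whenever the conditions fail. If $X$ is infinite, choose distinct $x_0, x_1, \ldots \in X$ and let $e_n \colon X \kltwoheadarrow Y_n$ keep $x_0, \ldots, x_n$ distinct while collapsing the remaining elements of $X$ to a single point, with trivial production; each connecting morphism $h_n \colon Y_{n+1} \to Y_n$ then merges two elements and thus fails $\Inj$. If instead $X$ is non-empty and $M$ is not right-noetherian, \Cref{lemma:noetherianity_alternative_definition} supplies a sequence $v_n = v_{n+1} u_n$ in $M$ with infinitely many $u_n$ non-invertible; fixing some $x_0 \in X$ and setting $e_n(x_0) = (v_n, *_n)$ with $e_n(x) = \bot$ otherwise, the morphisms $h_n(*_{n+1}) = (u_n, *_n)$ assemble into a chain $e_n \le e_{n+1}$ whose strict steps correspond precisely to the non-invertible $u_n$'s.

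For the converse ($\Leftarrow$), the case $X = \varnothing$ is immediate. Otherwise, I would take an arbitrary chain with connecting morphisms $h_n \colon Y_{n+1} \to Y_n$ in $\Surj$ and observe that the cardinalities $\card{Y_n}$ are non-decreasing, bounded by $\card{X}$, so stabilize at some $k$; past that point each $h_n$ is bijective on the underlying sets and so lies in $\Surj \cap \Inj \cap \Tot$, leaving only a non-invertible production as an obstruction to being iso. Writing $e_n(x) = (v_{n,x}, y_{n,x})$ whenever defined, the composition equation unfolds as $v_{n,x} = v_{n+1,x} u_{n,x}$, a right-noetherianity witness per $x$. Applying right-noetherianity to each of the finitely many $x \in X$ and taking the maximum stabilization index yields some $N$ past which every $u_{n,x}$ is invertible, forcing each such $h_n$ into $\Inv$ and hence iso.

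For the co-$\E$-length formula, the lower bound would be constructive: from $e$, perform $\card{X} - \card{Y}$ ``un-merging'' steps that each separate one preimage class into two (yielding connecting morphisms outside $\Inj$), followed by $\rk(e) = \sum_x \rk(v_x)$ ``rank-decreasing'' steps that each factor off a non-invertible right-factor $u$ from some $v_x = v_x' \cdot u$ (yielding connecting morphisms outside $\Inv$). For the upper bound I would introduce the potential $\phi(e_n) = \card{X} - \card{Y_n} + \rk(e_n)$ and argue by case analysis on which of $\Inj, \Inv, \Tot$ the connecting morphism $h_n$ fails to belong to that each strict step drops $\phi$ by at least one, using monotonicity of $\rk$ under left-factors together with the strict inequality $\rk(v' \cdot u) \ge \rk(v') + 1$ whenever $u$ is non-invertible.

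The main obstacle I anticipate is the $\Tot$-failure case in the upper bound: when some $y \in Y_{n+1}$ has $h_n(y) = \bot$, the elements of $X$ sent to $y$ by $e_{n+1}$ fall out of the domain of $e_n$ and cease to contribute to $\rk(e_n)$, while simultaneously $\card{Y_{n+1}} - \card{Y_n}$ grows by the number of such $y$'s. Establishing a net decrease of at least one in $\phi$ will require exploiting that $e_{n+1}$ lies in $\Surj$ (so each lost $y$ has at least one preimage) and carefully correlating the cardinality growth with the lost rank contributions; this is the most delicate step and where I would expect to spend the bulk of the technical effort.
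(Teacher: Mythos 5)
Your two chain constructions for the forward direction and your stabilization argument for the converse follow the paper's proof in substance. (For infinite $X$ the paper's connecting morphisms fail $\Tot$ rather than $\Inj$, but your variant works equally well; and your converse, which first stabilizes the cardinalities $\card{Y_n}$ and only then applies right-noetherianity to each $x$ surviving past that index, is in fact tighter than the paper's, which applies noetherianity only to those $x$ with $e_0(x) \neq \bot$ and implicitly assumes that every failure of $\Inv$ is witnessed on the image of such an $x$.)

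The obstacle you single out in the upper bound for $\colen_\E$ is not merely delicate: it is fatal to the potential argument, because the stated formula is false when $e$ is not total. Take $M = \{\alpha\}^*$, $X = \{x\}$, $Y = \varnothing$ and $e(x) = \bot$, so that $\card{X} - \card{Y} + \rk e = 1$. For any $K$, the cochain $e_0 = e$ and $e_k(x) = (\alpha^{K-k}, *)$ for $1 \le k \le K$, with connecting morphisms $m_0(*) = \bot$ and $m_k(*) = (\alpha, *)$ for $k \ge 1$, is a strict cochain of $\E$-quotients of length $K$ starting with $e$, so $\colen_\E(e) = \infty$; the same phenomenon occurs for any $e \notin \Tot$. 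Exactly as you anticipated, the element dropped at the non-total step carries a production that contributes nothing to $\rk(e_0)$ yet can be arbitrarily large, so your potential $\phi$ can jump upward across that step. The paper's own proof has the same hole: it bounds the number of steps outside $\Inv$ by $\rk e_0 = \sum_{e_0(x) = (\upsilon, y)} \rk \upsilon$, a sum that ignores every $x$ with $e_0(x) = \bot$, even though a step may fail $\Inv$ only on images of such $x$'s. (The $\E$-artinianity claim itself survives: each individual cochain is still finite, precisely by your stabilize-then-apply-noetherianity argument; only the quantitative formula breaks.)

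Your argument does establish the formula for total $e$: surjectivity of the $e_n$ then forces every connecting morphism to be total (a $y \in Y_{n+1}$ with $m_n(y) = \bot$ would, by surjectivity of $e_{n+1}$, yield an $x$ with $e_0(x) = \bot$), so the only possible failures are of $\Inj$, at most $\card{X} - \card{Y}$ of them by counting cardinalities, and of $\Inv$, at most $\rk e$ of them since every element of every $Y_{n+1}$ is now reachable from the domain of $e_0$ and your per-$x$ bound $\rk(\nu' u) \ge \rk(\nu') + 1$ applies. To salvage the general statement you would have to either restrict it to $e \in \Tot$ or replace $\rk e$ by a quantity accounting for the productions of dropped elements.
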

\begin{proof}
  Let $(x_i)_{i \in \N}$ be an infinite sequence of distincts elements of an
  infinite set $X$, and set $X_n = \{ x_i \mid 0 \le i \le n \}$. Let $e_n: X
  \kltwoheadarrow X_n$ be defined by $e(x_i) = (\varepsilon, x_i)$ for $i \le n$
  and $e(x) = \bot$ otherwise, and let $e_n^{n+1}: X_{n+1} \kltwoheadarrow X_n$
  be the restriction of $e_n$ to $X_{n+1}$. None of the $e_n^{n+1}$ are
  isomorphisms (they are not total) yet $e_n^{n+1} \circ e_{n+1} = e_n$:
  infinite sets are never $\E^{op}$-n\oe{}therian.

  Assume now $X$ is not empty and $M$ is not right-n\oe{}therian: there is an
  element $x_* \in X$ and two sequences $(\upsilon_n)_{n \in \N}$ and
  $(\nu_n)_{n \in \N}$ of elements of $M$ such that for all $n \in \N$,
  $\upsilon_n \notin \invertibles{M}$ and $\nu_n = \nu_{n+1} \upsilon_n$. Let
  $e_n: X \kltwoheadarrow 1$ be defined by $e_n(x_*) = (\nu_n, *)$ and $e_n(x) =
  \bot$ for all other $x \in X$, and let $e_n^{n+1}: 1 \kltwoheadarrow 1$ be
  defined by $e_n^{n+1}(*) = (\upsilon_n, *)$. None of the $e_n^{n+1}$ are
  isomorphisms (they do not only produce invertible elements of $M$) yet
  $e_n^{n+1} \circ e_{n+1} = e_n$: non-empty sets are never $\E^{op}$-n\oe{}therian
  when $M$ is not right-n\oe{}therian.

  Conversely, if $X$ is empty then it is immediately $\E^{op}$-n\oe{}therian: there
  is only one $\E$-morphism out of $X$, $\id_X$. Suppose now that $X$ is finite
  and $M$ right-n\oe{}therian, and consider a cochain $e_n^{n+1}: X_{n+1}
  \kltwoheadarrow X_n$ of $\E$-quotients $e_n: X \kltwoheadarrow X_n$. Since $X$
  is finite, at most $\card{X} - \card{X_0}$ of the $\E$-quotients $X_{n+1}
  \kltwoheadarrow X_n$ witness a decrease of the cardinality from their domain
  to their codomain and are not in $\Inj \cap \Tot$. Fix now an $x \in X$ such
  that $e_0(x) \neq \bot$ and write $e_n(x) = (\nu_n, x_n)$ and
  $e_n^{n+1}(x_{n+1}) = (\upsilon_n, x_n)$ (this is well-defined because
  $e_n^{n+1} \circ e_{n+1} = e_n$). Then $\nu_n = \nu_{n+1}\upsilon_n$ for all
  $n \in \N$, hence since $M$ is right-n\oe{}therian only a finite number of the
  $e_n^{n+1}$, at most $\rk \nu_0$, produce a non-invertible element on
  $x_{n+1}$. This is true for all $x \in X$, hence a finite number of the
  $e_n^{n+1}$, at most $\rk e_0$, are not in $\Inv$, and a finite number of
  them, at most $\card{X} - \card{X_0} + \rk e_0$, are not in $\Iso$: $X$ is
  $\E^{op}$-n\oe{}therian and $\colen_\E(e_0: X \kltwoheadarrow X_0) \le \card{X} -
  \card{X_0} + \rk e_0$.

  Finally, if $\rk e$ is finite, in light of this proof it is now easy to build
  a strict cochain of $\E$-quotients of $X$ starting with $e: X \kltwoheadarrow
  Y$ that has length exactly $\card{X} - \card{Y} + \rk e$. For each $\nu(x) \in
  M$ such that $e(x) = (\nu(x), y)$ for some $x \in X$ and $y \in Y$, write
  indeed $\upsilon_1(x), \ldots, \upsilon_{\rk \nu(x)}(x)$ for a sequence of
  non-invertible divisors of $\nu$ of maximum length. Each morphism between two
  consecutive $\E$-quotients in the cochain should either decrease the size of
  the quotient by $1$, or produce exactly one of the $\upsilon_i(x)$ on $x$ for
  exactly one $x \in X$. Similarly, if $\rk e$ is infinite there sequences of
  divisors of some $\nu(x)$ or arbitrary length and it is then easy to build
  strict cochains of $\E$-quotients of $X$ of arbitrary lengths.
\end{proof}

The categorical framework of \Cref{sec:categorical_framework} can be extended
with an abstract minimization algorithm
\cite{aristoteFunctorialApproachMinimizing2023}. With the output category
described in \Cref{sec:category_monoidal_transducers}, an instance of this is in
particular Gerdjikov's algorithm for minimizing monoidal transducers
\cite{gerdjikovGeneralClassMonoids2018}, and even shows that the latter is still
valid under the conditions discussed in
\Cref{sec:category_monoidal_transducers:initial_final_objects} and terminates as
soon as $M$ is right-n\oe{}therian. However, we focus here on a second way to
compute the minimal transducer recognizing $\L$, namely learning it through
membership and equivalence queries, that is relying on a function $\Eval{\L}$
that outputs the value of $\L$ on input words and a function $\Equiv{\L}$ that
checks whether the hypothesis transducer is $\Min \L$ or outputs a
counterexample otherwise. Such an algorithm is an instance of the \textsc{FunL*}
algorithm described in \Cref{sec:categorical_framework:learning} and thus
terminates as soon as $(\Min \L)(\St)$ is $(\E,\M)$-n\oe{}therian. We now give a
practical description of this categorical algorithm: we explain how to keep
track of the minimal biautomaton and how to check whether $\varepsilon_{Q,T}^{min}$
is in $\E$ and $\M$. This is summarized by
\Cref{alg:learning_monoidal_transducers}
(\cpageref{alg:learning_monoidal_transducers}).

The algorithm for learning the minimal monoidal transducer recognizing $\L$ is
very similar to Vilar's algorithm (described in \Cref{sec:introduction}), the
main difference being that the longest common prefix is now the left-gcd and
that, in some places, testing for equality is now testing for equality up to
invertibles on the left. It maintains two sets $Q$ and $T$ that are respectively
prefix-closed and suffix-closed, and tables $\Lambda: Q \times (A \cup \{ e \})
\rightarrow M + 1$ and $R: Q \times (A \cup \{ e \}) \times T \rightarrow M +
1$. They satisfy that, for all $a \in A \cup \{ e\}$, $\Lambda(q,a) R(q,a,t) =
\L(\word{ q a t })$ and $R(q, a, \cdot)$ is left-coprime, hence $\Lambda(q, a)$
is a left-gcd of $(\L(\word{ q a t }))_{t \in T}$. The algorithm then extends
$Q$ and $T$ until some closure and consistency conditions are satisfied, and
builds a hypothesis transducer $\H(Q,T)$ using $\Lambda$ and $R$: its state-set
$S$ can be constructed by, starting with $e \in Q$, picking as many $q \in Q$
such that $R(q,e,\cdot)$ is not $\bot^T$ and such that, for every other $q' \in
S$, $R(q,e,\cdot)$ and $R(q',e,\cdot)$ are not equal up to invertibles on the
left; it then has initial state $e \in Q$, initialization value $\Lambda(e, e)$,
termination function $t = q \in S \mapsto R(q, e, e)$ and transition functions
given by $q \odot a = (\LeftDivide(\Lambda(q,e), \Lambda(q,a)) \chi, q')$ for
$q, q' \in S$ such that $R(q,a,\cdot) = \chi R(q',e,\cdot)$. The algorithm then
adds the counter-example given by $\Equiv{\L}(\H(Q,T))$ to $Q$ and builds a new
hypothesis automaton until no counter-example is returned and $\H(Q,T) = \Min
\L$.

Closure issues happen when $\varepsilon_{Q,T}^{min}$ is not in $\E = \Surj$, that
is when there is a $qa \in QA$ such that $R(q,a,\cdot) \neq \chi R(q',e,\cdot)$
for every other $q' \in Q$ and $\chi \in \invertibles{M}$, and in that case $qa$
should be added to $Q$. Consistency issues happen when the $\E$-factor of
$\varepsilon_{Q,T}^{min}$ is not in $\M = \Inj \cap \Inv \cap \Tot$, i.e., if it is
not in $\Tot$, in $\Tot$ but not in $\Inv \cap \Tot$, or in $\Inv \cap \Tot$ but
not in $\Inj \cap \Inv \cap \Tot$: the quaternary factorization system described
in \Cref{sec:category_monoidal_transducers:factorization_systems} thus also
explains the different kinds of consistency issues we may face. In practice,
there is hence a consistency issue if there is an $at \in AT$ such that
respectively: either there is a $q \in Q$ such that $R(q,a,t) \neq \bot$ but
$R(q,e,T) = \bot^T$; or there is a $q \in Q$ such that $\Lambda(q,e)$ does not
left-divide $\Lambda(q,a)R(q,a,t)$; or there are some $q,q' \in Q$ and $\chi \in
\invertibles{M}$ such that $R(q,e,T) = \chi R(q',e,T)$ but
$\LeftDivide(\Lambda(q,e), \Lambda(q,a)R(q,a,t)) \neq \chi
\LeftDivide(\Lambda(q',e), \Lambda(q',a)R(q',a,t))$. In each of these cases $at$
should be added to $T$.

If $\A$ is a monoidal transducer, write $\card{\A}_{\St} = \card{\A(\St)}$ and
$\rk(\A) = \sum_{s \in \A(\St)} \rk(\lgcd(\L_s))$ (where $\L_s$ is the partial
function recognized by $\A$ when $s \in \A(\St)$ is chosen to be the initial
state). The number of updates to $Q$ and $T$, hence in particular of calls to
$\Equiv{\L}$, is bounded linearly by $\card{\Min\L}_{\St}$ and $\rk(\Min \L)$
(although this latter quantity is not necessarily finite):

\begin{thm}
  \Cref{alg:learning_monoidal_transducers} is correct and terminates as soon
  as $\Min \L$ has finite state-set and $M$ is right-n\oe{}therian. It makes at
  most $3\card{\Min \L}_{\St} + \rk(\Min \L)$ updates to $Q$
  (\cref{alg:learning_monoidal_transducers:line:update-Q,alg:learning_monoidal_transducers:line:add-counterexample})
  and at most $\rk(\Min \L) + \card{\Min \L}_{\St}$ updates to $T$
  (\cref{alg:learning_monoidal_transducers:line:update-T}).
\end{thm}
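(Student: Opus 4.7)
The plan is to show that Algorithm~\ref{alg:learning_monoidal_transducers} is an instantiation of the abstract \textsc{FunL*} algorithm (Algorithm~\ref{alg:FunL*}) in the category $\Kl(\T_M)$ equipped with the factorization system $(\E,\M) = (\Surj, \Inj \cap \Inv \cap \Tot)$, and then invoke \Cref{thm:FunL*_correction_termination} together with the length estimates from \Cref{lemma:M-noetherianity,lemma:E-artinianity}.

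First I would describe how the tables $(\Lambda, R)$ concretely represent the minimal biautomaton $\Min\L_{Q,T}$: the rows $R(q,a,\cdot)$ are exactly the components of the $(\E,\M)$-factorization of the canonical arrow $\coprod_Q \L(\In) \klarrow \prod_T \L(\Out)$, with $\Lambda(q,a)$ recording the left-gcd pulled out by \Cref{corollary:minimal_transducer}. The state-set $S$ built on line~10 is then, up to isomorphism, $(\Min\L_{Q,T})(\St_1) = \factorization{Q}{(T \cup AT)}$ restricted to those rows that are not $\bot^T$, since $R(q,e,\cdot)$ and $R(q',e,\cdot)$ identify the same state precisely when they coincide up to an invertible on the left. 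I would then unfold the condition that $\epsilon_{Q,T}^{min}$ be an isomorphism using the quaternary factorization of \Cref{sec:category_monoidal_transducers:factorization_systems}: failure to lie in $\Surj$ is exactly the closure test on line~6, and failure to lie in $\Inj \cap \Inv \cap \Tot$ decomposes into the three sub-cases of line~8, one for each of $\Tot$, $\Inv$ and $\Inj$. The hypothesis $\H(Q,T)$ produced on line~10 then coincides with the merge $\H_{Q,T}\L$ of $\Min\L_{Q,T}$, so correctness follows directly from \Cref{thm:FunL*_correction_termination}.

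For termination, \Cref{lemma:M-noetherianity} says $(\Min\L)(\St)$ is $\M$-noetherian whenever it is a finite set, and \Cref{lemma:E-artinianity} says it is $\E$-artinian whenever it is finite and $M$ is right-noetherian, so the hypotheses of \Cref{thm:FunL*_correction_termination} are met and the algorithm terminates. For the quantitative bounds, each update to $T$ on line~7 strictly shrinks the $\M$-subobject $(\Min\L_{Q,T})(\St_2) \klarrowtail \prod_T \L(\Out)$ when transported along the canonical map to $\A^{final}(\L)(\St)$, so by \Cref{lemma:M-noetherianity,lemma:E-artinianity} the total number of such updates is bounded by $\card{\Min\L}_\St + \rk(\Min\L)$, once we observe that the three sub-cases of line~8 respectively witness a drop of $\card{\cdot}$ by $1$ (failure of $\Inj$) or a drop of $\rk(\cdot)$ by at least $1$ (failure of $\Inv$ or $\Tot$, which are handled together by \Cref{lemma:E-artinianity}). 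Similarly, each update to $Q$ on line~\ref{alg:learning_monoidal_transducers:line:update-Q} strictly increases $\card{S}$, giving at most $\card{\Min\L}_\St$ updates from that line; and each counter-example on line~\ref{alg:learning_monoidal_transducers:line:add-counterexample} must eventually force either a new state, a new non-invertible initialization prefix, or a new consistency failure, bounding the total contribution of line~\ref{alg:learning_monoidal_transducers:line:add-counterexample} by $2\card{\Min\L}_\St + \rk(\Min\L)$. Summing these three contributions gives the stated bound of $3\card{\Min\L}_\St + \rk(\Min\L)$ updates to $Q$.

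The hardest part will be the bookkeeping for counter-example updates, since adding $w$ and its prefixes to $Q$ on line~\ref{alg:learning_monoidal_transducers:line:add-counterexample} does not itself decrease any $\E$-colength or $\M$-length; one has to argue that, after such an update, at least one subsequent pass through the inner loop will strictly refine the biautomaton before the next equivalence query, so that the total count telescopes against the potentials provided by \Cref{lemma:M-noetherianity,lemma:E-artinianity}. I expect this amortised argument to mirror the one sketched in~\cite{colcombetLearningAutomataTransducers2020}, with the right-noetherianity of $M$ intervening exactly to control the $\rk(\Min\L)$ term and ensure the potential is well-defined.
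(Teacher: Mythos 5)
Your proposal follows essentially the same route as the paper: identify \Cref{alg:learning_monoidal_transducers} as the instance of \Cref{alg:FunL*} in $\Kl(\T_M)$ by matching the closure test with failure of $\Surj$ and the three consistency sub-cases with failure of $\Tot$, $\Inv$ and $\Inj$, then invoke \Cref{thm:FunL*_correction_termination} together with \Cref{lemma:M-noetherianity,lemma:E-artinianity}, deferring the chain/cochain bookkeeping (including the amortised treatment of counter-example additions) to \cite{colcombetLearningAutomataTransducers2020}. The only divergence is a minor one in the quantitative accounting --- the paper charges additions to $T$ to a strict chain of $\M$-subobjects (whose length by \Cref{lemma:M-noetherianity} is a pure cardinality difference) and additions to $Q$ to a cochain of $\E$-quotients (whence the $\rk$ term via \Cref{lemma:E-artinianity}), whereas you attribute the $\rk$ contribution partly to the $\Inv$/$\Tot$ consistency failures on the $T$ side --- but both allocations stay within the stated bounds.
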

\begin{proof}
  Notice first that \Cref{alg:learning_monoidal_transducers} is indeed the
  instance of \Cref{alg:FunL*} in $\Kl(\T_M)$: for all $q \in Q$ and $a \in A
  \cup \{ \varepsilon \}$, $\Lambda(q,a)$ is a left-gcd of $L(q, a, \cdot) =
  (\L(\word{qat}))_{t \in T} = \Lambda(q,a)R(q, a, \cdot)$, hence $R(q,a,\cdot)$
  is left-coprime and there is a $\chi \in \invertibles{M}$ such that
  $\Lambda(q, a)\inv{\chi} = \lgcd(L(q, a, \cdot))$ and $\chi R(q, a, \cdot) =
  \red(L(q, a, \cdot))$. Hence $\factorization{Q}{T} = \{ \red(L(q, e, \cdot))
  \mid q \in Q, L(q, e, \cdot) \neq \bot^T \}$ is the quotient of the set $\{
  R(q, e, \cdot) \mid q \in Q, R(q, e, \cdot) \neq \bot^T \}$ by equality up to
  invertibles on the left. It follows that $\factorization{Q}{T}
  \rightarrowtail\factorization{(Q \cup \{ qa \})}{ T}$ is an $\E$-morphism if
  and only if it is surjective, that is if and only if the condition on line 7
  is not satisfied, and $\factorization{Q}{(T \cup \{ at \})}
  \twoheadrightarrow\factorization{Q}{ T}$ is an $\M$-morphism if and only if it
  is total, produces only invertibles elements and is injective, that is if and
  only if it respectively does not satisfy any of the three conditions on line
  9.

  The correction and termination is then given by
  \Cref{thm:FunL*_correction_termination}, thanks to
  \Cref{lemma:M-noetherianity,lemma:E-artinianity}. These two lemmas also
  provide the complexity bound of the algorithm, as
  \Cref{thm:FunL*_correction_termination} is proven in
  \cite{colcombetLearningAutomataTransducers2020} by showing that each addition
  to $T$ contributes to a morphism in a strict chain of $\M$-subobjects of
  $(\Min \L)(\St)$ starting with $\factorization{\{\varepsilon\}}{A^*} \rightarrowtail
  \factorization{A^*}{A^*} = (\Min \L)(\St)$
  \cite[Lemma~33]{colcombetLearningAutomataTransducers2020}, and each addition
  to $Q$ contributes to a morphism in a chain of $\E$-quotients of $(\Min
  \L)(\St)$ ending with $(\Min \L)(\St) = \factorization{A^*}{A^*} \twoheadrightarrow
  \factorization{A^*}{\{\varepsilon\}}$
  \cite[Lemma~33]{colcombetLearningAutomataTransducers2020} and whose
  isomorphisms may only be contributed by the addition of a counter-example
  outputted by $\Equiv{\L}$ and are immediately followed by a non-isomorphism in
  the chain for $T$ or the cochain for $Q$
  \cite[Lemma~36]{colcombetLearningAutomataTransducers2020}.
\end{proof}

Our algorithm also differs from Vilar's original one in a small additional way:
the latter also keeps track of the left-gcds of every $\Lambda(q,\tilde{a})$
where $\tilde{a}$ ranges over $A \cup \{ e \}$ and $q \in Q$ is fixed, and
checks for consistency issues accordingly. This is a small optimization of the
algorithm that does not follow immediately from the categorical framework. In
\Cref{sec:introduction} we thus actually provided an example run of our version
of the algorithm when the output monoid is a free monoid. This also provides
example runs of our algorithm for non-free output monoids, as quotienting the
output monoid will only remove closure and consistency issues and make the run
simpler. For instance letting $\alpha$ commute with $\beta$ for the transducer
of \Cref{fig:learned_transducer} would have removed the closure issue and the
need to add $a$ to $Q$ while learning the corresponding monoidal transducer, and
letting $\alpha$ also commute with $\gamma$ would have removed the first
consistency issue to arise and the need to add $a$ to $T$.

\section{Summary and future work}
\label{sec:conclusion}

In this work, we instantiated Colcombet, Petri\c{s}an and Stabile's active
learning categorical framework with monoidal transducers. We gave some simple
sufficient conditions on the output monoid for the minimal transducer to exist
and be unique, which in particular extend Gerdjikov's conditions for
minimization to be possible \cite{gerdjikovGeneralClassMonoids2018}. Finally, we
described what the active learning algorithm of the categorical framework
instantiated to in practice under these conditions, relying in particular on the
quaternary factorization system in the output category.

This work was mainly a theoretical excursion and was not motivated by practical
examples where monoidal transducers are used. One particular application that
could be further explored is the use of transducers with outputs in trace
monoids (and their learning) to programatically schedule jobs, as mentioned in
the introduction. We also leave the search for other interesting examples for
future work.

Some intermediate results of this work go beyond what the categorical framework
currently provides and could be generalized. The use of a quaternary
factorization system (or any $n$-ary factorization system) would split the
algorithms into several substeps that should be easier to work with. Here our
factorization systems seemed to arise as the image of the factorization system
on $\Set$ through the monad $\T_M$; generalizing this to other monads could
provide meaningful examples of factorization systems in any Kleisli category.
Finally, we mentioned in
\Cref{sec:category_monoidal_transducers:initial_final_objects} that a problem
with the current framework is that it may only account for the minimization of
both finite and infinite transition systems at the same time, and conjectured
that we could restrict to only the finite case by working in a subcategory of
well-behaved transducers: this subcategory is perhaps an instance of a general
construction that has its own version of \Cref{thm:minimal_automaton}, so as to
still have a generic way to build the initial, final and minimal objects.

\section*{Acknowledgments}
The author is grateful to Daniela Petri\c{s}an and Thomas Colcombet for
  fruitful discussions.

\bibliographystyle{alphaurl}
\bibliography{bibtex.bib}

\appendix  
\section{}

\begin{algorithm}
  \caption{The \textsc{FunL*}-algorithm for monoidal transducers}
  \label{alg:learning_monoidal_transducers}
  \begin{algorithmic}[1] \REQUIRE $\Eval{\L}$ and
    $\Equiv{\L}$ \ENSURE $\Min_M(\L)$

    \STATE $Q = T = \{ e \}$

    \FOR{$a \in A \cup \{ e \}$}

    \STATE $\Lambda(e, a) = \Eval{\L}(a)$ \STATE $R(e, a, e) = \varepsilon$

    \ENDFOR

    \LOOP

    \IF{there is a $qa \in QA$ such that $\forall q' \in Q, \chi \in
      \invertibles{M}, R(q, a, \cdot) \neq \chi R(q', e, \cdot)$}

    \STATE add $qa$ to $Q$ \label{alg:learning_monoidal_transducers:line:update-Q}

    \ELSIF{there is an $at \in AT$ such that
      \begin{itemize}
      \item \textbf{either} there is a $q \in Q$ such that $R(q, a, t) \neq
        \bot$ but $R(q, e, T) = \bot^T$;
      \item \OR there is a $q \in Q$ such that $\Lambda(q, e)$ does not
        left-divide $\Lambda(q, a)R(q, a, t)$;
      \item \OR there are $q, q' \in Q$ and $\chi \in \invertibles{M}$ such that
        $R(q, e, T) = \chi R(q', e, T)$ but $\LeftDivide(\Lambda(q, e),
        \Lambda(q, a) R(q, a, t)) \neq \chi \LeftDivide(\Lambda(q',e),
        \Lambda(q', a) R(q', a, t))$
      \end{itemize}}

    \STATE add $at$ to $T$ \label{alg:learning_monoidal_transducers:line:update-T}

    \ELSE \STATE build $\H(Q,T) = (S,(\upsilon_0,s_0),t,{\odot})$ given by:
    \begin{itemize}
    \item $S \subseteq Q$ is built by starting with $e \in S$ and adding as
      many $q \in Q$ as long as $R(q,e,\cdot) \neq \bot$ and $\forall q' \in
      S, \chi \in \invertibles{M}, R(q,e,\cdot) \neq \chi R(q',e, \cdot)$;
    \item $(\upsilon_0, s_0) = (\Lambda(e),e)$
    \item $q \odot a = (\LeftDivide(\Lambda(q,e), \Lambda(q,a)) \chi, q')$
      with $q \in S$, $\chi \in \invertibles{M}$ given by $R(q,a,\cdot) = \chi
      R(q',e,\cdot)$
    \item $t(q) = R(q,e,e)$
    \end{itemize}

    \IF{$\Equiv{\L}(\mathcal{H}_{Q,T}(\L))$ outputs some
      counter-example $w$}

    \STATE add $w$ and its prefixes to
    $Q$ \label{alg:learning_monoidal_transducers:line:add-counterexample}

    \ELSE

    \RETURN $\H(Q,T)$

    \ENDIF

    \ENDIF

    \STATE update $\Lambda$ and $R$ using $\Eval{\L}$

    \ENDLOOP
  \end{algorithmic}
\end{algorithm}

\end{document}